\documentclass[11pt]{article}
\def\comments{0}
\def\conferenceversion{0}

\usepackage{algorithm}
\usepackage[noend]{algpseudocode}
\usepackage{amsfonts}
\usepackage{amsmath}
\usepackage{amsthm}
\usepackage{color}
\usepackage[svgnames]{xcolor}

\newtheorem{theorem}{Theorem}[section]

\newtheorem{lemma}[theorem]{Lemma}
\newtheorem{corollary}[theorem]{Corollary}
\newtheorem{definition}[theorem]{Definition}

\newtheorem{remark}[theorem]{Remark}

\newtheorem*{claim*}{Claim}

\definecolor{myred}{rgb}{0.64, 0.0, 0.0}
\definecolor{myblue}{rgb}{0.25,0.50,0.9}
\usepackage[%
backref=page,
colorlinks=true,
urlcolor=myblue,
citecolor=myred,
linkcolor=myred
]%
{hyperref}
\usepackage[capitalize]{cleveref}

\usepackage{verbatim}
\usepackage[makeroom]{cancel}
\usepackage{rotating}
\usepackage{float}
\usepackage{fullpage}
\usepackage{amssymb}
\usepackage{stmaryrd}
\usepackage{mathrsfs}
\usepackage{epsfig}
\usepackage{indentfirst}
\usepackage{framed}
\usepackage[numbers]{natbib}
\usepackage{color}
\usepackage{graphicx}
\usepackage{float}
\usepackage{bbm}
\usepackage{mathtools}
\usepackage{enumitem}
\usepackage{thmtools}
\usepackage{thm-restate}

\usepackage{./shortcuts-full}
\usepackage{./shortcuts_alphabet}

\usepackage{setspace}
\usepackage[explicit]{titlesec}
\usepackage[left=1.2in, right=1.2in]{geometry}

\titleformat{\subsubsection}[runin]{\normalfont\bfseries}{\thesubsubsection.}{1em}{#1.}
\titleformat{\paragraph}[runin]{\normalfont\bfseries}{.}{1em}{#1}
\titleformat{\subparagraph}[runin]{\normalfont\itshape}{.}{0pt}{#1}
\titlespacing*{\subparagraph}{-1pt}{3.25ex plus 1ex minus .2ex}{1em}

\usepackage[textwidth=2.5cm]{todonotes}
\usepackage{bussproofs}
\usepackage{tocloft}
\usepackage{tikz}
\usetikzlibrary{calc}
\usetikzlibrary{decorations.pathreplacing,angles,quotes}
\usetikzlibrary{positioning,matrix,shapes,arrows.meta}
\usepackage{standalone}
\usepackage{tabularray}
\usepackage{longtable}
\usepackage{ragged2e}
\usepackage{makecell}

\newcommand{\myref}[1]{\autoref{#1}}

\title{ETH-Hardness of Learning Monotone Circuits and Approximating Their Size}

\author{%
\makebox[.23\linewidth]{Bruno Cavalar}\\
\textsl{University of Oxford}%
\and
\makebox[.23\linewidth]{Susanna F. de Rezende}\\
\textsl{Lund University}%
\and
\makebox[.23\linewidth]{Matthew Gray}\\
\textsl{University of Oxford}%
\and
\makebox[.23\linewidth]{Rahul Santhanam}\\
\textsl{University of Oxford}%
}

\newcommand{\myparagraph}[1]{\paragraph{#1}}

\begin{document}

\maketitle

\begin{abstract}
    We show the following hardness results for monotone learning and approximation of monotone circuit size:
    \begin{enumerate}
    \item Under the Randomised Exponential-Time Hypothesis (rETH), it requires time $n^{\Omega(\log n)}$ to PAC-learn monotone formulas with $n$ input bits and size $s(n) = n$ by monotone circuits of size $n^{(\log n)^{1-\epsilon}}$, for every $\epsilon > 0$.
    \item Under the Randomised Exponential-Time Hypothesis (rETH), for any $\delta > 0$, there is a polynomially bounded function $m$ such that $m^{1-\delta}$-multiplicatively approximating the minimum monotone circuit size of a monotone function consistent with a sequence of $m(n)$ labelled examples $\{(x_i, b_i)\}$ over $n$-bit inputs requires time $m^{\Omega(\log(m))}$.
    \end{enumerate}
    Our results are shown by a novel application of lifting arguments in proof and communication complexity to hardness of monotone learning, by building on the seminal result of Atserias and M\"uller~\cite{DBLP:journals/jacm/AtseriasM20} on hardness of automating Resolution proofs.
\end{abstract}

\section{Introduction}

\myparagraph{Meta-Complexity} One of the main goals of the area of {\it meta-complexity} is to 
prove strong hardness results for computing complexity measures such as circuit size, formula size 
and monotone circuit size. While such hardness results can be derived in a standard way from 
cryptographic assumptions such as the existence of one-way functions~\cite{RR97, KC00}, it has 
proven much more challenging to show hardness results under {\it worst-case assumptions}, such 
as $\mathsf{NP}$-hardness results. A good reason for this difficulty was identified by Kabanets and 
Cai~\cite{KC00}, who showed that $\mathsf{NP}$-hardness of the Minimum Circuit Size Problem 
(which asks whether a Boolean function F given by its truth table has circuits of size at most $s$, for 
a given parameter $s$) via polynomial-time honest reductions would imply hard-to-prove circuit 
lower bounds such as super-polynomial circuit lower bounds for $\mathsf{EXP}$.

Despite this obstacle, there are now several hardness results for various notions of circuit size under worst-case hardness assumptions, including DNF size, Boolean formula size (under ETH), multi-output circuit size, oracle circuit size, etc.,~\cite{Masek79, AHMPS08, HOS18, ILO20, I20a, I20b, I21, H22, HIR25}. A common feature of these results, except for the recent inapproximability result of~\cite{HIR25} for oracle circuit size, is that they do not show strong inapproximability results for circuit size. Here, by a strong inapproximability result, we mean the hardness of approximating the circuit size within a $N^{\Omega(1)}$ multiplicative factor, where $N$ is the size of the function representation.

A major motivation for showing strong inapproximability result for circuit size under worst-case assumptions is that they offer a pathway to worst-case to average-case reductions for $\mathsf{NP}$. It was shown in~\cite{S20, H23}, building on work of~\cite{CIKK16}, that there are non-black-box reductions from strong inapproximability of problems such as circuit size and average circuit size to average-case hardness for these problems. Thus, $\mathsf{NP}$-hardness of strong inapproximability for either of these problems would rule out Heuristica, the complexity world imagined by Impagliazzo~\cite{Impagliazzo95} where $\mathsf{NP}$ is hard in the worst case but easy on average. A recent result of Mazor and Pass~\cite{MP24} rules out strong inapproximability of circuit size under Levin reductions from SAT, assuming a standard cryptographic assumption, but strong inapproximability might still hold under a more general notion of reduction such as Turing reductions. The strong inapproximability result of~\cite{HIR25} for oracle circuit size is based on cryptographic techniques and seems difficult to apply to standard circuit size measures. It could potentially be very useful to find other techniques yielding strong inapproximability.

In terms of circuit size measures for which hardness under worst-case assumptions is known, 
monotone circuit size has been little-studied despite its central role in complexity theory. Showing 
$\mathsf{NP}$-hardness for the Minimum Monotone Circuit Size Problem (which asks whether a 
monotone Boolean function given by its truth table has monotone circuit size at most $s$, for a 
given parameter $s$) would be a significant step towards $\mathsf{NP}$-hardness for the Minimum 
Circuit Size Problem, as there is a well-known reduction from Minimum Circuit Size to Minimum 
Monotone Circuit Size.

\myparagraph{Learning} One of the main goals of the area of {\it learning theory} is to identify which 
learning problems are hard in the PAC-learning model. When PAC-learning a concept class 
$\mathcal{C}$, we are given samples $(x,f(x))$ from a distribution $\mathcal{D}$, where $f$ 
belongs to $\mathcal{C}$, and our task is to output with high probability a hypothesis $c$ that 
approximates $f$ well according to the distribution. If the output hypothesis $c$ is in $\mathcal{C}$, 
the learning algorithm is said to be {\it proper}, otherwise it is said to be {\it improper}. We also 
consider a {\it semi-proper} version of learning $\mathcal{C}$-circuits by $\mathcal{C'}$-circuits, 
where $\mathcal{C'} \supseteq \mathcal{C}$. 

It is known how to rule out even {\it improper learning} of various concept classes under cryptographic assumptions~\cite{KearnsValiant94, Kharitonov93, DLMSWW09}. More recently, there has been exciting work on ruling out improper learning even of weak concept classes such as DNFs~\cite{Daniely16, DanielyS16, DanielyV21} based on average-case assumptions such as Feige's hypothesis~\cite{Feige02} and its variants. However, just as in the case of meta-complexity problems, showing hardness under worst-case assumptions remains challenging. This was explained in a work of Applebaum, Barak and Xiao~\cite{ABX08}, who showed that $\mathsf{NP}$-hardness of learning general Boolean circuits under Karp reductions or non-adaptive reductions would imply the collapse of the Polynomial Hierarchy. This leaves open the possibility of using more general reductions such as Turing reductions to show hardness.

Known worst-case hardness results are typically for proper learning of weak circuit classes such as DNFs, or for semi-proper learning where the output hypothesis class $\mathcal{D}$ is not too much more powerful than the concept class~\cite{ABFKP08}. While hardness of learning even restricted subclasses of monotone circuits under the uniform distribution is known under cryptographic assumptions~\cite{DLMSWW09}, and unconditional hardness results are known in the statistical query learning framework~\cite{FLS11}, very little seems to be known about the worst-case hardness of learning monotone circuits.

As can be seen from the discussion above, there are several parallels
between the situation with hardness of meta-complexity problems and the
situation with hardness of learning. These parallels are not coincidental:
learning for a circuit class $\mathcal{C}$ can be seen as a {\it search
version} of the problem of determining minimum $\mathcal{C}$-size of a
Boolean function. Indeed, given access to a Boolean function $f$,
outputting a circuit of size $s$ that computes $f$ (perhaps approximately)
via a learning algorithm witnesses that $f$ has (perhaps approximate)
circuit size at most $s$.
This connection is exploited to derive new learning algorithms
in~\cite{CIKK16}.

In this paper, we aim to establish strong results on the hardness of monotone learning based on worst-case assumption, and then to leverage the connection between monotone learning and meta-complexity to get strong inapproximability of monotone circuit size. 

The meta-complexity problem we consider is the following: given $m = \mathsf{poly}(n)$ labelled examples $(x_i, b_i)$, where each $x_i$ is an $n$-bit string and $b_i$ a bit, and a parameter $s$, is there a monotone function $f$ of monotone circuit size at most $s$ that is consistent with the labelled examples, i.e., $b_i = f(x_i)$ for each $i$. In the context of learning, this is known as the Occam Learning problem, and the non-monotone analogue of this problem was called Partial Minimum Circuit Size Problem in~\cite{ILO20}, where it was shown to be $\mathsf{NP}$-hard. We are interested here in strong inapproximability results, which the techniques of~\cite{ILO20} are incapable of proving, since they reduce from the well-approximable Set Cover Problem.

\subsection{Our Results}

Our first main result is on the hardness of PAC-learning monotone formulas. Proving hardness of improperly PAC-learning monotone formulas is at least as hard as showing $\mathsf{NP}$-hardness of learning general Boolean circuits, and hence beyond reach at this point in time. However, we are able to prove a strong result in the {\it semi-proper} setting under worst-case assumptions, where our concept class is the class of monotone formulas, but our hypothesis class is the class of polynomially larger monotone circuits. In this setting, we are able to get a quasipolynomial lower bound on the running time of any PAC-learning algorithm, assuming the randomised Exponential Time Hypothesis. Recall that the randomised Exponential Time Hypothesis (rETH) postulates that any algorithm solving 3SAT on formulas of $n$ variables must take time $2^{\Omega(n)}$. This is a standard hardness assumption in fine-grained complexity and parameterized complexity.

\begin{theorem}[Informal]
\label{thm:introthm1}
Under rETH, for every $\alpha > 0$, monotone formulas of size $n$ on $n$
variables require time $n^{\Omega(\log n)}$ to be PAC-learned by monotone circuits of size
$n^{(\log n)^{1-\alpha}}$.
\end{theorem}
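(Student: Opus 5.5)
The idea is to reduce from the gap problem behind Atserias--M\"uller: given a CNF $F$, distinguish satisfiable formulas from those that are hard to refute in Resolution. The refutation side will become a monotone lower bound through lifting. Concretely, I take the Atserias--M\"uller transformation $F \mapsto \mathrm{Ref}(F)$ and compose it with a suitable gadget (e.g.\ an indexing or inner-product gadget). From the resulting lifted search problem I extract a monotone function via the Karchmer--Wigderson correspondence between monotone circuits and monotone KW games, equivalently via the Razborov / Garg--G\"o\"os--Kamath--Sokolov lifting from Resolution width/depth to monotone circuit size. This gives a map $F \mapsto f_F$, computable in time polynomial in $|F|$ and $n$, together with a dichotomy:
\begin{itemize}
\item If $F$ is satisfiable, $f_F$ has a small monotone formula, of size about $n$ after padding. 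Short refutations of $\mathrm{Ref}(F)$ exist, and the lifting theorem's upper-bound direction turns a small-width/small-depth Resolution refutation into a small monotone formula, through a tree-like protocol.
\item If $F$ is far from satisfiable, every monotone circuit approximating $f_F$ on a suitable distribution $\mathcal{D}$ has size $n^{(\log n)^{1-\alpha}}$ or more. Atserias--M\"uller lower-bounds Resolution size/width for $\mathrm{Ref}(F)$, and lifting of width to monotone circuit size then gives $2^{\Omega(w)}$-type bounds.
\end{itemize}
I would tune the parameters so that the CNF has $k \approx \log^2 n$ variables. Then a learner running in time $n^{o(\log n)} = 2^{o(k)}$ would decide $3$SAT on $k$ variables in subexponential time. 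This contradicts rETH, with randomness coming from the sampling.

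The algorithm for 3SAT runs as follows. Construct $f_F$ and the distribution $\mathcal{D}$, which must be efficiently samplable together with labels $f_F(x)$. Run the PAC learner with small error $\varepsilon$. Then test the hypothesis on fresh samples and check its size. If $F$ is satisfiable, the concept really is a size-$n$ monotone formula, so the learner must output a size-$n^{(\log n)^{1-\alpha}}$ circuit with low error. If $F$ is unsatisfiable, no such circuit achieves low error, so either the learner fails or the test rejects. The empirical test distinguishes the two cases with high probability.

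I expect the main obstacle to be the soundness direction, which has two parts. The lifting must give an \emph{average-case} lower bound, since PAC learning only guarantees approximation under $\mathcal{D}$, whereas the standard lifting gives worst-case monotone lower bounds. My first attempt would be to choose $\mathcal{D}$ as a mixture of minterm-like and maxterm-like inputs coming from the KW game, namely satisfying assignments and falsifying certificates. A circuit that is correct on most of them then yields a protocol solving the lifted search problem on a large fraction of inputs, and I would argue that lifting theorems, via a randomized or distributional version of the Resolution-width argument, still apply. The second part is the parameter balancing: keep the YES-case formula size linear while the NO-case bound is $n^{(\log n)^{1-\alpha}}$. The gap $(\log n)^{\alpha}$ in the exponent comes from choosing the lifted width slightly above $\log n$ times the gadget size. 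I would handle it by padding $n$ polynomially, absorbing the Atserias--M\"uller blow-up into the choice $k = (\log n)^{2-\Theta(\alpha)}$.
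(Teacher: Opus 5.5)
There is a genuine gap, and it is in the satisfiable (YES) case.

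\textbf{Why the YES case fails with $\mathrm{Ref}(F)$.} The Atserias--M\"uller formula $\mathrm{Ref}(F)$ only has \emph{short} refutations when $F$ is satisfiable. Pudl\'ak's refutation has polynomial size and $O(1)$ block-width, but its ordinary Resolution width is $\Omega(k)$. Its depth is polynomial in $k$, because the refuter walks through $\Omega(s)$ lines of the purported proof and reads all $k$ literals of each.

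Lifting does not turn size into anything. Its upper-bound direction gives monotone circuits of size $m^{O(\text{width})}\cdot\text{size}$ and monotone formulas of size $m^{O(\text{depth})}$. So in the YES case you only get a monotone formula of size $m^{\mathrm{poly}(k)}$, or a circuit of size $m^{\Theta(k)}$, on $\mathrm{poly}(k)$ input bits.

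A cleverer polynomial-size encoding cannot fix this. Suppose a $\mathrm{poly}(k)$-size CNF had width-$w$ refutations exactly when $F$ is satisfiable. Width-bounded proof search would then decide 3SAT in time $k^{O(w)}$, so under ETH $w=\tilde\Omega(k)$. Lifting would then only give YES-case bounds of size $2^{\tilde\Omega(k)}$.

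Padding does not rescue the parameters either. If you pad to $n=2^{\sqrt k}$ bits (your choice $k\approx\log^2 n$), the YES-case size stays at least $2^{\tilde\Omega(k)}=n^{\tilde\Omega(\log n)}$. The target is then not a size-$n$ formula, and the gap with the NO case disappears. If instead you pad until the formula has size $n$, the reduction itself takes $2^{\tilde\Omega(k)}$ time, and the rETH bound on the learner becomes trivial. This is exactly the ``reduction from NP to NEXP'' obstacle the paper describes.

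\textbf{The missing idea.} The paper builds a new formula $\mathrm{Ref}^G_d(F)$ that deliberately trades size for depth.
\begin{itemize}
\item The purported refutation has $d$ layers. Each step resolves on a whole segment of $k/d$ variables, so each block has $2^{k/d}$ children indexed by $z\in\{0,1\}^{k/d}$.
\item Summary variables record which literals of each segment are \emph{absent}.
\item Binary pointers into a constant-degree expander between layers let a Prover who knows a satisfying $x^*$ follow the child falsified by $x^*$ with $O(1)$ queries per layer. The Prover reaches a leaf and finds a violated clause after a short backchecking phase.
\end{itemize}
This gives Resolution depth $O(d+\log k)$ for a 3-CNF with $2^{\Theta(k/d)}\cdot\mathrm{poly}(k)$ variables. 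In the unsatisfiable case the width stays large, via a depth-$O(k)$ decision-tree reduction from the retraction pigeonhole principle on the expander.

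Take $d\approx\sqrt k$ and use the small-gadget lifting of de Rezende--Vinyals. Small gadgets are needed so that $m=\mathrm{poly}(k)$ rather than polynomial in the $2^{\Theta(\sqrt k)}$ variables; otherwise $m^{O(d)}$ explodes. The lifted function then has $2^{\tilde\Theta(\sqrt k)}$ inputs and a monotone formula of about that size when $F$ is satisfiable. When $F$ is unsatisfiable it has a superpolynomial average-case circuit lower bound. Everything is produced in time $2^{\tilde O(\sqrt k)}$.

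\textbf{What matches the paper.} Your treatment of the average case (uniform accepting versus rejecting inputs, with a density-$\gamma$ root rectangle) and the final learn-then-test step are in line with the paper. However, $\alpha$ is only slack: it keeps the hypothesis size, and the cost of evaluating hypotheses, below the $n^{\Omega(\log n)}$ lower bound for the gap problem. It is not obtained by tuning $k$.
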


The formal statement of Theorem~\ref{thm:introthm1}
is in the first item of Theorem~\ref{thm:pac-lb} in Section~\ref{sec:improperpac}. To the
best of our knowledge, it was not known previously even how to get hardness of learning
monotone circuits of size $n$ by monotone circuits of size $n^{1+\alpha}$ under standard
worst-case assumptions. In fact, our techniques yield a tradeoff, where we can show
hardness even for learning monotone formulas of even slightly super-polylogarithmic
monotone formula size by monotone circuits of polynomial size. This result is stated in
the second item of Theorem~\ref{thm:pac-lb}.

Our second main result shows the strong inapproximability of monotone circuit size for the partial version of the Minimum Monotone Circuit Size Problem, where we are given a sequence of labelled examples as input. The assumption used here is again rETH. 

\begin{theorem}
\label{thm:introthm2}
For each $c > 1$ the following holds under rETH: there is no algorithm running in time $N^{o(\log N)}$ which, given as input a sequence of $O(n^c \log n)$ labelled examples $(x_i, b_i)$ representable by $N$ bits in all, where each $x_i$ is an $n$-bit string and $b_i$ a bit, can distinguish between the case that these labelled examples are consistent with a monotone formula of size $n$ (i.e., there is $f$ of monotone formula size $n$ for which $f(x_i) = b_i$ for each $i$) and the case that these labelled examples are inconsistent with a monotone circuit of size $n^c$. In particular, for arbitrary $\epsilon > 0$ there is no $N^{o(\log N)}$ time algorithm which $N^{1-\epsilon}$-multiplicatively approximates the partial monotone formula size or the partial monotone circuit size.
\end{theorem}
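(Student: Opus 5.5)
Theorem~\ref{thm:introthm2} will follow from the PAC-learning lower bound (Theorem~\ref{thm:introthm1}, in its formal version Theorem~\ref{thm:pac-lb}) via the standard Occam-style connection between consistency and learning. The underlying reduction is the Atserias--M\"uller route. From a 3-CNF $F$ we build, through the lifting/communication-complexity machinery, a monotone function (or a family of them) with two properties. In the YES case ($F$ satisfiable) the function has a monotone formula of size $n$. In the NO case, every monotone circuit of size $n^c$ fails on a noticeable fraction of inputs under some efficiently samplable distribution $\mathcal{D}$. We also need that labelled samples $(x,f(x))$ with $x\sim\mathcal{D}$ can be produced in polynomial time from $F$ alone, even without a satisfying assignment. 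Since the PAC lower bound is a reduction of this form, we reuse it.

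\emph{Direction 1 (hypothetical algorithm $\Rightarrow$ learner).} Let $A$ be an $N^{o(\log N)}$-time distinguisher. Draw $m=O(n^c\log n)$ labelled samples from the reduction's distribution, representable in $N=\mathrm{poly}(n)$ bits. In the YES case the samples are always consistent with a size-$n$ monotone formula, so $A$ must accept.

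\emph{Direction 2 (NO case).} In the NO case we need the samples to be, with high probability, inconsistent with every monotone circuit of size $n^c$. This is the Occam/union-bound step, and it is where the sample count comes from. There are at most $2^{O(n^c\log n)}$ monotone circuits of size $n^c$. Each one errs with probability at least $\varepsilon$ on a fresh sample, so it is consistent with all $m$ samples with probability at most $(1-\varepsilon)^m$. Taking $m=O(\varepsilon^{-1}n^c\log n)$ makes the union bound small. So $A$ rejects with high probability.

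Putting the two directions together, $A$ decides satisfiability of $F$ with bounded error. Its running time is $N^{o(\log N)}=n^{o(\log n)}$, which contradicts the rETH lower bound that the reduction transfers, namely $n^{\Omega(\log n)}$, coming from $2^{\Omega(k)}$ for $k$-variable 3SAT with $n=2^{\Theta(\sqrt{k})}$ or a similar scaling.

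\emph{Main obstacle.} The hard part is the NO-case guarantee in the precise form needed: an \emph{average-case} failure, with constant $\varepsilon$ or at least $1/\mathrm{poly}$, of all size-$n^c$ monotone circuits under a distribution we can sample without knowing whether $F$ is satisfiable. Worst-case monotone lower bounds from lifting, such as the Resolution-width-to-monotone-circuit lifting used for Atserias--M\"uller, give only a bad input, not a bad distribution. I would first try to extract the distribution from the PAC theorem's proof directly. If that proof instead derives distributional hardness through an amplification or hardcore step, I would restate that lemma and plug it in. The parameter bookkeeping must also keep $m$ polynomial while matching $c$ (size gap $n$ versus $n^c$).

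\emph{Approximation corollary.} Fix $\epsilon>0$ and choose $c$ large. The minimum size is at most $n$ in the YES case and more than $n^c$ in the NO case, a gap of $n^{c-1}$. Since $N=\tilde O(n^{c+1})$, this gap exceeds $N^{1-\epsilon}$ once $c\ge 2/\epsilon$. So an $N^{1-\epsilon}$-approximation would distinguish the two cases. This applies to both partial monotone formula size and partial monotone circuit size, because formula size is at least circuit size.
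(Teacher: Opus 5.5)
Your argument is essentially the paper's. The samplable distribution you reuse is exactly the Monotone Circuit-Formula Gap Learning instance of Theorem~\ref{thm:sat-reduces-to-mcgl-quasipoly-reth}. Its NO case already supplies the average-case guarantee you flag as the main obstacle: that comes directly from the correlation version of the lifting theorem (Theorem~\ref{thm:constructive-lifting}), with no amplification or hardcore step. Sampling $O(n^c\log n)$ examples and union-bounding over size-$n^c$ circuits is Theorem~\ref{thm:monotone-minlt-eth-hard}, as applied in item~\ref{it:linear-poly} of Corollary~\ref{cor:minlt-runtime}.

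The remaining differences are minor. Your one-sided $(1-\varepsilon)^m$ bound suffices for plain inconsistency, where the paper uses Chernoff--Hoeffding to get a constant error fraction. Your derivation of the $N^{1-\epsilon}$ gap via $c\ge 2/\epsilon$ fills in a step the paper leaves implicit. Finally, the ``Occam from Theorem~\ref{thm:pac-lb}'' framing is only a label: the PAC statement as a black box would not give hardness of this decision problem, but you in fact reuse the underlying reduction, so nothing breaks.
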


The formal statement of Theorem~\ref{thm:introthm2} is in the fourth item of Corollary~\ref{cor:minlt-runtime}. 

Both Theorem~\ref{thm:introthm1} and Theorem~\ref{thm:introthm2} are consequences of a
more general hardness result for a monotone version of the Computational Gap Learning
problem studied in~\cite{ABX08}. In our problem, which we call Monotone Circuit-Formula
Gap Learning, we are given a circuit which samples a distribution of labelled examples
$(x_i,b_i)$ where $x_i$ is of length $n$ and $b_i$ a bit, and are asked to distinguish
between 2 cases: the first where the distribution is completely consistent with a formula
of size at most $s_1$, and the second where the distribution does not even non-trivially
correlate with any circuit of size at most $s_2$. Here $s_1, s_2: \mathbb{N} \rightarrow
\mathbb{N}$ are parameters of the problem, which we are able to choose so that $s_2 \gg
s_1$. Theorem~\ref{thm:introthm1} follows immediately from our result about Monotone
Circuit-Formula Gap Learning, while Theorem~\ref{thm:introthm2} follows by a standard
sampling argument together with a union bound.

To show our central result about Monotone Circuit-Formula Gap Learning, we use a novel
technique in the domain of hardness of learning: {\it lifting}, together with known strong
inapproximability results for Resolution proof
size~\cite{DBLP:journals/jacm/AtseriasM20,DBLP:conf/stoc/RezendeGNPR021,DBLP:conf/stoc/GoosKMP20,DBLP:conf/lagos/Rezende21,
DBLP:conf/coco/CareniniR25}. One of the main applications
of lifting is to obtain monotone circuit size lower bounds in a generic way from
Resolution proof width lower bounds~\cite{Garg2020, GKRS19Adventures}. The generic nature of
these reductions opens the possibility of using them to derive {\it reductions} from
approximating proof size for various proof systems to approximating circuit size for
various circuit models. This is yet another instance of the connection between learning
and automatability of proof systems proposed in~\cite{ABFKP08}.

It may be useful to compare our results with the work
of Hirahara~\cite{H22}. Hirahara shows that a variant of the Computational Gap Learning problem
is $\NP$-hard (under randomised reductions),
where the task is to distinguish between linear programs (Turing machines
computing a linear function) of size $s_1$
and programs of size $s_2 = s_1 \cdot n^{o(1)}$.
Besides the obvious difference that we consider monotone circuits vs.\
formulas
whereas~\cite{H22} considers programs,
crucially our results allow for a superpolynomial gap between circuit size,
whereas~\cite{H22} obtain only a sublinear gap between program size (though
    under the weaker assumption of $\NP$-hardness).
Hirahara also shows $\NP$-hardness of approximating partial circuit size,
when one is given the entire truth table of a partial function (referred to
as $\MCSP^*$),
again only with a $n^{o(1)}$ approximation gap.
In \myref{thm:introthm2}, however, we consider a \emph{succinct}
version of this problem, in which we are only given inputs where the
function is defined; we also obtain a stronger hardness of approximation gap here.

We next give a technical overview of the ideas behind our proofs.

\subsection{Technical Overview}

\myparagraph{Overall approach}

The main insight of the paper is to convert the hardness of
\emph{automating}
Resolution~\cite{DBLP:journals/jacm/AtseriasM20}
into the hardness of 
approximating monotone circuit size,
using 
results from the literature (called \emph{lifting theorems})
that connect the complexity of Resolution refutations with the monotone
complexity of Boolean functions~\cite{Garg2020}.

To understand how this works, 
let us review how the seminal paper of Atserias and Müller~\cite{DBLP:journals/jacm/AtseriasM20}
proved that automating Resolution is $\NP$-hard.
Recall that an algorithm that automates Resolution must, given as
input an unsatisfiable CNF formula $\phi$, construct a Resolution refutation of
$\phi$ in time $\poly(\abs{\phi} + s)$, where $s$ is the size of the shortest
Resolution refutation of $\phi$.
Given a 3-CNF formula $F$ on $n$ variables as input, the proof
of~\cite{DBLP:journals/jacm/AtseriasM20} consists in
constructing in polynomial-time a formula $\REF(F)$
with the following properties:
\begin{enumerate}
    \item If $F$ is satisfiable, then $\REF(F)$ admits a polynomial-size
        Resolution refutation;
    \item If $F$ is unsatisfiable, then any Resolution refutation of
        $\REF(F)$ must have size $2^{n^{\Omega(1)}}$.
\end{enumerate}
In turn, this implies that if we can \emph{automate} Resolution,
then we can determine in which of the two cases we are in, and thus 
decide the satisfiability of $F$.

Another significant ingredient in our approach 
is the \emph{DAG-like lifting theorem},
which converts Resolution \emph{width} lower bounds 
for a formula $\phi$
into 
monotone circuit size lower bounds for a related monotone function $f_{\phi}$~\cite{Garg2020}.
Importantly, the monotone complexity of the 
resulting function $f_{\phi}$ is tightly connected to the Resolution width of $\phi$,
so that $f_{\phi}$ has large monotone complexity iff $\phi$ requires large Resolution
width.

Ideally, we would now like to combine the properties of $\REF(F)$ with the lifting theorem
to reduce the satisfiability of $F$ to deciding the monotone complexity of
$f := f_{\REF(F)}$
when
given access to some succinct representation of $f$
(e.g., access to random samples of $f$, or a succinct representation of its truth-table).
Unfortunately, the connection between monotone complexity and Resolution goes only via
Resolution \emph{width},
whereas 
the complexity gap obtained by Atserias-Müller is only for Resolution \emph{size}.
Though Resolution size does indeed give \emph{upper bounds} on the monotone complexity of
a related function via \emph{feasible
interpolation}~\cite{krajivcek1997interpolation,kraivcek1998interpolation,razborov1995unprovability},
the converse is not known to hold.

A potential way out would be to consider the notion of \emph{block-width} of
proofs~\cite{DBLP:journals/jacm/AtseriasM20}, 
a generalisation of Resolution-width to the setting when the variables are split
into blocks, as is the case with $\REF(F)$.
The proof of Atserias-Müller can be rephrased as first showing that the block-width of
$\REF(F)$ is $O(1)$ when $F$ is satisfiable,
and $n^{\Omega(1)}$ when $F$ is unsatisfiable~\cite{DBLP:conf/stoc/GoosKMP20}.
However, though the notion of block-width can be lifted to Resolution
size,
it's not known if it can be lifted to monotone circuit size,
and indeed significant technical challenges have been found
when trying to do so in at least one previous work~\cite{DBLP:conf/stoc/GoosKMP20}.

A more promising approach comes from considering a different encoding
of $\REF(F)$ found in~\cite{DBLP:conf/stoc/RezendeGNPR021}.
Their encoding admits a Resolution-width gap of $\REF(F)$ 
of the following form\footnote{The encoding of~\cite{DBLP:journals/jacm/AtseriasM20} cannot achieve this gap because it uses unary pointers.}:
\begin{enumerate}
    \item If $F$ is satisfiable, $\REF(F)$ has Resolution-width $O(n)$;
    \item If $F$ is unsatisfiable, $\REF(F)$ has Resolution-width $\Omega(n^3)$.
\end{enumerate}
This seems to be what we are aiming at: an efficiently constructible formula
with a noticeable gap in Resolution-width.
However,
lifting gives us that the monotone complexity of $f_{\phi}$
is roughly
$2^{\Theta(\wRes(\phi))} \cdot \Res(\phi)$.
So we obtain a function on $n^{O(1)}$ input bits
with $2^{O(n)}$ monotone complexity when $F$ is satisfiable,
and $2^{\Omega(n^3)}$ when $F$ is unsatisfiable.
Unfortunately,
it is not clear
how to distinguish between
these two cases other than by guessing the \emph{exponential size} circuit
that computes the function.
In other words, we have obtained a reduction from an $\NP$ problem
to an $\NEXP$ problem, which is hardly surprising.
Moreover, the reduction is ultimately useless, as we could just solve $\SAT$ directly
in exponential time.

We overcome this hurdle
by constructing a different formula $\REFx(F)$
with the following properties:\footnote{For simplicity of exposition, at this introduction
we only explain the construction with the parameters that give a polynomial gap between
linear-size formulas and monotone circuits. The construction for ``juntas'' (i.e.,
functions that do not depend on all inputs) is the same, but
with a different parametrisation.}
\begin{enumerate}
    \item $\REFx(F)$ has $2^{O(\sqrt{n})}$ variables and $2^{O(\sqrt{n})}$ clauses.
    \bnote{Maybe we focus on $\sqrt{n}$ here to make it easier for the reader?}
    \item $\REFx(F)$ admits Resolution proofs of depth at most $O(\sqrt{n})$ when $F$ is satisfiable;
    \item $\REFx(F)$ requires Resolution proofs of width at least $2^{\Omega(\sqrt{n})}$
    when $F$ is unsatisfiable.
\end{enumerate}
Note that we obtain not just a Resolution width upper bound in the satisfiable case,
but the strictly more powerful upper bound on Resolution \emph{depth}, which is strongly tied to the size of monotone
\emph{formulas}~\cite{raz1999separation, goos2015deterministic}.
This is how we obtain a gap between monotone formula and circuit sizes in the final
result.
Note moreover that the resulting formula has \emph{subexponential} size.
This is a careful tradeoff that we exploit:
by increasing the size of the formula,
we are able to reduce its Resolution-width (and even its Resolution-depth).
We now describe the construction of the formula in more detail.

\myparagraph{The $\REF^*(F)$ Formula}

The formula $\REF(F)$ essentially encodes the statement ``$F$ has a Resolution proof of
small size~$s$'', where $s$ is a fixed polynomial in the number of variables of $F$. 
When $F$ is satisfiable, the formula $\REF(F)$ is \emph{unsatisfiable}, 
and this formula can be refuted with low \emph{block-width}~\cite{DBLP:journals/jacm/AtseriasM20,Pudlak03}.

Their strategy can be seen as
low memory technique to achieve the following.
Given any proposed proof for $F$,
a \emph{refuter}\footnote{The ``Refuter'' is typically referred to as ``Prover'' in a Prover-Adversary game. Since $F$ is satisfiable, the goal of the Prover is to find a contradiction to the statement that $F$ admits a Resolution refutation, and so its task can be referred to as ``refuting'' a proposed proof of unsatisfiability of $F$.} navigates through it
while maintaining
the guarantee that the current clause is unsatisfied by the known satisfying assignment
$x^*$. Beginning at~$\bot$, eventually such a refuter will make it to a line of the proof that is a weakening
of an axiom. Since the line must be unsatisfied by $x^*$, it cannot be a weakening of any
axiom and a contradiction has been found. While this technique can be used to show an
$O(1)$ block-width upper bound, the depth of the proof strategy is $\Omega(s \cdot n)$, since it
will involve learning every variable of $\Omega(s)$ lines of the proposed proof. Adapting
this technique to achieve a $\sqrt n$ depth upper bound seems like a nearly hopeless
endeavour on first blush. Achieving depth $\sqrt{n}$ requires us to refute $\REF(F)$
while querying in total fewer variables than the number of variables of $F$. This
presents two barriers:
\begin{itemize}
    \item The refuter needs to get to a line of the proof that is claimed to be a weakening of an axiom while only seeing variables from $\sqrt n$ total blocks, and
    \item The refuter needs to learn some information about all $n$ variables while being able to query far fewer than $n$ variables.
\end{itemize}

To overcome the first of these issues, we can force the proof to have depth $\sqrt n$, as done in~\cite{DBLP:conf/lagos/Rezende21}.
However, if we are only resolving over one literal at a time, the clauses at depth-$\sqrt{n}$, which are weakening of axioms of $F$, would have width $\sqrt{n}$. This would break the
lower bound of Atserias and Müller because their strategy requires
that the clauses in the purported proof 
that are claimed to be a weakening of
an axiom must have width $n$.

To retrieve the lower bound, we can change the underlying proof system to one
which resolves on $\sqrt{n}$ variables at a time. This results in each block, instead of
having just a left and a right child---which are the two clauses from which it was derived---%
now having $2^{\sqrt{n}}$ children,
and thus having one pointer $\point_z$
for each possible setting $z \in \bit^{\sqrt n}$ of the variables being
resolved over (we also importantly fix the order in which variables are resolved over,
with the first segment of $\sqrt n$ variables, then the second segment, etc).
Unfortunately, this immediately means that the resulting $\REF(F)$ formula will have to be of size at
least $2^{\sqrt n}$ just to manage these pointer variables. On the positive side, this
gives us a hint for how to overcome the second barrier.

In standard resolution refutations, if you know that a clause $B'$ is the left child of $B$ which
is the result of resolving over $x_i$, you know that $x_i$ is in $B'$, and more importantly that $\lnot x_i$ is \emph{not} in $B'$; and if $B'$ is the
right child you know that $\lnot x_i$ is in $B'$, and more importantly that $x_i$ is \emph{not} in $B'$. Consequently,  knowing that $x^*$ is a satisfying
assignment, if the prover always queries whichever child is not satisfied by $x^*$, they will
eventually find themselves in a leaf which cannot be a valid weakening. In our new multi-variable resolution system, if you know that $B'$ is the $z$\textsuperscript{th} child of
$B$, you know $\sqrt{n}$ bits of information about the clause that $B'$ claims to have. 
Using
$x^*$ as above, one could in principle get to a leaf which cannot be a valid weakening in
depth $O(\sqrt n)$.

We can extend this density of information from the pointers to a new
class of summary variables $\mathsf{summ}^B_{i,z}$ which give us information
about all the variables from $x_{(i-1)\sqrt{n}}$ to $x_{i\sqrt{n}-1}$. 
This idea of using summary variables
appeared before in~\cite{DBLP:conf/coco/CareniniR25}.
It is important that
our summary variables encode which literals are
\emph{not} present (which is ultimately what we are concerned with for the upper bound). Formally,
$\summlit_{i,z}^{B}$ is set equal to $1$ to indicate that for each $z_j = 1$, literal
$x_{(i-1)\formuladepth + j}$ is not present in block $B$, and that
for each $z_j = 0$, literal $\lnot x_{(i-1)\formuladepth + j}$ is not present in block $B$.
We can ensure that
a Prover can learn what all $\sqrt{n}$ of
the $\summlit_{i,z}^{B}$ variables for a leaf node should be while only querying the right
set of $O(\sqrt{n})$ variables on their way from the root to a leaf. %

To make our bounds as tight as possible, we add some additional complexity, including
binary pointers for axioms, split variables to reduce clause width, and a low degree
expander between each layer of the proof. This low-degree expander,
which was previously used in~\cite{DBLP:conf/lagos/Rezende21,DBLP:conf/coco/CareniniR25},
allows us to keep down
to a constant the
number of $\point_{B',z}^B$ variables a prover needs to query for each $B,z$ before
finding one that is set to one. While the expander does somewhat complicate the
proof of the lower bound, we are able to derive it by reducing the retraction
pigeonhole principle (rPHP) instance to finding a violated clause, similarly to what was done in~\cite{DBLP:conf/lagos/Rezende21}.

\myparagraph{Hardness of approximating circuit size and learning}

To see how to join our new $\REF^*$ formula with a lifting theorem
to obtain the claimed hardness results, it will be helpful to understand the function that
comes from lifting.
Roughly speaking, ``lifting'' usually consists in composing some original
computational object (e.g., a propositional formula or a function)
with a gadget, so as to hide the original object in a stronger computational model
in such a way that the only way to compute/solve the composed problem is by solving the
original problem.

In our case, given a formula $F$ with $n$ variables 
and width $k$,
together with a parameter~$m$,
the lifted function $f := f_{F,m}$ from~\cite{Garg2020}
will have $\card{F} \cdot m^k$ input bits,
where $\card{F}$ is the number of clauses of $F$.
To each $i \in [\card{F} \cdot m^k]$,
we associate a pair $(C, \alpha)$,
where $C$ is a clause of $F$
and $\alpha \in [m]^k$.
We interpret
$(C, \alpha)$
as the $k$-width clause over the variables
$\set{y_{ij} : i \in [n], j \in [m]}$
obtained by
replacing the variable $x_i \in \vars(C)$
in $C$ with $y_{i, \alpha(i)}$.
We can thus see each input to $f$
as a collection of clauses $(C,\alpha)$.
The task of the partial function $f_{F,m}$ is to distinguish between the following two cases:
\begin{enumerate}
    \item Accept (Output 1): there exists $x : [n] \to [m]$ such that the input bit corresponding to
        $(C,x^{\vars(C)})$ is set to 1 for every clause $C$ of $F$, and everything else is
        set to 0;
    \item Reject (Output 0): there exists $y \in \blt^{mn}$ such that all input bits
        associated with clauses $(C,\alpha)$ satisfied by
        $y$ are set to $1$, and everything else is set to 0.
\end{enumerate}
The partial function is clearly well-defined, as in the first case the collection of
clauses is unsatisfiable, and in the second case they are satisfiable.
For this same reason, no input to the function which contains all the clauses of Case (1)
will ever be in Case (2), so the function is monotone.
One can think of the function $f$ as ``hiding'' several variable renamings
of the the formula $F$ 
in its input bits; to compute $f$ is to ``sniff out'' if $F$ has been completely embedded
among the renamed clauses.

Most importantly for our purposes, 
we can easily sample from the distribution which half of the time
gives a uniform accepting input of $f$,
and in the other half gives a uniform rejecting input.
Indeed, first sample a bit $b \in \blt$,
then, if $b = 1$, sample $n \log m$ bits and interpret it as $x : [n] \to [m]$,
otherwise sample a string $y \in \blt^{mn}$. With $x$ (resp. $y$), it's then easy
to set the relevant bits so as to output an accepting (resp. rejecting) input of
$f_{F,m}$.
Let us abuse notation and denote by $\cald^{F,m}$ both a circuit that samples
from this distribution of binary strings (together with the bit $b$ as a prefix) 
and the distribution itself.
We thus see samples from $\cald^{F,m}$ as a pair $(x,b)$ where $x \in
\blt^{\card{F} m^k}$ and $b \in \blt$.
Given $F$ and $m$ as input, it's also easy to see that we
can construct the circuit sampling from $\cald^{F,m}$ in time $\poly(\card{F}, m^k)$. 

Going back to the approach above, given an input $F$ to 3SAT,
our reduction will first construct
$\cald := \cald^{\REF^*(F), m}$,
so as to obtain a distribution with the following properties.
If $F$ is satisfiable, then
\begin{equation}
    \label{eq:sat-case}
    \text{
        there is a monotone formula $L$ of size
        $\leq s_1$
        s.t.
        $L(x)=b$ for every $(x,b) \in \supp(\cald)$;
    }
\end{equation}
if $F$ is unsatisfiable,
then
\begin{equation}
    \label{eq:unsat-case}
    \text{
        every monotone circuit $K$ of size $\leq s_2$ 
        satisfies
        $\Pr_{(x,b) \flws \cald}[K(x) = b] < 1/2+\gamma$.
    }
\end{equation}
The value of $s_1$ is 
$2^{O(\sqrt{n} \log m)}$,
and the value of
$s_2$ is 
$2^{\Omega(m / \sqrt{n})}$;
these come from a
refined version of the DAG-like lifting theorem~\cite{DBLP:conf/coco/RezendeV25}
with ``small gadgets''
that allows us 
to essentially pick any value of $m$ we would like.
However, lifting theorems typically are only stated for $\gamma = 1/2$
(corresponding to a worst-case lower bound).
Fortunately, generalising to $\gamma \approx m^{-1}$ is an easy task,
requiring only the modification of certain numbers and parameters in the proof
of~\cite{DBLP:conf/coco/RezendeV25}.
This is essential for our purposes, as in a learning problem
we only expect to approximate the target concept.

We are now ready to state the computational problem for which we show a lower bound.
The \emph{Monotone Circuit-Formula Gap Learning problem} $\mCFGL_{s_1}^{s_2}[\gamma]$
is the problem of distinguishing between cases 
$(\ref{eq:sat-case})$
and
$(\ref{eq:unsat-case})$
when given $\cald$ as input.
Given that our $\REFx(F)$ formula has size $2^{O(\sqrt{n})}$,
the reduction from SAT to this problem takes this time.
Under ETH, we thus obtain an 
$N^{\Omega(\log N)}$ 
lower bound for this problem, since the size $N$ of the input is
$2^{O(\sqrt{n})}$.
To reduce to the case $s_1 = n$, we pad the sampled strings;
this takes time $2^{\tld{O}(\sqrt{n})}$,
so our final lower bound is
$N^{\tld{\Omega}(\log N)}$.

We can further reduce this problem to a gap version of Partial-Monotone-MCSP
we call
$\mMCFSP_{s_1}^{s_2}[\gamma]$\footnote{Sometimes the literature reserves
the term ``Partial-MCSP'' to the problem where one is given the entire
truth table of a partial function, but we follow here the convention
of~\cite{ILO20}.}.
In this problem, instead of $\cald$, the algorithm is given
$t$ examples $(x_1,b_1), \dots, (x_t,b_t)$,
and must distinguish whether there is a monotone formula $L$ of size $s_1$
such that $L(x_i) = b_i$ for all $i \in [t]$,
or whether every monotone circuit $C$
of size $s_2$ is such that 
$\card{\set{i \in [t] : C(x_i) \neq b_i}} \geq (1/2+\gamma)t$.
The reduction is randomised, naturally coming from sampling the distribution~$\cald$.
Moreover,
we need to sample $O(s_2 \log s_2)$ 
examples
so that, with high probability, any circuit of size at most $s_2$
will err on a proportion larger than $1/2+\gamma$ of the examples
(in case~(\ref{eq:unsat-case})), and so the reduction
takes time $O(s_2 \log s_2)$.
This means that we get a tradeoff between the choice of $s_2$ and the runtime lower
bound coming from rETH.
The same reduction also gives us lower bounds on PAC-learning.

\section{Preliminaries}

For $f,g : \bbn \to \bbn$,
we write $f \sim g$ if $\lim f/g \to 1$,
and $f \asymp g$ if
$f = \Theta(g)$.
We say that a function Boolean $f$ is $\gamma$-approximated by $g$
over a distribution $\cald$
if $\Pr_{x \flws \cald}[f(x)=g(x)] \geq 1/2+\gamma$.

\subsection{Basic Complexity}

\emph{Monotone circuits} are Boolean circuits without $\neg$ gates.
\emph{Formulas} are Boolean circuits 
whose gates have fan-out 1.
\emph{Monotone formulas} are analogously defined.

The \emph{Exponential-Time Hypothesis (ETH)} postulates that there
exists a constant $\eps > 0$
such that
any algorithm solving 3SAT on $n$-variate formulas 
must take time $2^{\eps n}$.
The \emph{randomised} ETH \emph{(rETH)}
postulates the same lower bound for randomised algorithms.

\subsection{Expander Graphs}

A $(m,n,\delta,r,e)$-\emph{bipartite expander graph} is a bipartite graph $\graph=((U\cup V), E)$
with $|{U}| = m$, $|{V}| = n$, with every vertex in $U$ having degree at most $\delta$, and 
where for every subset $S\subseteq [m]$ such that $|S|\leq r$, the neighbourhood of $S$, denoted
$N(S)$, satisfies $|{N}(S)|\geq e|S|$. 
\begin{remark}
    \label{rk:expander}
    A complete bipartite graph $\graph=((U\cup V), E)$
    with $|{U}| = m$, $|{V}| = n$, is a 
    $(m,n,\delta,r,c)$-{bipartite expander graph} for, e.g., $\delta=n$, $r=n/2$ and $c=2$.
\end{remark}

A standard calculation~\cite{hoory06}
shows that random graphs are
good expanders. We record one parameter regime that we use in our applications.

\begin{lemma}
  \label{lem:random-expander-constant}
    Let $m,n \in \mathbb{N}$ be such that $m = n^2$. Then there exists $\delta = O(1)$
and $r = \Omega(\sqrt{n})$ such that
with high probability %
a random graph $\graph \sim \mathbf{G}(m,n,\delta)$ is an $(m,n,\delta,r,\delta/2)$-bipartite expander graph.
\end{lemma}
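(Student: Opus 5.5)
We prove the statement with a first-moment (union bound) argument over all small left-vertex subsets. We fix the model $\mathbf{G}(m,n,\delta)$: each left vertex $u \in U$ independently picks $\delta$ neighbours in $V$ uniformly at random, with repetition. Its degree is then at most $\delta$, and the neighbourhood of a set $S$ is contained in the union of the $\delta\abs{S}$ random choices. We take $\delta$ to be a sufficiently large absolute constant and $r = c\sqrt{n}$ for a small constant $c>0$, both fixed at the end.

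\emph{Bad event for a fixed set.} Fix $S \subseteq U$ with $\abs{S} = s \le r$ and a target set $T \subseteq V$ with $\abs{T} = \delta s/2$. The event $N(S)\subseteq T$ requires all $\delta s$ independent choices to land in $T$, so it has probability $(\delta s/(2n))^{\delta s}$. Hence
\[
\Pr[\abs{N(S)} < \delta s/2] \le \binom{n}{\delta s/2}\Big(\frac{\delta s}{2n}\Big)^{\delta s} \le \Big(\frac{2en}{\delta s}\Big)^{\delta s/2}\Big(\frac{\delta s}{2n}\Big)^{\delta s} = \Big(\frac{e\,\delta s}{2n}\Big)^{\delta s/2}.
\]

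\emph{Union bound.} Summing over the $\binom{m}{s} \le (em/s)^s = (en^2/s)^s$ choices of $S$ and over $s = 1,\dots,r$, the failure probability is at most
\[
\sum_{s=1}^{r} \Big[\frac{e n^2}{s}\Big(\frac{e\delta s}{2n}\Big)^{\delta/2}\Big]^s .
\]
Write the bracket as $e(e\delta/2)^{\delta/2}\, s^{\delta/2-1}\, n^{2-\delta/2}$. Since $s \le c\sqrt{n}$, we have $s^{\delta/2-1} \le (c\sqrt n)^{\delta/2-1}$, so the bracket is at most
\[
C_\delta\, c^{\delta/2-1}\, n^{(\delta/2-1)/2 + 2 - \delta/2} = C_\delta\, c^{\delta/2-1}\, n^{3/2-\delta/4}.
\]
Choose $\delta \ge 8$, so the exponent $3/2-\delta/4$ is negative. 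The bracket is then $o(1)$ uniformly in $s$ (we may also shrink $c$ to absorb the constant $C_\delta$). The geometric sum is therefore $o(1)$, and with high probability every $S$ with $\abs{S}\le r$ satisfies $\abs{N(S)} \ge \delta\abs{S}/2$. This is exactly the $(m,n,\delta,r,\delta/2)$ expansion property, with $\delta = O(1)$ and $r = \Omega(\sqrt n)$.

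\emph{Main obstacle.} The only delicate point is the balance in the union bound. There are $m = n^2$ left vertices, so the entropy term $(n^2)^s$ must be beaten by the probability term $n^{-\delta s/2}$ times the small-set gain $s^{\delta s/2}$. This balance is what forces $r \approx \sqrt n$ and requires $\delta$ to be a large enough constant rather than $3$ or $4$. Any constant $\delta \ge 8$ and a small enough $c$ work; as usual~\cite{hoory06}, one should also check the $s=1$ term, which is $O(n^{2-\delta/2}) = o(1)$.
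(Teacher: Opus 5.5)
Your proposal is correct and is exactly the standard first-moment/union-bound calculation that the paper defers to by citing \cite{hoory06} (it gives no further proof), and your bookkeeping $s^{\delta/2-1}n^{2-\delta/2}\le C_\delta\, c^{\delta/2-1}n^{3/2-\delta/4}$ checks out for $\delta\ge 8$. Two small points that do not affect the argument: your closing remark is too strong, since the balance does not force $r\approx\sqrt n$ (for $\delta=8$ any $r=o(n^{2/3})$ already works, and larger constant $\delta$ pushes $r$ toward $n^{1-o(1)}$), and you should take $\delta$ even so that $\delta s/2$ is an integer.
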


\subsection{Proof Complexity}
\label{sec:proof-complexity}

\myparagraph{Resolution}
A \emph{literal} is a propositional atom or its negation, a \emph{clause}
is a disjunction of literals, and a \emph{CNF formula} is a conjunction of
clauses. 
We see clauses as sets of literals, and write simply $C \subseteq D$ to express that $C$ is a subclause of $D$.  
We denote the set of variables in a clause $C$ by $\vars(C)$ 
(if $x_i$ or $\neg x_i$ are
present in $C$, we have $x_i \in \vars(C)$).
A \emph{Resolution
refutation} of an unsatisfiable CNF formula $\varphi = C_1 \land \dots
\land C_m$ over variables~$x_1, \dots, x_n$ is a sequence $D_1, \dots, D_s$
of clauses over $x_1, \dots, x_n$ such that $D_s = \bot$, denoting the
empty clause, and for every $i \in [s-1]$, the clause $D_i$ either (a) is
one of the clauses $C_1, \dots, C_m$ of $\varphi$, or (b) is a
\emph{weakening} of a previous clause, meaning that $D_i \supseteq D_j$ for
some $1\leq j < i$, or (c) has been obtained from two previous clauses $D_j
= A \lor x$ and $D_k = B \lor \neg x$, for~$j, k < i$, by an application of
the \emph{Resolution rule}:
    \begin{prooftree}
        \AxiomC{$A \lor x$}
        \AxiomC{$B \lor \neg x$}
        \BinaryInfC{$A \lor B$}
    \end{prooftree}
We say that $A \lor B$ was obtained by \emph{resolving} over $x$.
The \emph{size} of the refutation is~$s$, the number of clauses appearing in the refutation.
 
To every Resolution refutation 
we can associate a directed acyclic
graph (dag) in a natural way. We 
define
the \emph{depth} of the proof as
the length of the longest path in the dag, starting from the
root labeled by the empty clause $\bot$.
The proof is said to be \emph{tree-like} if the associated graph
is a tree rooted at $\bot$.

Let $F$ be an unsatisfiable $k$-CNF formula.
We denote by $\wRes(F)$ the minimum integer such that
$F$ admits a Resolution refutation whose clauses all have width at
most $\wRes(F)$.
Let $\ResDepth(F)$
be the minimum integer such that $F$ admits a Resolution refutation
with depth at most $\ResDepth(F)$.
We denote by $\Res(F)$
the size of the smallest Resolution refutation of $F$,
and by
$\TreeRes(F)$ 
the size of the smallest tree-like such refutation.
We denote by $\card{F}$
the number of clauses in $F$.

\myparagraph{Prover-adversary game}

An equivalent characterisation of Resolution width and Resolution depth can
be given by notions from \emph{query complexity}.
Suppose that $F$ is an $n$-variate unsatisfiable CNF formula.
Given any assignment $x \in \blt^n$ to the variables of $F$,
some clause of $C$ is necessarily falsified.
The minimum number $k$ of queries to the variables of $F$ needed to discover one such clause
is equivalent to $\ResDepth(F)$.

To characterise $\wRes$, it is helpful to modify
the above setup in what is commonly called 
a ``Prover-Adversary game''. As before, the Prover is querying
variables,
but now the Prover is allowed to forget the value of some queried
variables.
The adversary is answering the queries, but may respond with different
values
when a variable whose value was forgotten is queried again.
The minimum number $k$ of variables that the Prover needs to store
in order to discover a falsified clause is equal to $\wRes(F) \pm
1$~\cite{DBLP:conf/coco/AtseriasD03}.

\subsection{Lifting for Average-Case Lower Bounds}

Given a \emph{gadget}
$g : \calx \times \caly$, let
$g^n : \calx^n \times \caly^n$
denote the function
\begin{equation*}
    g^n(x,y) = (g(x_1,y_1),\dots,g(x_n,y_n)).
\end{equation*}
Let $\Ind_{m} : [m] \times \blt^m \to \blt$ 
denote the ``index gadget'',
defined as
the Boolean function such that $\Ind_m(x,y) = y[x]$.

\begin{definition}
    \label{def:function-lifting}
    Let $F$ be an unsatisfiable formula with $n$ variables.
    Let $f_{F,m}$ be the partial Boolean function over
    $\card{F} \cdot m^k$ bits
    defined as follows.
    To each $i \in [\card{F} \cdot m^k]$,
    associate a pair $(C, \alpha)$,
    where $C$ is a clause of $F$
    and $\alpha \in [m]^k$.
    We interpret
    $(C, \alpha)$
    as the $k$-width clause over the variables
    $\set{y_{ij} : i \in [n], j \in [m]}$
    obtained by
    replacing the variable $x_i \in \vars(C)$
    in $C$ with $y_{i, \alpha(i)}$.
    Each binary string of length $\card{F} \cdot m^k$
    is thus in 1-to-1 correspondence
    with the set 
    \begin{equation*}
        F \circ \Ind_m^n
        :=
        \set{(C,\alpha) : \text{$C$ is a clause of $F$},\;
            \alpha : \vars(C) \to
        [m]}.
    \end{equation*}
    The function $f_{F,m}$ has the following behaviour:
    \begin{enumerate}
        \item 
            Accept the
            binary string whose support is
            $\set{(C,\alpha) \in F \circ \Ind_m^n : x^{\vars(C)} = \alpha}$,
            for every
            $x : [n] \to [m]$.
        \item 
            Reject the binary string whose support is
            $\set{(C,\alpha) \in F \circ \Ind_m^n : (C,\alpha)(y)=1}$,
            for every $y \in \blt^{nm}$.
    \end{enumerate}
\end{definition}

The following theorem shows that the monotone circuit complexity of
$f_{F,m}$
is roughly $m^{\Theta(\wRes(F))}$
for some $m \gg \wRes(F)$; moreover, a formula upper bound of
the form
$m^{O(\ResDepth(F))}$ also holds.
This result follows from (a small modification of) the so-called
\emph{lifting theorem},
which first appeared in~\cite{Garg2020},
from which monotone circuit lower bounds 
follow from Resolution width lower bounds.
We use the lifting theorem from~\cite{DBLP:conf/coco/RezendeV25}, which
builds on~\cite{dRFJNP24, Lovett2022}, as it not only has better parameters, but also gives a lower bound for $f_{F,m}$ for smaller values of $m$, which is essential to our purposes.

We say that a circuit $D : \blt^r \to \blt^N$
\emph{represents} a distribution $\cald$ over
$N$-input bits if 
the random binary string $D(\rndr)$ where $\rndr \flws \blt^r$
is distributed according to~$\cald$.

\begin{theorem}[\protect{\cite[Theorem 11, modified]{DBLP:conf/coco/RezendeV25}}]
    \label{thm:constructive-lifting}
    There exists an algorithm $A(F, 1^m)$ 
    which receives as input
    an unsatisfiable $k$-CNF formula $F$ on $N$ variables,
    runs in time
    $\poly(\card{F} \cdot m^k)$,
    and outputs a circuit
    representing a distribution
    $\cald^{F,m} = (\cald_1 + \cald_0)/2$ 
    over $\card{F} \cdot m^k$
    bits 
    such that:
    \begin{enumerate}
        \item 
            There is a monotone circuit $K$ of size
            $m^{O(\wRes(F))} \cdot \Res(F)$
            and a monotone formula~$L$ of size
            $m^{O(\ResDepth(F))}$
            such that
            $K(x) = L(x) = b$
            for every $x \in \supp(\cald_b)$ and $b \in \blt$.
        \item
            There exists an absolute constant $c$ such that for any $\ell \leq \wRes(F)$, any monotone circuit $K$ that satisfies
            $\Pr_{b \flws \blt, x \flws \cald_b}[K(x)=b] \geq 1/2 + \gamma$, for $\gamma \geq c \cdot \ell \cdot \log(mN) / m$,
            has size at least
            \begin{equation*}
                \left( 
                    \frac{m}{c \cdot \ell \cdot \log(mN)}
                \right)^{\ell -1}.
            \end{equation*}
    \end{enumerate}
    Moreover, the distribution $\cald_0$
    (resp. $\cald_1$)
    is uniformly distributed over $f_{F,m}^{-1}(0)$
    (resp. $f_{F,m}^{-1}(1)$).
\end{theorem}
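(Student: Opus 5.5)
The plan is to assemble Theorem~\ref{thm:constructive-lifting} from three parts: an explicit sampler, upper bounds obtained by simulating query and Prover strategies, and a lower bound obtained by re-running the lifting argument of~\cite{DBLP:conf/coco/RezendeV25} with an approximate rather than exact correctness hypothesis.

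\textbf{The sampler and the "Moreover" clause.} The circuit $A(F,1^m)$ outputs first draws a bit $b$.
\begin{itemize}
\item If $b=1$, it draws $x:[n]\to[m]$ uniformly, and for each $(C,\alpha)$ it sets the corresponding output bit to $[x^{\vars(C)}=\alpha]$.
\item If $b=0$, it draws $y\in\blt^{nm}$ uniformly and sets each bit to $(C,\alpha)(y)$.
\end{itemize}
Each output bit is a small comparison or clause evaluation, so the circuit has size $\poly(\card{F}\cdot m^k)$. To get $\cald_b$ uniform on $f_{F,m}^{-1}(b)$, check that the map from $x$ (resp.\ $y$) to its string is injective. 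I expect this to hold after restricting to $F$ in which every variable occurs, possibly with a minor normalisation.

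\textbf{Upper bounds (item 1).} These follow the standard simulation used in~\cite{Garg2020}.
\begin{itemize}
\item \emph{Formula.} Take a depth-$d=\ResDepth(F)$ decision tree for the falsified-clause search problem of $F$. A query to $x_i$ is simulated by branching over a pointer value $j\in[m]$ and reading $y_{i,j}$. Monotonically, we combine with OR over the $m$ choices and AND/OR over the answers. Each query costs a factor $m$, so the formula has size $m^{O(d)}$. On 1-inputs it witnesses an embedded copy of $F$; on 0-inputs no falsified lifted clause exists.
\item \emph{Circuit.} For each clause $D$ of a width-$w$ refutation and each pointer assignment to $\vars(D)$, introduce a gate. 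This gives $\Res(F)\cdot m^{O(w)}$ gates, each combining its premises' gates with monotone AND/OR as in the usual monotone interpolation from the dag.
\end{itemize}

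\textbf{Lower bound (item 2, the main obstacle).} The plan is to re-run the DAG-like lifting argument of~\cite{DBLP:conf/coco/RezendeV25} in the average-case setting. Convert a monotone circuit $K$ into a rectangle-DAG, equivalently a Res($\oplus$)-like structure, for the monotone KW search problem. Then carry out the simulation with the small-gadget, block-density machinery. The worst-case argument needs the root rectangle to cover everything; here we only know that $K$ is correct on a $1/2+\gamma$ fraction of $(b,x)$ pairs. I would handle this as follows:
\begin{itemize}
\item Restrict to the 1-inputs and 0-inputs on which $K$ is correct. Each of these sets has density at least $2\gamma$ relative to the uniform measure.
\item Track densities starting from $2^{-O(\log(1/\gamma))}$ instead of $1$. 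This costs $O(\log(1/\gamma))$ extra "deficiency" in the density-increment bookkeeping.
\end{itemize}
The condition $\gamma\ge c\ell\log(mN)/m$ is what makes this extra deficiency absorbable. The simulation then yields a Prover strategy in the width game of memory at most $\ell$, unless the number of rectangles is at least $(m/(c\ell\log(mN)))^{\ell-1}$. Since $\ell\le\wRes(F)$, a Prover with memory below $\ell$ cannot succeed, which gives the size bound.

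The delicate step is verifying that each parameter in~\cite{DBLP:conf/coco/RezendeV25}'s structure theorem tolerates starting densities of $\gamma$ rather than constants. I would check each invocation of the min-entropy and density lemmas and adjust $c$ accordingly.
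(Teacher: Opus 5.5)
Your route is the same as the paper's. You use the same sampler, prove item~1 by simulating a refutation, and prove item~2 by rerunning the worst-case lifting argument from a root rectangle of density about $\gamma$ rather than $1$. Your $2\gamma$ per side is in fact the correct count. However, your adaptation of item~2 covers only one of the two changes the paper makes, and it is not the one that produces the exponent $\ell-1$.

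The change you describe is absorbing the $\log(1/\gamma)$ min-entropy deficiency at the root. The paper does exactly this:
\begin{itemize}
\item For nonempty $I$, the $X$-side has min-entropy at least $|I|\log m-\log(2/\gamma)\ge\delta|I|\log m$, because $\gamma/2\ge m^{-(1-\delta)}$.
\item The $Y$-side has size at least $2^{mN}\gamma/2\ge 2^{mN-4\ell\log(mN)}$.
\end{itemize}

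What you omit is the error budget. During the simulation, every node of the rectangle-DAG discards $X$- and $Y$-error sets. Their mass is measured against all of $[m]^N$ and $\{0,1\}^{mN}$. The worst-case size threshold only guarantees that these errors total at most $1/2$. Against a root of mass $2\gamma$, that budget can swallow the whole root rectangle, leaving nothing to be structured.

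The paper fixes this by assuming the DAG has size at most $m^{(1-\delta)\ell}\cdot\gamma/2$. Then the accumulated errors are at most $\gamma/2$, and the cleaned root keeps density at least $\gamma/2$ on both sides. Since $\gamma\ge 2m^{-(1-\delta)}$, this threshold is at least $m^{(1-\delta)(\ell-1)}\ge\bigl(m/(c\ell\log(mN))\bigr)^{\ell-1}$, using $m^{\delta}\le c'\ell\log(mN)$ and $c=2c'$. So the extra factor $\gamma$ in the size threshold, not the deficiency bookkeeping, is what costs one in the exponent. Your proposal states the $(\ell-1)$-exponent bound but contains no step that produces it. A minor point: what $K$ yields is a rectangle-DAG for $\mathrm{Search}(F)\circ\mathrm{Ind}_m^N$, not anything Res($\oplus$)-like.

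A smaller point concerns the ``Moreover'' clause. Your expectation that $y\mapsto$ (its rejecting string) is injective fails, even for minimally unsatisfiable $F$ in which every variable occurs. Take $F=\{x_1\vee x_2,\ \neg x_1\vee x_2,\ \neg x_2\}$. Any $y$ whose second block is all ones satisfies every lifted copy of the first two clauses and falsifies every lifted copy of the third. So its first block is invisible, and the sampled $\mathcal{D}_0$ is not uniform on $f_{F,m}^{-1}(0)$.

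This does not hurt the theorem's use. The lower bound only needs $\mathcal{D}_1,\mathcal{D}_0$ to be the pushforwards of uniform $x\in[m]^N$ and uniform $y\in\{0,1\}^{mN}$. Then the pointers and assignments on which $K$ is correct have density at least $2\gamma$ in those product spaces. You should argue that directly rather than rely on injectivity.
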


\begin{proof}[Proof sketch]
    The algorithm $A$
    constructs a circuit $\cald^{F,m} : \blt^r \to \blt^{\card{F} \cdot m^k}$
    where $r = mN+1$ that behaves in the following way.
    If the first bit of $r$ is 0,
    then 
    let
    $y \in \blt^{mN}$ be the remaining $mN$ bits.
    In the output bit $(C,\alpha)$, the circuit outputs $1$
    iff
    $y$ satisfies
    the clause
    $(C,\alpha)$.
    Moreover, if the first bit of $r$ is 1,
    then
    let
    $x \in [m]^N$
    be given the first $m \log n$ of the remaining bits.
    In the output bit $(C,\alpha)$, the circuit outputs $1$
    if
    $x^{\vars(C)} = \alpha$.
    The construction of $D$, and $D$ itself,
    can be clearly be done in time
    $\poly(\card{F} \cdot m^k)$.

    Property (1) of the resulting distribution is folklore
    (see, e.g.,~\cite{Lovett2022}).
    Property (2) is proved 
    in~\cite[Theorem 11]{DBLP:conf/coco/RezendeV25} with $\gamma = 1/2$;
    the result with smaller $\gamma$ 
    can be easily obtained by tweaking a few parameters. 
    We explain how this is done in
\ifthenelse{\conferenceversion=1}{%
the full version of this paper.}{%
\myref{sec:refined-lifting}.}
\end{proof}

\section{Shallow Refutation Formula}

Let $F$ be a 3-CNF formula over $n$ variables.
The $\REF_s(F)$ formula encodes the statement
``$F$ has a resolution proof of length $s$''.
Slightly more formally, the variables of $\REF_s(F)$
are divided into $s$ blocks, each of which encodes
a potential clause in the purported refutation of $F$ and how this clause was derived,
and the clauses of 
$\REF_s(F)$ enforce that all derivation steps are sound.

The known resolution refutation of $\REF_s(F)$ when $F$ is satisfiable, given by Pudlák's upper 
bound~\cite{Pudlak03}, has large width $\Omega(n)$. Under ETH, a large width is unavoidable, 
since it is possible to find width $w$ refutations of an unsatisfiable $N$-variate CNF formula in time 
$N^{O(w)}$.
While it is possible to show a gap between the Resolution width of the formula $\REF_s(F)$
in the satisfiable and unsatisfiable cases of~$F$~\cite{DBLP:journals/jacm/AtseriasM20,DBLP:conf/stoc/RezendeGNPR021} ($O(n)$ vs.\ $\Omega(s/n)$),
for our applications we need a better upper bound.
In this section, we define a formula related to $\REF_s(F)$ which substantially decrease the width upper bound at the cost of an increase in the size of the formula. Moreover, we show that our new formula has small Resolution \emph{depth} when $F$ is satisfiable.

\subsection{The New Formula \texorpdfstring{$\eREF_d(F)$}{Ref(F)}}
Our new formula is denoted by $\eREF_d(F)$, and it depends on a bipartite expander graph~$G$ and an integer parameter $d$.
We somewhat follow the notation established in ``The Proof Analysis
Problem'' by Arteche et al. \cite{Arteche2025}, but modify the structure of the purported resolution refutation and introduce some extension variables,
similar to what was done in
\cite{DBLP:conf/lagos/Rezende21} and
\cite{DBLP:conf/coco/CareniniR25}.

As was done in~\cite{DBLP:conf/lagos/Rezende21} we structure the blocks in $\formuladepth$ layers (instead of $n$), and as in~\cite{DBLP:conf/coco/CareniniR25},
we add 
``summary'' variables that give information about segments of $\noverdepth$
variables.
These changes allow us to turn the $O(1)$ block-width upper bound and
corresponding $O(n)$ resolution width upper bound in~\cite{Pudlak03} into an $O(\formuladepth)$ upper bound
on depth, at the cost of increasing the number of variables and clauses in
$\eREF_d(F)$ to $2^{\Theta(\noverdepth)}$.

Since it is crucial that these summary variables encode that literals are absent,
we also switch focus and define ``un-literal'' variables encoding that single literals are \emph{not present} in a block as opposed to previous encoding of $\REF$ %
which
has variables encoding the \emph{inclusions} of literals. 
This is done only for exposition purposes since our $\lit$ variables are just the negation of the usual literal variables. 
We also include binary variable for pointers (as done in~\cite{DBLP:conf/stoc/RezendeGNPR021}) to make it possible to query pointers more efficiently.

Finally, we also include ``split'' variables to turn our formula into a $3$-CNF formula using 
a standard transformation.
This is done in order to minimise the blow-up in size of the function we obtain after lifting $\eREF_d(F)$.

We start by explaining the structure of the formula $\eREF_d(F)$. We then introduce the main 
variables of $\eREF_d(F)$ 
and the clauses enforcing that the purported proof is sound and is structured as we would like.  
Finally, we
introduce the split variables and explain how they are used to transform the formula into a $3$-CNF 
formula. 

\myparagraph{Structure of $\eREF_d(F)$ and notation} 
Let $F$ be a CNF formula over $n$ variables and $\poly(n)$
clauses, let $d$ be an integer, and let $G$ be a bipartite graph\bnote{specify params of expander graph here}
with $t2^{n/d}$ nodes on left side, $t$ nodes on the right, and outdegree $\delta$ from each node on the left.
We define $s \coloneqq t \cdot d$.
For the sake of convenience in our presentation,
we assume that $d$ divides~$n$,
and that $|F|$ (the number of clauses of $F$) and $\delta$, the left outdegree of the expander $G$,
are powers of 2,
so they can be cleanly indexed into by binary pointers. 
We later explain why we can make these assumptions without loss of generality~(see \myref{rk:divisibility}).

Given an integer $S$ which is a power of $2$ 
and an integer $X\in[S]$, we denote by $\bin_S(X)$ the binary string of length $\log S$ encoding $X$. When $S$ is clear from the context, we simply write $\bin(X)$.
For instance, for $A \in [|F|]$, 
we write $\bin(A) \in \bit^{\log |F|}$ to be 
the binary string of length $\log |F|$ encoding $A$.

The formula $\eREF_d(F)$ is defined over $s+1$ blocks arranged in $d+1$ layers (see \myref{fig:structure}).
The block $s+1$ is the root and it is alone in layer $0$. 
The other $s$ blocks are arranged in decreasing order in layers $1$ to $d$ with $s/d$ blocks per 
layer. 
Formally, for $B \in [s]$, we have $\layer(B) = 1+d - \lceil B/t \rceil$ while $\col(B) = B - (\lceil B/t \rceil - 1)t$. So, for example, for block~$s$ we get that $\layer(s) = 1, \col(s) = t$, while for block~$1$ we get $\layer(1) = d, \col(1) = 1$.

\begin{figure}[t]
\centering
\begin{tikzpicture}[
    font=\Large,
    box/.style={minimum width=3.8cm, minimum height=1cm, rounded corners=12pt},
    plainbox/.style={box,draw,thick},
      collabel/.style={anchor=center},
      scale = 0.8
]

\newlength{\vsep}
\newlength{\hsep}
\newlength{\bigvsep}
\newlength{\bighsep}

\setlength{\hsep}{1cm}
\setlength{\vsep}{1cm}
\setlength{\bighsep}{3cm}
\setlength{\bigvsep}{2cm}

\node[plainbox] (a1) {$(d-1)t + 1$};
\node[plainbox, right=\hsep of a1] (a2) {$(d-1)t + 2$};
\node[plainbox, right=\hsep of a2] (a3) {$(d-1)t + 3$};
\node[plainbox, right=\bighsep of a3] (a4) {$dt = s$};
\node at ($(a3)!0.5!(a4) + (0,0)$) {$\scalebox{2}{$\ldots$}$};

\node[plainbox,above=\vsep of a1] (bott) {$s+1$};

\node[plainbox, below=\vsep of a1] (b1) {$(d-2)t + 1$};
\node[plainbox, below=\vsep of a2] (b2) {$(d-2)t + 2$};
\node[plainbox, below=\vsep of a3] (b3) {$(d-2)t + 3$};
\node[plainbox, below=\vsep of a4] (b4) {$(d-1)t$};
\node at ($(b3)!0.5!(b4) + (0,0)$) {\scalebox{2}{$\ldots$}};

\node[plainbox, below=\vsep of b1] (c1) {$(d-3)t + 1$};
\node[plainbox, below=\vsep of b2] (c2) {$(d-3)t + 2$};
\node[plainbox, below=\vsep of b3] (c3) {$(d-3)t + 3$};
\node[plainbox, below=\vsep of b4] (c4) {$(d-2)t$};
\node at ($(c3)!0.5!(c4) + (0,0)$) {\scalebox{2}{$\ldots$}};

\node[plainbox, below=\bigvsep of c1] (d1) {$2t + 1$};
\node[plainbox, right=\hsep of d1] (d2) {$2t + 2$};
\node[plainbox, right=\hsep of d2] (d3) {$2t + 3$};
\node[plainbox, right=\bighsep of d3] (d4) {$3t$};
\node at ($(d3)!0.5!(d4) + (0,0)$) {\scalebox{2}{$\ldots$}};

\node[plainbox, below=\vsep of d1] (e1) {$t + 1$};
\node[plainbox, below=\vsep of d2] (e2) {$t + 2$};
\node[plainbox, below=\vsep of d3] (e3) {$t + 3$}; 
\node[plainbox, below=\vsep of d4] (e4) {$2t$};
\node at ($(e3)!0.5!(e4) + (0,0)$) {\scalebox{2}{$\ldots$}};

\node[plainbox, below=\vsep of e1] (f1) {1};
 \node[plainbox, below=\vsep of e2] (f2) {2};
\node[plainbox, below=\vsep of e3] (f3) {3};
\node[plainbox, below=\vsep of e4] (f4) {$t$};
\node at ($(f3)!0.5!(f4) + (0,0)$) {\scalebox{2}{$\ldots$}};

\node at ($(c1)!0.5!(d1)+ (0,0.2)$) {\scalebox{2}{$\vdots$}};
\node at ($(c2)!0.5!(d2)+ (0,0.2)$) {\scalebox{2}{$\vdots$}};
\node at ($(c3)!0.5!(d3)+ (0,0.2)$) {\scalebox{2}{$\vdots$}};
\node at ($(c4)!0.5!(d4)+ (0,0.2)$) {\scalebox{2}{$\vdots$}};

\draw[decoration={brace,raise=50pt},thick,decorate]
      ([xshift=-15pt]bott.south) -- node[left=55pt] {$\layer =0$} ([xshift=-15pt]bott.north);

\draw[decoration={brace,raise=50pt},thick,decorate]
      ([xshift=-15pt]a1.south) -- node[left=55pt] {$\layer=1$} ([xshift=-15pt]a1.north);

\draw[decoration={brace,raise=50pt},thick,decorate]
      ([xshift=-15pt]b1.south) -- node[left=55pt] {$\layer=2$} ([xshift=-15pt]b1.north);
      
\draw[decoration={brace,raise=50pt},thick,decorate]
      ([xshift=-15pt]c1.south) -- node[left=55pt] {$\layer=3$} ([xshift=-15pt]c1.north);

\draw[decoration={brace,raise=50pt},thick,decorate]
      ([xshift=-15pt]d1.south) -- node[left=55pt] {$\layer=d-2$} ([xshift=-15pt]d1.north);

\draw[decoration={brace,raise=50pt},thick,decorate]
      ([xshift=-15pt]e1.south) -- node[left=55pt] {$\layer=d-1$} ([xshift=-15pt]e1.north);

\draw[decoration={brace,raise=50pt},thick,decorate]
      ([xshift=-15pt]f1.south) -- node[left=55pt] {$\layer=d$} ([xshift=-15pt]f1.north);
      
\draw[decoration={brace,raise=25pt},thick,decorate]
    ([xshift=-2pt]f1.east) -- node[below=35pt] {$\col = 1$} ([xshift=2pt]f1.west);

\draw[decoration={brace,raise=25pt},thick,decorate]
    ([xshift=-2pt]f2.east) -- node[below=35pt] {$\col = 2$} ([xshift=2pt]f2.west);

\draw[decoration={brace,raise=25pt},thick,decorate]
    ([xshift=-2pt]f3.east) -- node[below=35pt] {$\col = 3$} ([xshift=2pt]f3.west);

\draw[decoration={brace,raise=25pt},thick,decorate]
    ([xshift=-2pt]f4.east) -- node[below=35pt] {$\col = t$} ([xshift=2pt]f4.west);

\draw[<->, thick, black] 
($(a4.north east)+(1,0)$) --
node[right] {{$d$}}
($(f4.south east)+(1,0)$);

\draw[<->, thick, black] 
($(f1.south west)+(0,-2.5)$) --
node[below] {{$t = s/d$}}
($(f4.south east)+(0,-2.5)$);

\end{tikzpicture}
    \caption{The layout of blocks within the purported refutation, demonstrating the columns and layers.}\label{fig:structure}
\end{figure}

The idea is that each block contains a clause: block $s+1$ contains the empty clause, and blocks at layer $i$ contain clauses over the first $i\cdot \noverdepth$ variables.
For this to locally look like a sound refutation, each block is supposed to have $2^{\noverdepth}$ children on the layer below, one for each possible extension of the clause in the block to the next $\noverdepth$ variables: for each $z\in \{0,1\}^{\noverdepth}$, the clause in the $z$\textsuperscript{th} child of a block $B$ should
not be satisfied by assigning~$z$ these $\noverdepth$ variables.
The $z$\textsuperscript{th} child of a block is encoded by some variables so that, even though 
it is not possible to assign $2^{\noverdepth}$ distinct children to each block, 
locally it can still look like the formula $\REF^G_d(F)$ is consistent.

Finally, we would like there to be only a few possible 
options for each of the $z$\textsuperscript{th} children of a block. We use the graph $G$ to determine these options: each node on the left-side of~$G$ corresponds to a $z$\textsuperscript{th} child of a block at layer $i\in[d]$ and each node on the right-side of~$G$ corresponds to a block on layer $i+1$. The neighbour relation corresponds to the possible blocks at layer $i+1$ that could be the $z$\textsuperscript{th} child of a block at layer $i$. 
To make our encoding more succinct, we need a way of managing binary indices for the at most $\delta$ blocks $B'$ that could be a $z$\textsuperscript{th} child of $B$. To do this, we need a surjection (which we denote $\sur_{z,\col(B)}$) from the set of at most $\delta$ valid blocks that can be the $z$\textsuperscript{th} child to $[\delta]$. It is easy to see that, if $\sur_{z,\col(B)}$ is the lexicographical order over the valid blocks, then $\sur_{z,\col(B)}$ can be efficiently computed when needed while constructing $\REF^G_d(F)$ in all the parameter regimes considered in this paper. 
Note that there may be parameter regimes in which this is not the case.

\myparagraph{The variables of $\eREF_d(F)$} 
Let $x_1,..,x_n$ be the variables of $F$. For any variable $x_i$, the variable $(x_i)^0$ corresponds to $\lnot x_i$ and $(x_i)^1$ corresponds to $x_i$.
We define the set $\Lit_n = \{x_1,...,x_n,\overbar{x}_1,...,\overbar{x}_n\}$. 
Let $\textsf{Leaves} \coloneqq [s/d]$ be the set of blocks at layer $d$. 
Some variables (like $\weak, \weak\bin$, which encode what axiom in $F$ the block is a weakening 
of, in unary and in binary, respectively) are only defined for blocks in $\textsf{Leaves}$ and
some ($\point,\point\bin$, which encode the blocks that were used to derive the current block, also 
in unary and in binary, respectively) are 
only defined for blocks that are not in $\textsf{Leaves}$.

Recall that, in our setting,  a block is obtained by resolving $n/d$ variables at once.
Therefore, the $z$\textsuperscript{th} child, for $z\in \bit^{n/d}$, of a block at layer $i$ 
should not contain literals $x_{i\formuladepth + j}$ for each $z_j = 1$, 
nor literals $\lnot x_{i\formuladepth + j}$ for each $z_j = 0$. 
This will be enforced by using the variables $\summlit$.
In fact, except for blocks that are leaves, 
the only information about what literals are contained in
each block is given by these $\summlit$ variables. 
Since the layer of a block determines what variables can be present
in the block, only the relevant $\summlit$
variables are defined for each block. %
In particular, the root does not contain any $\summlit$
variable and the leaves contain all $\summlit$ variables. 
Moreover, the leaves are the only blocks that contain the $\lit$ variables, as
only then are these needed to connect the clause in the block with 
the axiom of $F$ that it is supposedly a weakening of.

We are now ready to introduce the variables of our formula.

\newlength{\mylength}
\settowidth{\mylength}{$\summlit_{i,z}^{B}$}
\newlength{\mylonglength}
\setlength{\mylonglength}{\glueexpr\textwidth - 18pt\relax}
\addtolength{\mylonglength}{-\mylength}

\begingroup
\setlength{\tabcolsep}{2pt} 
\def\arraystretch{1.3}
\begin{longtable}{>{\RaggedLeft}p{\mylength}  c p{\mylonglength} } 
$ \lit_\ell^{B}$ 
        &:&
        literal $\ell \in \Lit_n $ is not present in block $B \in \textsf{Leaves}$;  
  \\
   $ \weak_A^{B}$
        &: &
        block $B \in \textsf{Leaves}$ was obtained by weakening axiom $A \in [|F|]$;
    \\
   $ \weak\bin_j^B$
        &: &
        for $B \in \textsf{Leaves}$ and $j \in [\log |F|]$, the value of the $j$\textsuperscript{th} bit of 
        $\mathsf{bin}(A) \in \bit^{ \log |F|}$, where $A \in [|F|]$ is the axiom that block $B$ was
        obtained from by weakening;
    \\ 
$\summlit_{i,z}^{B} $
        &:&
        for block $B \in [s]$,
            $i \in [\layer(B)]$, 
            and $z \in \bit^{\noverdepth}$,
            for each $z_j = 1$, literal $x_{(i-1)\formuladepth + j}$ is not present in block $B$,
            and for each $z_j = 0$, literal $\lnot x_{(i-1)\formuladepth + j}$ is not present in block 
            $B$; \\
    $\point_{B',z}^{B}$
        &:&
            for blocks $B,B' \in [s+1]$ and $z \in \bit^{\noverdepth}$,
			the $z$\textsuperscript{th} child of block $B$ is $B'$ 
			(and therefore $\summlit_{i,z}^{B'} = 1$ for $i = \layer(B') = \layer(B)+1$);
		\\
    $\point\bin_{z,j}^{B}$
        &:&
        for block $B \in [s+1]$, $j \in [\log \delta]$,  and $z \in \bit^{\noverdepth}$, the value of the 
        $j$\textsuperscript{th} bit of $\mathsf{bin}(\sur_{z,\col(B)}(B')) \in \bit^{\log \delta}$ where $B'$ 
        is the $z$\textsuperscript{th} child of $B$
        (and, in particular, $\layer(B') = \layer(B) +1$);
         \\
        $\enable^{B}$
        &: &
        block $B \in [s+1]$ is enabled.
\end{longtable}

\endgroup

We record two remarks regarding the variables of the formula. First, we note a bound on the number of variables.

\begin{remark}\label{rem:var-count}
    For each class of variables listed above there are fewer than $O\left(|F| \cdot n \cdot s^2 \cdot 
    2^{\noverdepth}\right)$ variables of the class. Assuming $|F| = n^{O(1)}$, $d\leq n$, and $s = 
    2^{O(\noverdepth)}$, this is $2^{O(\noverdepth)} \cdot n^{O(1)}$.
\end{remark}

Second, we observe that the divisibility assumptions mentioned earlier can easily 
be dropped.

\begin{remark}\label{rk:divisibility}
We now explain why we can make the divisibility assumptions we made about $|F|,d,n$ and $\delta$.
\begin{itemize}
    \item 
    When $d$ does not divide $n$, we can add one final layer with a corresponding set of summary 
    and pointer variables to manage those $x_i$ with $i > \lfloor n/d\rfloor d$.
    \item When $|F|$ and $\delta$ are not powers of two, we can add clauses which rule out each invalid binary pointer or interpret each invalid pointer as some canonical value.
    
\end{itemize}
\end{remark}

\myparagraph{The clauses of $\eREF_d(F)$} 
We first introduce the unsplit versions (i.e., of width possibly larger than $3$) of the clauses which make up
$\eREF_d(F)$. 
These clauses enforce the basic structure of the Shallow-Resolution formula ensuring particularly 
that our summary and pointer variables interact correctly. The clauses are defined for arbitrary $G$ 
but in our proofs we focus on two cases for $G$: either $G$ is a bipartite expander with $t$ nodes 
on the left side, $t \cdot 2^{n/d}$ on the right side, and outdegree $\delta$ from each node the left 
side; or $G$ is a complete graph such that all $(\col(B), (\col(B'),z)) \in E(G)$.

We divide the clauses into three categories depending on whether the block in the superscript is in 
layer $0$ (meaning the block is the root), in layer $d$ (meaning the block is a leaf), or between 
layers $1$ and $d-1$ (meaning the block is an internal block).

\myparagraph{Root case --- final  block is enabled} 
    \begin{axiomdef}{2}{\textsc{Ref}}
    \setcounter{equation}{0} 
        & 
            \enable^{s+1} && 
            \label{axiom:root-enabled}
    \end{axiomdef}

\myparagraph{Leaf case --- managing axioms}
The following clauses are defined for each $B\in \mathsf{Leaf}$.
    \begin{axiomdef}{2}{\textsc{Ref}}
    \setcounter{equation}{1}
        & 
            \left( \enable^B \land \weak^B_A \right) \to \lnot \lit^B_\ell
        && \hspace{28pt} 
            \text{for } A\in[|F|],\
            \ell , \text{ in clause } A;
            \hspace{-40pt}  
            \label{axiom:must-appear-after-weak}
        \\
        & 
            \left( 
            \bigwedge_{j \in [\log |F|]} \left(\weak\bin^B_{j}\right)^{\mathsf{bin}(A)_j} \right) \to 
            \weak^B_{A} \quad
        && \hspace{28pt} 
            \text{for } A \in [|F|];\hspace{-40pt} 
            \label{axiom:weaken-binary-pointer} 
        \\
        & 
            \left(\enable^B \land \summlit_{i,z}^{B}\right)
            \to 
            \lit_{(x_{k})^{z_j}}^{B}  
            && \hspace{28pt} 
            \text{for } i\in[\formuladepth],  j \in [\noverdepth],\hspace{-40pt}  \label{axiom:summ-to-lits} 
            \\
            & && \hspace{28pt} 
            k = (i-1)\noverdepth+j \,. \hspace{-40pt}   \nonumber
    \end{axiomdef}

\myparagraph{Internal Step --- pointer and summary enforcement}
The following clauses are defined for each $B$ that is not a leaf, i.e., for $B$ such that $\layer(B) < \formuladepth$; for each $z\in \bit^{n/d}$; and for each $B'$ which is a possible $z$th child for $B$, i.e., for $B' \in P_B \coloneqq \{B': \layer(B') = \layer(B)+1$ and $((\col(B),z), \col(B')) \in E(G) \}$.
    \begin{axiomdef}{2}{\textsc{Ref}}
    \setcounter{equation}{4}
                & 
            \left( 
            \bigwedge_{j \in [\log \delta]} \left(\point\bin^B_{z,j} \right)^{\bin(\sur(B'))_j} \right) \to \point^B_{B',z} 
            \label{axiom:point-binary-pointer} 
        \\
        & 
            \left( \enable^B \land  \point^B_{B',z} \right) \to \enable^{B'}  
            \label{axiom:must-enable}
            \\
        & 
            \left(\enable^B \land \point^B_{B',z'} \land \summlit_{i,z}^{B} \right)  \to 
            \summlit_{i,z}^{B'} &&\hspace{25pt} \text{for } \text{$z' \in \bit^{\noverdepth}$,}\hspace{-30pt}
        \label{axiom:pass-on-summ-lits} \\
        &&&\hspace{25pt} \text{$i \le \layer(B)$;}  \nonumber\\
        & 
            \left( \enable^B \land \point^B_{B',z}\right) \to \summlit_{i+1,z}^{B'}
            &&\hspace{25pt} \text{for } i = \layer(B). \hspace{-30pt}
        \label{axiom:pointer-to-summ-lits} 
    \end{axiomdef}

This completes the list of the clauses of the non 3-CNF version of the formula $\eREF_d(F)$. 
We first observe that there are not too many clauses in this formula and that they are not too wide.

\begin{remark}\label{rem:clause-count}
    For each class of clauses listed above, there are fewer than $O(|F| \cdot n \cdot s^2 \cdot 2^{2n/d})$ clauses of that class, and each clause has fewer than $O(\log(|F|) +\log(\delta))$ literals in them.
\end{remark}

We now introduce a final set of variables which we refer to as ``split variables''. These are used to transform the formula into $3$-CNF using a standard transformation. For every clause $C_i = v^i_1 \lor v^i_2 \lor ... \lor v^i_{|C_i|}$ so far introduced, we introduce a collection of split variables $\splitv_{C_i}^j$ for each $j \in [|C_i| - 3]$ and replace $C_i$ with a collection of clauses $(v^i_1 \lor v^i_2 \lor \splitv_{C_i}^1) \land (\lnot \splitv_{C_i}^1 \lor v^i_3 \lor \splitv_{C_i}^2) \land ... \land (\lnot \splitv_{C_i}^{|C_i|-3} \lor v^i_{|C_i| - 1} \lor v^i_{|C_i|})$.

We denote the formula after splitting by $\eREF_d(F)$.
Note that, by \myref{rem:clause-count}, if $s = 2^{O(\noverdepth)}$, $d\leq n$, $|F| = n^{O(1)}$, and $\delta 
= O(s)$ then before splitting, the formula has $2^{O(\noverdepth)} \cdot n^{O(1)}$ 
clauses containing 
at most $O(\log n + \log s)$ variables each. As a consequence, after splitting, $\eREF_d(F)$ has at 
most $O(\log n + \log s)\cdot2^{O(\noverdepth)} \cdot n^{O(1)}= 2^{O(\noverdepth)} \cdot 
n^{O(1)}$ clauses, each of which contains at most $3$ variables. We record this observation below.

\begin{remark}\label{rem:clause-count-3}
The formula $\eREF_d(F)$ has at 
most $2^{O(\noverdepth)} \cdot 
n^{O(1)}$ clauses, each of which contains at most $3$ variables.
\end{remark}

\subsection{Main Theorem}
We prove our main theorem for an arbitrary expander graph, and we then instantiate this theorem with two different graphs. 

\begin{theorem}
    \label{lem:extended-ref}
    Given a $3$-CNF formula $F$ on $n$ variables, with $ \poly(n)$ clauses, and positive integers $d 
    \leq n$ and $t \leq 2^{\noverdepth}$, and a $(t\cdot 2^{n/d},2^{n/d},\delta,r,c)$-bipartite 
    expander graph
    $G$ there is a deterministic algorithm running in time $2^{O(\noverdepth)} \cdot n^{O(1)}$  that 
    constructs a $3$-CNF formula
    $\eREF_d(F)$ 
    which satisfies the following properties:
    \begin{enumerate}
        \item $\eREF_d(F)$ has $2^{\Theta({n/d})} \cdot n^{\Theta(1)}$ variables and 
        $2^{\Theta({n/d})} 
        \cdot n^{\Theta(1)}$ clauses.
        \item $\eREF_d(F)$ has width $O(1)$;
        \item $\eREF_d(F)$ admits Resolution proofs of depth at most $O(d\log \delta + \log n)$ when $F$ is satisfiable;
        \item $\eREF_d(F)$ requires Resolution proofs of width at least $2^{\Omega(r(c-1)/n)}$
        when $F$ is unsatisfiable. \label{it:thm-extended-ref-unsat}
\end{enumerate}
\end{theorem}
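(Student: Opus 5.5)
Items 1 and 2 are bookkeeping. The algorithm writes down the variables and clauses listed above, computing each $\sur_{z,\col(B)}$ lexicographically from $G$. It then applies the splitting transformation, which gives width $3$. For the counts I use \myref{rem:var-count}, \myref{rem:clause-count} and \myref{rem:clause-count-3} with $s = td \le d\,2^{n/d}$ and $|F|=\poly(n)$. These give upper bounds $2^{O(n/d)}n^{O(1)}$, and the same bound holds for the running time. For the matching lower bounds, the $\summlit^B_{i,z}$ variables alone number $2^{n/d}$ per block, and each leaf has $\Theta(n)$ variables $\lit^B_\ell$.

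For item 3, I use the query characterisation of Resolution depth. It suffices to give a decision tree that, on every assignment to the variables of $\eREF_d(F)$, finds a falsified clause using $O(d\log\delta+\log n)$ queries. Unsplit clauses have width $O(\log|F|+\log\delta)$, so I first design the tree for the unsplit formula. When that tree finds a falsified unsplit clause $C$, I locate a falsified split clause by querying the split variables of $C$ with binary search along its chain. This costs $O(\log\log n+\log\log\delta)$ extra queries, or $O(\log n + \log \delta)$ if done naively, which is affordable.

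The Prover fixes a satisfying assignment $x^*$ of $F$ and maintains an enabled current block $B$ with the invariant that each $\summlit^B_{i,z}$ with $z = x^*_{\text{segment }i}$, for $i\le\layer(B)$, is known to be $1$. The invariant means no literal of $B$ is satisfied by $x^*$ on the first $\layer(B)$ segments. The walk starts at the root, querying $\enable^{s+1}$ (axiom~\ref{axiom:root-enabled}). At a non-leaf $B$ in layer $i$, let $z$ be the restriction of $x^*$ to segment $i+1$. The Prover queries the $\log\delta$ bits $\point\bin^B_{z,\cdot}$, which determine a candidate child $B'$, and then queries $\point^B_{B',z}$. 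A $0$ falsifies axiom~\ref{axiom:point-binary-pointer}. Next it queries $\enable^{B'}$ (else axiom~\ref{axiom:must-enable} is falsified) and $\summlit^{B'}_{i+1,z}$ (else axiom~\ref{axiom:pointer-to-summ-lits}). It then queries only the $i$ previously established summary variables $\summlit^{B'}_{i',z_{i'}}$; a $0$ falsifies axiom~\ref{axiom:pass-on-summ-lits}.

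A per-step cost of $i$ would give total cost $O(d^2)$, which is too many. So I would not re-query old summaries at each step. Instead, at the leaf, if a needed $\lit$ fact fails, the Prover traces back to the step where the passed-on summary was lost. I read ``traces back'' as a binary search along the path of length $d$: query $\summlit^{B_k}_{i',z}$ at the midpoint block $B_k$, using $O(\log d)$ queries per segment. Even so, doing this for every segment is too costly, so I would only do it for the one segment that matters below. This lazy verification is the delicate part of the plan.

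At the leaf $B$, the Prover queries the $\log|F|$ bits $\weak\bin^B$, giving an axiom $A$, and then queries $\weak^B_A$ (axiom~\ref{axiom:weaken-binary-pointer}). Since $x^*$ satisfies $A$, some literal $\ell$ of $A$ is true under $x^*$. The Prover queries $\lit^B_\ell$. If it is $1$, axiom~\ref{axiom:must-appear-after-weak} is falsified. If it is $0$, let $i'$ be the segment of $\ell$; the Prover queries $\summlit^B_{i',z}$, and if that is $1$, axiom~\ref{axiom:summ-to-lits} is falsified. Otherwise it binary searches along the path for the step where $\summlit_{i',z}$ went from $1$ to $0$, exposing a falsified instance of axiom~\ref{axiom:pass-on-summ-lits} or~\ref{axiom:pointer-to-summ-lits}. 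That step needs only $O(\log d)$ queries, and the $\enable$ and pointer bits on the path are already known. The total is $O(d\log\delta+\log|F|+\log d)=O(d\log\delta+\log n)$.

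For item 4, the main obstacle, I follow the \cite{DBLP:conf/lagos/Rezende21} strategy and reduce from the retraction pigeonhole principle, using Atserias--Müller-style adversaries. The plan is to show that a Prover--Adversary game with memory $w<2^{\epsilon r(c-1)/n}$ can be survived by the Adversary. The Adversary maintains a partial tree-like refutation of $F$ of small width, built from restrictions. It fixes a random restriction $\rho$ of $F$'s variables, and blocks are assigned to nodes of a decision tree for $F$ whose paths are consistent with partial assignments of weight $\le n$. Pointer queries are answered by matching children to available blocks through $G$. Expansion, via the Hall-type condition $|N(S)|\ge c|S|$ for $|S|\le r$, guarantees that the at most $w$ blocks currently in memory can be matched injectively to fresh children. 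The unsatisfiability of $F$ guarantees that every leaf meets a clause $A$ falsified along its path, so the $\weak$ and $\lit$ axioms can be honoured.

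The $2^{\Omega(r(c-1)/n)}$ bound comes from counting. Each remembered block costs $O(n)$ bits of ``state'', and expansion with surplus $(c-1)|S|$ supports $\approx r(c-1)$ boundary blocks, so the Adversary survives until the memory exceeds that boundary capacity, which happens only exponentially late in $r(c-1)/n$. I expect making the matching consistent with forgetting to be the hardest step. To handle it, I would reduce rPHP on the expander to falsifying a clause of $\eREF_d(F)$ and invoke known rPHP width lower bounds.
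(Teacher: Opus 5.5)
\textbf{Items 1--3.} These are sound and follow the paper. For item 3 the paper uses the same Prover walk down the children selected by $x^*$. It differs only in that it queries all $d$ leaf summaries and backchecks by a linear scan, whereas you check the one relevant segment and binary-search. Both give $O(d\log\delta+\log n)$.

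\textbf{Item 4: the reduction is missing.} Your Adversary sketch is not an argument. You never say how $\enable^B$, $\summlit^B_{i,z}$, $\lit^B_\ell$ and $\weak^B_A$ are answered under forgetting. Random restrictions and ``a partial tree-like refutation of $F$'' belong to size lower bounds and do not help with width. Your closing fallback, reducing $\rPHP(G)$ to $\eREF_d(F)$, is exactly the paper's route, but the reduction is the whole content and you do not give it.

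The missing idea is to make every block a \emph{static} function of $O(n)$ rPHP variables.
\begin{enumerate}
\item Pointer $z$ of the block in column $j$, in any layer, is pigeon $j\cdot 2^{n/d}+z$, and $\point\bin^B_{z,\cdot}$ is copied from that pigeon's $f$-bits.
\item A block's content comes from walking \emph{backwards}. From column $j$, if $g(j)=j'2^{n/d}+z$ and this is confirmed by $f$, i.e.\ $\sur_{g(j)}(f(g(j)))=j$, step to column $j'$ one layer up and record label $z$.
\item If the walk reaches the root, the recorded labels name a node of the full $2^{n/d}$-ary tree $\mathcal{T}$ of weakenings. Set $\enable^B=1$ and set $\summlit,\lit$ honestly from that node's clause. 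At a leaf this clause has width $n$, so $\weak$ can honestly name an axiom of $F$ it weakens, since $F$ is unsatisfiable.
\item Otherwise set $\enable^B=0$ and all summaries to $1$.
\end{enumerate}
Each decision tree has depth $O(d(n/d+\log t+\log\delta))=O(n)$.

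All axioms except \eqref{axiom:must-enable}, \eqref{axiom:pass-on-summ-lits} and \eqref{axiom:pointer-to-summ-lits} become tautologies. Suppose one of these three fails with $\enable^B=\point^B_{B',z}=1$. Then $g(\col(B'))$ is not the pigeon of $(B,z)$, since otherwise $B'$'s walk would pass through $B$ and inherit consistent values. Hence the rPHP axiom $\sur_j(f(j))=\col(B')\Rightarrow g(\col(B'))=j$ is violated. The expansion and Hall-matching reasoning you sketch then lives entirely inside the black box $\wRes(\rPHP(G))\ge (c-1)r/2$. This is combined with $\wRes(H)\le D\cdot\wRes(H')$ for $H\le_D H'$.

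\textbf{The exponential bound.} Your step ``survives \dots\ only exponentially late'' has no mechanism behind it. Through the reduction, the $O(n)$ cost of describing a block \emph{divides} the rPHP bound, giving $\wRes(\eREF_d(F))\ge (c-1)r/O(n)=\Omega(r(c-1)/n)$. That linear bound is what the paper's appendix argument actually establishes.

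The exponential form $2^{\Omega(r(c-1)/n)}$ cannot hold for all admissible $G$. Resolution width never exceeds the number of variables, which here is $2^{O(n/d)}n^{O(1)}$. But the complete bipartite graph of \myref{rk:expander} with $t=2^{n/d}$ has $r(c-1)/n=2^{n/d}/(2n)$. So $2^{\Omega(r(c-1)/n)}$ would exceed the variable count, for example when $d=\sqrt n$. The target for item 4 should be $\Omega(r(c-1)/n)$.
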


\begin{proof}

The first two claims follow directly from \myref{rem:clause-count-3}. 
The fourth claim follows from essentially the same argument given in
\cite{DBLP:conf/coco/CareniniR25} (reducing a hard for resolution pigeonhole principle to
finding a collision on the expander $G$), which we defer to 
\ifthenelse{\conferenceversion=1}{%
the full-length version of this paper.}{%
\myref{sec:lower-bound}.}

The third claim follows from the guaranteed success of the following
$O(d\log \delta + \log n)$ depth prover in the standard Prover-Delayer
game~(see \myref{sec:proof-complexity}) used to
analyse bounds for resolution proofs.
We lay out the strategy the prover takes in each of the three cases: root where $B=s+1$, internal where $B \leq s$ and $\layer(B) < d$, leaf where $\layer(B) = d$, and backchecking which is done after finding a leaf which is a valid weakening.
For each of these, we specify the types of clauses the prover could identify as being violated if the delayer gave answers that do not allow the prover to proceed in each step. In all cases it should be clear that the prover can identify a violated clause of 
one of the listed classes  \ref{axiom:root-enabled}--\ref{axiom:pointer-to-summ-lits}
immediately. 
Moreover, since all non-split clauses have at most $O(\log\delta + \log |F|) =O(\log\delta + \log n)$ variables, finding a split clause which is violated can be done in additional depth $O(\log\delta + \log n)$ by querying all the variables in the clause and all the split variables associated with the clause. 

The prover critically knows an assignment $x^*$ which satisfies $F$ and
will use this knowledge in implementing their strategy. In discussing the
strategy below, we apply subscripts to assignment strings $x^*_{i}$ and
summary strings $z^{j+1}_{i}$ with these subscripts serving as indices specifying the $i$\textsuperscript{th} bit in the string.

\myparagraph{Invariance condition:}

After applying the root case and $k$ internal steps of the strategy (but
not having yet applied the leaf case or the backchecking step), the following invariance conditions are maintained:
\begin{itemize}
    \item for all $j \in 0 \cup [k]$ the prover knows the variables $\enable^{B_j}$ and 
    $\point_{B_{j+1},z^{j+1}}^{B_j}$ which are all equal to $1$;
    \item $B_0 = s+1$;
    \item $\forall j \in [\formuladepth], \forall i \in [\noverdepth]: z^{j+1}_i = x^*_{j\noverdepth+i}$.
\end{itemize}

\myparagraph{Root Case:} 
Learn that $s+1$ is enabled by querying $\enable^{s+1}$ which will equal 1, otherwise 
\eqref{axiom:root-enabled} is violated. 
Learn the index $B_1$ of the child which is unsatisfied by $x^*$ by 
querying the $\log \delta$ variables $\point\bin^{s+1}_{z^1,j}$  where $z^{1}_{i} = x^*_{i}$.
Finally, query the variable $\point^{s+1}_{B_1,z}$ which must be $1$ otherwise
\eqref{axiom:point-binary-pointer} is violated.

\myparagraph{Internal Step:}
Learn that $B_j$ is enabled by querying $\enable^{B_j}$ which will equal 1 
\eqref{axiom:must-enable}. Learn the index $B_{j+1}$ of the child of $B_j$ which is unsatisfied by 
$x^*$ by querying the $\log \delta$  variables $\point\bin^{B_j}_{z^{j+1},j'}$  where $z^{j+1}_{i} = 
x^*_{j\noverdepth+i}$.
Finally,  query the variable  $\point^{B_j}_{B_{j+1},z}$ which must   be $1$ otherwise
right side of \eqref{axiom:point-binary-pointer}   is violated.

\myparagraph{Leaf Case:} 
Learn that $B_j$ is enabled by querying $\enable^{B_j}$ which will equal $1$, otherwise 
\eqref{axiom:must-enable} is violated. Query all $\summlit_{i,z^{i}}^{B_j}$ where each $z^i$ is 
specified 
according to the system in the internal step.
If they are all $1$ then, learn the axiom $A\in [|F|]$ which $B_j$ is a weakening of
by querying the $\log |F|$ variables $\weak\bin^{B_{j}}_{j'}$. 
Then query the variable
$\weak^{B_{j}}_A$ 
which must be $1$ otherwise it would violate \eqref{axiom:weaken-binary-pointer}.
Since $x^*$ is a satisfying assignment, one of \eqref{axiom:must-appear-after-weak} or 
\eqref{axiom:summ-to-lits} will be violated. 
We are left with the case where at least one of the $\summlit_{i,z^{i}}^{B_j}$  variables
is $0$.

\myparagraph{Backchecking:} 
For the indices $i$ and $j$ for which $\summlit_{i,z^{i}}^{B_j}$ was found to be $0$, query all
$\summlit_{i,z^{i}}^{B_\kappa}$ for $\kappa \in [i+1,j-1]$. If
$\summlit_{i,z^{i}}^{B_{i+1}} = 0$ then the appropriate
\eqref{axiom:pointer-to-summ-lits} is violated. Otherwise the prover finds a $\kappa$ such that
$\summlit_{i,z^{i}}^{B_{\kappa}} = 1$ but $\summlit_{i,z^{i}}^{B_{\kappa+1}} = 0$, which implies
the appropriate \eqref{axiom:pass-on-summ-lits} is violated. 

In the root case, and each of the $d$ internal steps, the prover queries $O(\log \delta)$ variables; in 
the leaf case, the prover queries $O(\log |F|)$ variables; and in the backchecking case, the prover 
queries at most $d$ variables. In total, it makes at most $O(d \log \delta + \log |F|)$ queries and is 
guaranteed to find a violated clause.
\qedhere

\end{proof}

From \myref{lem:extended-ref} we obtain two corollaries.
First, using the fact that an unbalanced complete bipartite graph is an
expander~(\myref{rk:expander}), we obtain the
following.
\begin{corollary}

    \label{lem:extended-ref-det}
    Given a 3-CNF formula $F$ on $n$ variables, with $ \poly(n)$ clauses, and positive integers $d 
    \leq n$ 
    and $t \leq 2^{\noverdepth}$, there is a deterministic algorithm running in time 
    $2^{O(\noverdepth)} \cdot n^{O(1)}$  that constructs a $3$-CNF formula
    $\eREF_d(F)$ 
    which satisfies the following properties":
    \begin{enumerate}
        \item $\eREF_d(F)$ has $2^{O({n/d})} \cdot n^{O(1)}$ variables and $2^{O({n/d})} 
        \cdot n^{O(1)}$ clauses;
        \item $\eREF_d(F)$ has width $O(1)$;
        \item $\eREF_d(F)$ admits Resolution proofs of depth at most $O(d\log t + \log n)$ when $F$ is satisfiable;
        \item $\eREF_d(F)$ requires Resolution proofs of width at least $2^{\Omega(t/n)}$
        when $F$ is unsatisfiable.
\end{enumerate}
\end{corollary}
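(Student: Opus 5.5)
We obtain the corollary by instantiating \myref{lem:extended-ref} with $G$ the complete bipartite graph, with $t\cdot 2^{n/d}$ vertices on the side indexed by pairs $(\col(B),z)$ and $t$ vertices on the side indexed by columns $\col(B')$. In this graph every potential $z$\textsuperscript{th} child of a block $B$ may be any block of the next layer.

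By \myref{rk:expander}, this graph is an expander with left degree $\delta = t$, expansion threshold $r = t/2$, and expansion factor $c = 2$. The remark is phrased for a complete bipartite graph with right side of size $n$, so here the right side has size $t$. Expansion is immediate. Any nonempty set $S$ of left vertices has $N(S)$ equal to the entire right side, of size $t$. Hence for $|S| \le t/2$ we get $|N(S)| = t \ge 2|S|$. Also $\delta = t$ is a power of two whenever $t$ is; otherwise we use \myref{rk:divisibility}.

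We then read off the four items from \myref{lem:extended-ref}.
\begin{enumerate}
    \item Since $t \le 2^{n/d}$, the sizes of the variable and clause sets stay $2^{O(n/d)} \cdot n^{O(1)}$. The bound $\delta = t \le 2^{n/d}$ is compatible with the hypothesis $\delta = O(s)$ used before \myref{rem:clause-count-3}, because $s = td \ge t$.
    \item The width $O(1)$ carries over unchanged.
    \item The depth bound $O(d\log\delta + \log n)$ becomes $O(d \log t + \log n)$.
    \item The width lower bound $2^{\Omega(r(c-1)/n)}$ becomes $2^{\Omega((t/2)\cdot 1/n)} = 2^{\Omega(t/n)}$.
\end{enumerate}
The running time is unaffected. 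Constructing the complete graph and the lexicographic surjections $\sur_{z,\col(B)}$ takes time polynomial in $t\cdot 2^{n/d} \le 2^{2n/d}$, so the total stays within $2^{O(n/d)}\cdot n^{O(1)}$.

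The one point needing care is that the parameters in \myref{lem:extended-ref} name the graph as a $(t\cdot 2^{n/d}, 2^{n/d}, \delta, r, c)$-expander, whereas the construction uses $t$ right vertices. The fix is to follow the construction itself, where the right side indexes the $t$ columns, and apply the remark with right side size $t$.

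The lower bound comes from the deferred argument in \myref{sec:lower-bound}. That argument only uses expansion of sets up to size $r$, which the complete graph satisfies trivially. So the main obstacle is bookkeeping, namely matching the parameter roles correctly, rather than any new combinatorics.
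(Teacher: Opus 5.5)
Your proposal is the paper's argument: it plugs the complete bipartite graph into Theorem~\ref{lem:extended-ref} via Remark~\ref{rk:expander} with $\delta = t$, $r = t/2$, $c = 2$. You are also right that the right side must have $t$ vertices, as in the construction and Appendix~\ref{sec:lower-bound}, not $2^{n/d}$ as the theorem statement says.

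One caveat, which you inherit from the paper rather than introduce: the reduction in Appendix~\ref{sec:lower-bound} only yields width $\Omega(r(c-1)/n) = \Omega(t/n)$, so the stated $2^{\Omega(t/n)}$ cannot hold for all admissible $t$. Resolution width never exceeds the number of variables, $2^{O(n/d)} n^{O(1)}$, and this is far below $2^{\Omega(t/n)}$ when, e.g., $d = \sqrt{n}$ and $t = 2^{n/d}$.
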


Using a randomised construction of expanders~(\myref{lem:random-expander-constant}), we can improve the above construction in the
following way.
\begin{corollary}
    \label{lem:extended-ref-rand}
    Given a 3-CNF formula $F$ on $n$ variables, with $ \poly(n)$ clauses, and a positive integer 
    $d\leq n$, 
    there is a randomized algorithm running in time $2^{O(n/d)} \cdot
    n^{O(1)}$  that
    constructs a 
    $3$-CNF formula
    $\eREF_d(F)$ 
    which, with high probability, satisfies the following properties:
    \begin{enumerate}
        \item $\eREF_d(F)$ has $2^{O({n/d})} \cdot n^{O(1)}$ variables and $2^{O({n/d})} 
        \cdot n^{O(1)}$ clauses;
        \item $\eREF_d(F)$ has width $O(1)$;
        \item $\eREF_d(F)$ admits Resolution proofs of depth at most $O(d + \log n)$ when $F$ is satisfiable;
        \item $\eREF_d(F)$ requires Resolution proofs of width at least $2^{2^{\Omega(n/d)}}$
        when $F$ is unsatisfiable.
\end{enumerate}
\end{corollary}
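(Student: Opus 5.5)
The plan is to instantiate \myref{lem:extended-ref} with a random bipartite expander from \myref{lem:random-expander-constant} and read off the four properties. We work with the graph in the orientation actually used in the construction: left side of size $t\cdot 2^{n/d}$ (pairs $(\col(B),z)$), right side of size $t$ (columns of the next layer), each left vertex of degree $\delta$. \myref{lem:random-expander-constant} needs the left side to be the square of the right side, so we take $t = 2^{n/d}$. The left side then has $t\cdot 2^{n/d} = t^2$ vertices and the right side has $t$.

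With this choice, \myref{lem:random-expander-constant} applied with $m = t^2$ and right side $t$ gives a random graph $G\sim\mathbf{G}(t^2,t,\delta)$ with $\delta = O(1)$. With high probability it is a $(t^2,t,\delta,r,\delta/2)$-bipartite expander with $r = \Omega(\sqrt t) = 2^{\Omega(n/d)}$. We take $\delta$ to be a sufficiently large constant, at least $4$, so that $c = \delta/2 \ge 2$. The randomized algorithm is:
\begin{enumerate}
\item sample $G$ by choosing $\delta$ random right-neighbours for each of the $t^2 = 2^{O(n/d)}$ left vertices, in time $2^{O(n/d)}$;
\item run the deterministic construction of \myref{lem:extended-ref}, in time $2^{O(n/d)}\cdot n^{O(1)}$.
\end{enumerate}
The condition $t\le 2^{n/d}$ required by the theorem holds with equality.

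Now read off the properties. Items 1 and 2 do not depend on which graph is used. For item 3, the depth bound $O(d\log\delta+\log n)$ becomes $O(d+\log n)$ because $\delta = O(1)$. Item 4 holds on the high-probability event that $G$ is an expander. On that event the width lower bound is
\[
2^{\Omega(r(c-1)/n)} = 2^{\Omega(2^{\Omega(n/d)}/n)} .
\]
When $n/d$ is at least a large constant multiple of $\log n$, the factor $1/n$ is absorbed, giving $2^{2^{\Omega(n/d)}}$. In the remaining regime $n/d = O(\log n)$ the claimed bound is only polynomial in $n$ in the exponent, and I would cover it by adjusting constants, or by noting that the unsatisfiable-case width is always at least polynomial.

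The main obstacle is matching the orientation and normalisation of the graph in \myref{lem:random-expander-constant} with the $(t\cdot 2^{n/d}, 2^{n/d},\delta,r,c)$ convention of \myref{lem:extended-ref}. Expansion has to go from the pairs $(\col(B),z)$ to the next layer's columns, and the lemma has to deliver this for unbalanced sides of sizes $t^2$ and $t$. I would first check that the deferred lower-bound argument of \myref{sec:lower-bound} uses exactly this orientation. The secondary issue is absorbing the $1/n$ factor in the exponent, which needs a short case split on the size of $n/d$.
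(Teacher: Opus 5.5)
Your route is the paper's: it obtains this corollary simply by plugging the random expander of \myref{lem:random-expander-constant} into \myref{lem:extended-ref}. Your choice $t=2^{n/d}$ is exactly what reconciles the lemma's shape ($t^2$ left vertices, $t$ right vertices) with the $t\cdot 2^{n/d}$ pigeons and $t$ holes actually used in \myref{sec:lower-bound}. Items 1--3 go through as you say.

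The genuine problem is item 4, and it goes beyond the ``secondary'' $1/n$ issue: as written, item 4 is false. Every unsatisfiable CNF on $N$ variables has a Resolution refutation of width at most $N$ (resolve up a complete decision tree over all variables). By your own item 1, $N=2^{O(n/d)}n^{O(1)}$, which is far below $2^{2^{\Omega(n/d)}}$ in the regimes that matter, e.g.\ $d=\log n$, $d=\sqrt n$, or $d=n/\log n$.

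This is why your patch for $n/d=O(\log n)$ cannot work. At $n/d=\log n$ you have $r=\Theta(\sqrt n)$, so \myref{lem:extended-ref} gives only the trivial bound $2^{O(n^{-1/2})}$. The target there is $2^{n^{\epsilon}}$, for whatever $\epsilon>0$ the $\Omega$ hides, and this already exceeds $N=n^{O(1)}$. No adjustment of constants and no polynomial width bound can reach it.

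Your derivation for $n/d\ge C\log n$ only appears to work because you took the bound $2^{\Omega(r(c-1)/n)}$ in item 4 of \myref{lem:extended-ref} at face value. That bound carries a spurious exponential; with your parameters it too exceeds $N$ once $n/d\ge C\log n$. What \myref{sec:lower-bound} actually proves is weaker. It combines $\wRes(\rPHP(G))\ge (c-1)r/2$ from \myref{thm:widthlb} with the depth-$O(n)$ reduction and \myref{lem:reduction}, giving $\wRes(\eREF_d(F))=\Omega((c-1)r/n)$.

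With $r=\Omega(2^{n/(2d)})$ and $c-1=\Theta(1)$, this gives $\Omega(2^{n/(2d)}/n)$, i.e.\ $2^{\Omega(n/d)}$ once $n/d\ge C\log n$. That is the correct form of item 4 and what your argument genuinely establishes. It is also enough for the later rETH applications, which only need width at least $\ell=\epsilon dm/n=n^{O(1)}$.
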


\section{Learning Lower Bounds Under ETH}

In this section, we formally 
prove \cref{thm:introthm1,thm:introthm2}.
We begin by defining a gap learning decision problem
to which we  reduce the SAT problem,
and from which we are able to reduce to other learning problems.

\begin{definition}
    \label{def:mcgl}
    Let $s_1, s_2, \eps : \bbn \to \bbn$
    be time-constructible functions.
    The computational problem
    $\mCFGL_{s_1}^{s_2}[\eps]$ 
    (referred to as \emph{Monotone Circuit-Formula Gap Learning})
    receives as input
    $(1^n, \cald)$,
    where $\cald$ is a circuit with
    $n+1$ output bits and size $n^2$
    (here, we identify the circuit with its representation for simplicity of
    notation).
    We interpret $\cald$ as a distribution
    of pairs $(x,b) \in \blt^n \times \blt$,
    obtained by sampling a uniformly random input $r$
    and outputting $\cald(r)$.
    The task is to decide
    if there exists a monotone 
    formula
    $C$ 
    (with $n$ input bits and single output bit)
    of size $s_1$
    such that $\Pr_{(x,b) \gets \cald}[C(x)=b]=1$,
    or if every monotone circuit $C'$
    such that $\Pr_{(x,b) \gets \cald}[C'(x)=b] \geq 1/2+\eps$
    must have size at least $s_2$.
    The number $n$ is called the \emph{dimension} of the input.
\end{definition}

We record an observation about the sampling circuit that we use later on.
\begin{remark}
    \label{rk:mcgl-def}
    We observe that the size of the sampling circuit $\cald$ is chosen to be
    $n^2$ arbitrarily but without loss of
    generality. Indeed, given any family of inputs where the $n$-bit
    samplers have size $s = n^{O(1)}$, if $s < n^2$ we can just add dummy gates to the
    circuit; if $s > n^2$ we can pad the output $x$ with $\sqrt{s}-n$ 0's;
    in both cases, we reduce to the case $s = n^2$ in polynomial-time.
\end{remark}

We now implement the strategy outlined in the introduction.
We take a 3-CNF formula~$F$ on $n$ variables, construct the formula $\eREF_d(F)$
satisfying either \myref{lem:extended-ref-det} 
(in case of a deterministic reduction)
or \myref{lem:extended-ref-rand} 
(in case of a randomised reduction).
Using a deterministic reduction,
taking $m$ such that $n \leq m \ll 2^{n/d}$, $t \leq 2^{n/d}$
and $\ell = \eps dm/n \ll 2^{t/n}$ for small enough $\eps$ in 
\myref{thm:constructive-lifting} gives that the
function $f_{\eREF_s(F),m}$ has
$N = 2^{O(n/d)} \cdot n^{O(1)}$ input bits
and the following monotone complexity:
\begin{enumerate}
    \item At most $m^{{O}(d\log t + \log n)} = 2^{O( (d\log t+\log n) \log m )} $
    when $F$ is satisfiable, and
    \item At least $2^{\Omega(\ell)} \geq 2^{\Omega(m/\log N)} $
    when
        $F$ is unsatisfiable.
\end{enumerate}
Using a randomised reduction, the value of $t$ is irrelevant
and
the resulting monotone complexity is
\begin{enumerate}
    \item At most $m^{{O}(d + \log n)} = 2^{O( (d+\log n) \log m )} $
    when $F$ is satisfiable, and
    \item At least $2^{\Omega(\ell)} \geq 2^{\Omega(m/\log N)} $
    when
        $F$ is unsatisfiable.
\end{enumerate}
Since we are reducing from ETH, the value $m = \poly(n)$
is already good enough, as a monotone circuit lower bound beyond $2^{\Omega(n)}$
(for an $N$-variate function)
is not of much use for reductions to Partial-MCSP or PAC-learning (since we would need to
output a circuit of this size or generate an exponential number of examples).
The value $d$ is then the main parameter we choose. There are two parameter regimes we
will care about:
\begin{enumerate}
    \item \emph{($d = \log n$)}. 
        We obtain a formula upper bound
        which is smaller than the number $N$ of input bits (a ``junta'').
        The runtime lower bound based on ETH becomes
        $N^{\Omega(\log \log N)}$.
    \item \emph{($d \approx \sqrt{n}$)}.
        This gives us a polynomial vs.\ quasipolynomial gap.
        The runtime lower bound becomes
        $N^{\tld{\Omega}(\log N)}$.
\end{enumerate}

\subsection{ETH-hardness of Monotone Circuit-Formula Gap Learning}

We first show the result for juntas.

\begin{theorem}[Junta Gap-Learning]
    \label{thm:sat-reduces-to-mcgl}
    Let $\gamma > 0$ be a constant.
    For large enough $n \in \bbn$,
    the 3SAT problem
    on $n$ variables
    reduces to
    $\mCFGL_{s_1}^{s_2}[\gamma]$
    in time
    $2^{O(n / \log n)}$,
    where
    $s_1(N) = 2^{O(\log \log N)^3}$
    and
    $s_2(N) = N^{\Omega(\log N)}$.
    Consequently, the problem
    $\mCFGL_{s_1}^{s_2}[\gamma]$
    needs time $N^{\Omega(\log \log N)}$ to be solved under ETH.
\end{theorem}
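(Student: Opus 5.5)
We compose the deterministic construction of \myref{lem:extended-ref-det} with the lifting algorithm of \myref{thm:constructive-lifting}, using parameters $d = \log n$ and $t = 2^{n/d}$. Given a 3-CNF $F$ on $n$ variables, we first build $\Phi := \eREF_d(F)$ in time $2^{O(n/\log n)}$. By \myref{lem:extended-ref-det}, $\Phi$ is a 3-CNF (so $k=3$) with $2^{O(n/\log n)}$ variables and clauses. When $F$ is satisfiable, $\ResDepth(\Phi) = O(d\log t + \log n) = O(n)$. That bound is far too weak, so the choice of $t$ must be revisited.

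Take instead $t = n^{c}$ for a large constant $c$. Then the satisfiable-case depth is $O(\log n\cdot\log n) = O(\log^2 n)$, while the unsatisfiable-case width is at least $2^{\Omega(t/n)} = 2^{n^{c-1}}$, which is more than enough.

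Next, run $A(\Phi,1^m)$ with $m = n^{c'}$ for a suitable constant $c'$. This produces a sampler $\cald$ of size $\poly(|\Phi| m^3) = 2^{O(n/\log n)}$ over $N := |\Phi| m^3 = 2^{\Theta(n/\log n)}$ bits. In particular $\log N = \Theta(n/\log n)$, so $\log n = \Theta(\log\log N)$. By \myref{rk:mcgl-def} we pad the sampler so that it has size $N^2$ and $N$ output bits. We must check that this padding (appending constant $0$ coordinates) does not change formula or circuit sizes, and that a monotone circuit on the padded inputs restricts to one on the original inputs. The reduction outputs $(1^N,\cald)$.

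In the satisfiable case, item 1 of \myref{thm:constructive-lifting} gives a monotone formula of size $m^{O(\log^2 n)} = 2^{O(\log^3 n)} = 2^{O((\log\log N)^3)} = s_1(N)$ that is exactly consistent with $\supp(\cald)$. In the unsatisfiable case, apply item 2 with $\ell = \min(\wRes(\Phi), m/(C\log(mN)))$ for a suitable constant $C$. Here $\log(mN) = O(n/\log n)$, so with $c' \ge 3$ we get $\ell = \Theta(m\log n/n)$, which is polynomial in $n$ and below the width bound. The advantage condition $\gamma \ge c\,\ell\log(mN)/m$ holds for constant $\gamma$ once $C$ is large relative to $1/\gamma$. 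The resulting size lower bound is $(m/(c\ell\log(mN)))^{\ell-1} \ge 2^{\ell-1}$. With $c'$ large, $\ell \ge n^2 \gg (\log N)^2$, so $2^{\ell} \ge N^{\Omega(\log N)} = s_2(N)$. This is the main step to get right: choosing $\ell$ so that the base of the power is at least $2$, the $\gamma$ constraint holds, and the exponent still beats $\log^2 N$.

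Finally, suppose some algorithm solves $\mCFGL_{s_1}^{s_2}[\gamma]$ in time $N^{o(\log\log N)}$. Composing it with the reduction decides 3SAT in time $2^{O(n/\log n)} + 2^{O(n/\log n)\cdot o(\log n)} = 2^{o(n)}$, contradicting ETH. Hence the problem needs time $N^{\Omega(\log\log N)}$.
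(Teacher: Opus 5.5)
Your proposal is correct and is essentially the paper's proof. Both use $d=\log n$, the complete-graph version of $\eREF_d(F)$ with polynomial $t$ (so $\ResDepth=O(\log^2 n)$), lifting with polynomial $m$ and $\ell\approx m\log n/n$ (so the base of the size bound is a constant and the $\gamma$-condition holds), then padding and the calculation $N^{\log\log N}=2^{O(n)}$; the paper simply fixes $t=n^2$, $m=n^3/\log^3 n$ and $\ell=\eps dm/n$.

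One caution concerns your phrase ``more than enough''. The unsatisfiable-case width that Appendix~\ref{sec:lower-bound} actually proves is $\Omega(t/n)$, not $2^{\Omega(t/n)}$: it comes from the $O(n)$-depth reduction from $\rPHP$ combined with Theorem~\ref{thm:widthlb}, and the exponential form would exceed the number of variables of $\eREF_d(F)$. Your argument survives this, because you only need $\ell=\Omega((\log N)^2)=\Omega(n^2/\log^2 n)$ and you defined $\ell$ as a minimum, so any $c\ge 3$ suffices. Here your freedom in $t$ is actually safer than the paper's choice $t=n^2$, whose proven width bound is only $\Omega(n)$, below its $\ell=\eps n^2/\log^2 n$.
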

\begin{proof}
Let $F$ be a given 3-CNF over $n$ variables.
Fix $d = \log n$.
Note that $\eREF_s(F)$ is a formula of width $O(1)$ over
$N = 2^{\Theta(n/\log n)}$
variables and with $N^{O(1)}$ clauses.
Let
\begin{equation*}
    m := n^3/(\log n)^3,
    \quad
    t := n^2.
\end{equation*}
and let $\ell = \eps dm/n$ for a sufficiently small constant $\eps > 0$.
The function $\fT := f_{\eREF_d(F), m}$
is defined on 
$n_0 := N^{O(1)} m^{O(1)} = 2^{O(n/\log n)}$ variables
(\myref{def:function-lifting}).
\cref{thm:constructive-lifting,lem:extended-ref}
imply that
$\fT$
can be computed
with monotone formulas of size
$s := m^{O(d \log t + \log n)} = 2^{O(\log^3 n)} = 2^{O(\log \log N)^3}$
when $F$ is satisfiable.
Moreover, 
when $F$ is unsatisfiable,
the function
requires
monotone circuits of size at least 
\begin{equation*}
    s' :=
    2^{\Omega(\ell)}
    =
    N^{\Omega(\log N)}
\end{equation*}
to be $\gamma$-approximated over the distribution 
$\cald :=
\cald^{\eREF_s(F),m}$
given by \myref{thm:constructive-lifting}, 
since we can take $\eps$ sufficiently smaller than $\gamma$.
This distribution can be constructed in time $2^{O(n/\log n)}$ given
$F$ and $m$.
A postprocessing taking $2^{O(n/\log n)}$-time may be needed to
adjust the description of the circuit to the required length (see
\myref{rk:mcgl-def}).
The ETH lower bound follows by noticing that
$N^{\log \log N} = 2^{O(n)}$.
\end{proof}

The result for polynomial vs.\ quasipolynomial gap follows similarly.

\begin{theorem}[Polynomial vs.\ quasipolynomial gap]
    Let $\gamma > 0$ be a constant.
    For large enough $n \in \bbn$,
    the 3SAT problem
    on $n$ variables
    reduces to
    $\mCFGL_{s_1}^{s_2}[\gamma]$
    in time
    $2^{O(\sqrt{n} \log n)}$,
    where
    $s_1(N) = N$
    and
    $s_2(N) = N^{\Omega(\log N)}$.
    Consequently, the problem
    $\mCFGL_{s_1}^{s_2}[\gamma]$
    needs time $N^{\Omega( (\log N)/(\log \log N)^2 )}$ to be solved under ETH.
    \label{thm:sat-reduces-to-mcgl-quasipoly}
\end{theorem}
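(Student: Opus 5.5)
Follow the proof of \cref{thm:sat-reduces-to-mcgl} with different parameters. Given a 3-CNF $F$ on $n$ variables, set $d := \sqrt{n}$ and build $\eREF_d(F)$ via \cref{lem:extended-ref-det}. This yields a width-$O(1)$ formula over $N_0 = 2^{O(\sqrt n)}$ variables and clauses, in time $2^{O(\sqrt n)}$. Choose $t := n^2 \le 2^{n/d}$, so the width lower bound in the unsatisfiable case is $2^{\Omega(t/n)} = 2^{\Omega(n)}$. Choose $m := n^{c}$ for a suitable constant $c$ (say $m = n^3$). Let $\ell := \eps d m/n$ with $\eps$ small, so $\ell \le \wRes$ in the unsatisfiable case. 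Apply \cref{thm:constructive-lifting} to $\eREF_d(F)$ and $m$. This gives a sampler for $\cald = \cald^{\eREF_d(F),m}$ over $n_0 = N_0 \cdot m^{O(1)} = 2^{O(\sqrt n)}$ bits, constructed in time $2^{O(\sqrt n)}$. In the satisfiable case, the depth bound $O(d\log t+\log n) = O(\sqrt n \log n)$ gives a consistent monotone formula of size $m^{O(\sqrt n\log n)} = 2^{O(\sqrt n \log^2 n)}$. In the unsatisfiable case, any monotone circuit $\gamma$-approximating $\cald$ has size at least $(m/(c\ell\log(mN_0)))^{\ell-1} = 2^{\Omega(\ell)}$. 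Here $\ell = \Theta(n^{c-1/2})$, and $\gamma \ge c\ell\log(mN_0)/m$ holds for constant $\gamma$ once $\eps$ is small relative to $\gamma$.

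The key new step is padding, so that $s_1(N) = N$. The satisfiable-case formula has size $S := 2^{O(\sqrt n\log^2 n)}$, which exceeds $n_0 = 2^{O(\sqrt n)}$. So modify the sampler to append $S - n_0$ constant-$0$ coordinates to each $x$. This leaves the formula size unchanged, since it ignores those coordinates, and it cannot help circuits: any circuit on the padded input restricts to one of no larger size on the original input. The dimension becomes $N := S = 2^{\Theta(\sqrt n \log^2 n)}$, the reduction runs in time $\poly(N) = 2^{O(\sqrt n\log^2 n)}$, and \cref{rk:mcgl-def} normalises the sampler size. The padding also fixes the constant in the exponent of $S$ so that $s_1(N) = N$ exactly; smaller formulas can be padded with dummy gates.

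For the gap, $\log N = \Theta(\sqrt n \log^2 n)$, so $\log^2 N = \Theta(n\log^4 n)$. We need $2^{\Omega(\ell)} \ge N^{\Omega(\log N)} = 2^{\Omega(n \log^4 n)}$, which requires $\ell \gtrsim n\log^4 n$. Taking $m := n^2$ makes $\ell = \Theta(n^{3/2})$, which suffices. The side conditions still hold: $\ell \le 2^{\Omega(n)}$ and $m \gg \ell\log(mN_0)$, since $\ell\log(mN_0) = \Theta(n^{3/2}\cdot\sqrt n) = \Theta(n^2)$. That last margin is only constant-factor, so I would instead take $m := n^2\log^2 n$ and $\ell = \eps n^{3/2}\log^2 n$ to keep $\gamma$ constant. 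These parameter constraints pull against each other, and checking them is the main delicate step.

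For the ETH consequence, suppose an algorithm solves the problem in time $N^{o(\log N/(\log\log N)^2)}$. Since $\log\log N = \Theta(\log n)$, this time is $2^{o(\log^2 N/\log^2 n)} = 2^{o(n\log^4 n/\log^2 n)}$. This bound is too weak as it stands, which indicates that the $\log n$ factors in the padding must be tracked precisely. With $\log N = \Theta(\sqrt n\log n)$ (depth $\sqrt n\log n$ over gadget cost $\log m$; I would recheck whether $\log t$ can be absorbed by rebalancing $d = \sqrt{n}/\log n$), the running time becomes $2^{o(n)}$, which together with the $2^{o(n)}$ reduction time refutes ETH. Getting the polylog factors in $d$, $m$ and the padding to line up with the claimed $(\log\log N)^2$ loss is what I expect to be the main obstacle.
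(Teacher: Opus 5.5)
Your construction, as committed, does not prove the statement, and the missing step is the one you flag at the end. With $d=\sqrt n$ the satisfiable-case formula has size $m^{O(d\log t+\log n)}=2^{\Theta(\sqrt n\log^2 n)}$, which is far above the $2^{O(\sqrt n)}$ input bits. Padding up to that size makes the dimension $N=2^{\Theta(\sqrt n\log^2 n)}$. The reduction then takes time $\mathrm{poly}(N)=2^{O(\sqrt n\log^2 n)}$ instead of the claimed $2^{O(\sqrt n\log n)}$. Moreover, since now $n=\Theta\bigl((\log N)^2/(\log\log N)^4\bigr)$, ETH only gives $N^{\Omega(\log N/(\log\log N)^4)}$. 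Padding goes in the wrong direction: every bit by which you inflate $N$ is paid for in the ETH bound. Your last paragraph names the right remedy ($d\approx\sqrt n/\log n$) but leaves it as something to recheck. That choice changes the padding step and the whole parameter verification, so as it stands the theorem is not established.

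The paper's proof takes $d=\alpha\sqrt n/\log n$ for a small constant $\alpha$ and does not pad at all. The formula then has $2^{\Theta(n/d)}=2^{\Theta(\sqrt n\log n/\alpha)}$ variables. So the lifted instance has dimension $n_0=2^{\Theta(\sqrt n\log n)}$ and is built in time $2^{O(\sqrt n\log n)}$. Meanwhile, for $m,t=\mathrm{poly}(n)$, the satisfiable-case formula has size $m^{O(d\log t+\log n)}=2^{O(\alpha\sqrt n\log n)}\le n_0$ once $\alpha$ is small. Hence $s_1(N)=N$ holds directly. Then $\log N=\Theta(\sqrt n\log n)$ and $\log\log N=\Theta(\log n)$ give $N^{\log N/(\log\log N)^2}=2^{O(n)}$, which is the claimed ETH bound.

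When you redo the gap with this $d$, keep $m$ polynomially large. Your own constraints $\ell\gtrsim(\log N)^2$ and $m\gtrsim\ell\log(mN_0)$ force $m\gtrsim(\log N)^3$. For example, $m=n^2$ gives $\ell=\epsilon\alpha n^{3/2}/\log n$, and it leaves the formula bound intact since $\log m=O(\log n)$. The paper writes $m=(\log N)^2$, but then $\ell=\Theta(\log N)$ and $2^{\Omega(\ell)}$ is only $N^{\Omega(1)}$, so trust your own accounting there. Likewise, the appendix reduction certifies Resolution width only $\Omega(t/n)$, not $2^{\Omega(t/n)}$. To guarantee that $\ell$ is below the width, take $t=n^3$; $t$ enters the upper bound only through $\log t$.
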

\begin{proof}
Let $F$ be a given 3-CNF over $n$ variables.
Fix $d = \alpha \sqrt{n}/\log n$ where $\alpha > 0$ is small enough constant.
Note that $\eREF_s(F)$ is a formula of width $O(1)$ over
$N = 2^{\Theta(\sqrt{n} \log n)}$ 
variables and with $N^{O(1)}$ clauses.
Let
\begin{equation*}
    m := (\log N)^2 \asymp n \log^2 n,
    \quad
    t := n^2.
\end{equation*}
and let $\ell = \eps dm/n$ for a sufficiently small constant $\eps > 0$.
The function $\fT := f_{\eREF_d(F), m}$
is defined on 
$n_0 := N^{\Theta(1)} m^{\Theta(1)} = 2^{\Theta(\sqrt{n} \log n)} = N^{\Theta(1)}$
variables
(\myref{def:function-lifting}).
\cref{thm:constructive-lifting,lem:extended-ref}
imply that
$\fT$
can be computed
with monotone formulas of size
\begin{equation*}
    s := m^{O(d \log t + \log n)} = 2^{O(d \log^2 n + \log^2 n)} = 2^{O(\alpha \sqrt{n} \log n)}
    \leq n_0,
\end{equation*}
when $F$ is satisfiable and $\alpha$ is small enough.
Moreover, 
when $F$ is unsatisfiable,
the function
requires
monotone circuits of size at least 
\begin{equation*}
    s' :=
    2^{\Omega(\ell)}
    =
    N^{\Omega(\log N)}
\end{equation*}
to be $\gamma$-approximated over the distribution 
$\cald :=
\cald^{\eREF_s(F),m}$
given by \myref{thm:constructive-lifting}, 
since we can take $\eps$ sufficiently smaller than $\gamma$.
This distribution can be constructed in time $2^{O(\sqrt{n} \log n)}$ given
$F$ and $m$.
The ETH lower bound follows by noting that
$N^{(\log N)/(\log\log N)^2} = 2^{O(n)}$.
\end{proof}

\subsection{rETH-hardness of Monotone Circuit-Formula Gap Learning}
\label{sec:reth}

Next, we show slightly stronger versions of these results but under rETH.

\begin{theorem}[Junta Gap-Learning under rETH]
    \label{thm:sat-reduces-to-mcgl-junta-reth}
    Let $\gamma > 0$ be a constant.
    For large enough $n \in \bbn$,
    the 3SAT problem
    on $n$ variables
    randomly reduces to
    $\mCFGL_{s_1}^{s_2}[\gamma]$
    in time
    $2^{O(n / \log n)}$,
    where
    $s_1(N) = (\log N)^{O(\log \log N)}$
    and
    $s_2(N) = N^{\Omega(\log N)}$.
    Consequently, the problem
    $\mCFGL_{s_1}^{s_2}[\gamma]$
    needs time $N^{\Omega(\log \log N)}$ to be solved under rETH.
\end{theorem}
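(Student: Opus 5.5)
The plan is to repeat the proof of \myref{thm:sat-reduces-to-mcgl}, but to use the randomised construction of \myref{lem:extended-ref-rand} in place of \myref{lem:extended-ref-det}. Given a 3-CNF $F$ on $n$ variables, fix $d = \log n$. With high probability, \myref{lem:extended-ref-rand} produces in time $2^{O(n/\log n)}$ a width-$O(1)$ formula $\eREF_d(F)$ over $N = 2^{\Theta(n/\log n)}$ variables with $N^{O(1)}$ clauses. Its parameters are as follows.
\begin{itemize}
    \item When $F$ is satisfiable, it has Resolution depth $O(d + \log n) = O(\log n)$. The factor $\log t$ from the deterministic version is gone because $\delta = O(1)$.
    \item When $F$ is unsatisfiable, it has Resolution width at least $2^{2^{\Omega(n/d)}}$. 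This is far more than we need.
\end{itemize}

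Next, choose the gadget size $m := n^3/(\log n)^3$ as before, and set $\ell := \eps dm/n$ for a small constant $\eps$. We check that this choice of $\ell$ is admissible. First, $\ell \leq \wRes(\eREF_d(F))$ in the unsatisfiable case. Second, $\gamma \geq c\,\ell \log(mN)/m$: here $\log(mN) = O(n/\log n)$, so the right-hand side is $O(\eps d/\log n) = O(\eps)$, and this is below $\gamma$ once $\eps$ is small relative to $\gamma$. Then apply \myref{thm:constructive-lifting} to $\eREF_d(F)$ and $m$, which yields a sampler for $\cald := \cald^{\eREF_d(F),m}$ in time $\poly(N \cdot m^{O(1)}) = 2^{O(n/\log n)}$. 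The resulting dimension is $n_0 = N^{O(1)}m^{O(1)} = 2^{\Theta(n/\log n)}$, and we pad the circuit as in \myref{rk:mcgl-def}.

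In the satisfiable case, item~1 of \myref{thm:constructive-lifting} gives a monotone formula of size
\[
m^{O(\log n)} = 2^{O(\log^2 n)}
\]
that agrees with the labels on the whole support. Since $\log n_0 = \Theta(n/\log n)$, we have $\log n = \Theta(\log\log n_0)$. So this size equals $(\log n_0)^{O(\log\log n_0)}$, which matches $s_1$.

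In the unsatisfiable case, item~2 gives that any monotone circuit $\gamma$-approximating the labels over $\cald$ has size at least
\[
\left(\frac{m}{c\,\ell\log(mN)}\right)^{\ell-1}.
\]
The base is $\Omega(1/\eps)$, which is at least $2$ for small $\eps$, so the bound is $2^{\Omega(\ell)}$. Since $\ell = \Theta(n^2/(\log n)^2) = \Theta((\log n_0)^2)$, we get $2^{\Omega(\ell)} = n_0^{\Omega(\log n_0)} = s_2$.

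The whole reduction is randomised and may fail with small probability, namely when the sampled graph is not an expander. This is acceptable for a randomised reduction, and the error can be made one-sided or amplified if needed.

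For the runtime consequence, suppose there were an $n_0^{o(\log\log n_0)}$-time algorithm for $\mCFGL_{s_1}^{s_2}[\gamma]$. Composing it with the reduction would solve 3SAT in randomised time
\[
2^{O(n/\log n)} \cdot 2^{o(\log\log n_0)\cdot n/\log n} = 2^{o(n)},
\]
using $\log\log n_0 = \Theta(\log n)$. This contradicts rETH.

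The main step requiring care is the parameter check. We must confirm that the small-advantage lifting hypothesis $\gamma \geq c\,\ell\log(mN)/m$ holds with $\log(mN) = \Theta(n/\log n)$. This is where the choice $m = n^3/\log^3 n$ together with $d = \log n$ is tuned, so that $\ell$ is quadratic in $\log n_0$ while $\ell\log(mN)/m$ stays a small constant. I would verify this first, and in particular confirm that the width bound from \myref{lem:extended-ref-rand} comfortably exceeds $\ell$.
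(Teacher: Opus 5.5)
Your proposal is correct and follows the paper's proof essentially verbatim. You take $d=\log n$, use the randomised expander so the depth bound drops to $O(d+\log n)$, and set $m=n^3/(\log n)^3$ and $\ell=\eps dm/n$, with the same lifting and rETH bookkeeping; your explicit checks of $\gamma \ge c\,\ell\log(mN)/m$ and of the base of the size bound are exactly what the paper leaves implicit. One caveat: the width bound $2^{2^{\Omega(n/d)}}$ you quote cannot literally hold for a formula on $2^{O(n/d)}n^{O(1)}$ variables. The bound the appendix reduction actually gives is $\Omega(r(c-1)/n)=2^{\Omega(n/\log n)}$, which still far exceeds $\ell=\Theta(n^2/\log^2 n)$, so your argument is unaffected.
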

\begin{proof}
Let $F$ be a given 3-CNF over $n$ variables.
Fix $d = \log n$.
Note that $\eREF_s(F)$ is a formula of width $O(1)$ over
$N = 2^{\Theta(n/\log n)}$ 
variables and with $N^{O(1)}$ clauses.
Let
\begin{equation*}
    m := n^3/(\log n)^3,
\end{equation*}
and let $\ell = \eps dm/n$ for a sufficiently small constant $\eps > 0$.
The function $\fT := f_{\eREF_d(F), m}$
is defined on 
$n_0 := N^{O(1)} m^{O(1)} = 2^{O(n/\log n)}$ variables
(\myref{def:function-lifting}).
\cref{thm:constructive-lifting,lem:extended-ref}
imply that
$\fT$
can be computed
with monotone formulas of size
$s := m^{O(d + \log n)} = 2^{O(\log^2 n)} = 2^{O(\log \log N)^2}$.
when $F$ is satisfiable.
The rest of the proof is the same as in \myref{thm:sat-reduces-to-mcgl}.
\end{proof}

\begin{theorem}[Polynomial vs.\ quasipolynomial gap under rETH]
    \label{thm:sat-reduces-to-mcgl-quasipoly-reth}
    Let $\gamma > 0$ be a constant.
    For large enough $n \in \bbn$,
    the 3SAT problem
    on $n$ variables
    reduces to
    $\mCFGL_{s_1}^{s_2}[\gamma]$
    in time
    $2^{O(\sqrt{n})}$,
    where
    $s_1(N) = N$
    and
    $s_2(N) = N^{\Omega(\log N)}$.
    Consequently, the problem
    $\mCFGL_{s_1}^{s_2}[\gamma]$
    needs time $N^{\Omega(\log N)}$ to be solved under rETH.
\end{theorem}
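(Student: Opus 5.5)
We mirror the proof of \myref{thm:sat-reduces-to-mcgl-quasipoly}, but build $\eREF_d(F)$ with the randomised expander of \myref{lem:extended-ref-rand} rather than the complete bipartite graph. The randomised construction removes the $\log t$ factor from the depth bound, and this lets us take $d$ a $\log n$ factor larger. Given a 3-CNF $F$ on $n$ variables, fix $d := \alpha\sqrt{n}$ for a sufficiently small constant $\alpha>0$. \myref{lem:extended-ref-rand} then produces, in randomised time $2^{O(n/d)}n^{O(1)} = 2^{O(\sqrt n)}$, a 3-CNF $\eREF_d(F)$ that with high probability has the following properties:
\begin{itemize}
\item it has $N = 2^{\Theta(\sqrt n)}$ variables and $N^{O(1)}$ clauses;
\item it has Resolution depth $O(d+\log n)=O(\alpha\sqrt n)$ when $F$ is satisfiable;
\item it has Resolution width at least $2^{2^{\Omega(\sqrt n)}}$ when $F$ is unsatisfiable.
\end{itemize}
The width lower bound is far more than we need, so we may take any $\ell$ up to it.

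Next we apply \myref{thm:constructive-lifting} to $\eREF_d(F)$ with $m := (\log N)^2 \asymp n$ and $\ell := \eps m$ for a small constant $\eps$. The lifted function $f_{\eREF_d(F),m}$ lives on $n_0 = N^{O(1)}m^{O(1)} = 2^{\Theta(\sqrt n)}$ bits, and the sampler $\cald^{\eREF_d(F),m}$ is built in time $2^{O(\sqrt n)}$.

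In the satisfiable case, item (1) of \myref{thm:constructive-lifting} gives a consistent monotone formula of size
\[
m^{O(d+\log n)} = 2^{O(\alpha\sqrt n\log n)}.
\]
This is the main obstacle. The bound is $2^{\Theta(\alpha\sqrt n \log n)}$ while $n_0 = 2^{\Theta(\sqrt n)}$, so it is not below $n_0$ for any fixed $\alpha$. I see two ways to handle it.
\begin{itemize}
\item \textbf{First try.} Pad the samples with dummy zero coordinates up to dimension $n_0' := 2^{C\sqrt n\log n}$, so that $s_1(n_0')=n_0'$ holds. This only costs time $2^{O(\sqrt n\log n)}$, which suffices for a lower bound of $N^{\tilde\Omega(\log N)}$ but loses a $\log$ factor against the claimed $N^{\Omega(\log N)}$.
\item \textbf{Preferred route.} Shrink $m$ to $m := n/\log n$ and keep $d=\alpha\sqrt n$. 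Then $m^{O(d)} = 2^{O(\alpha\sqrt n\log n)}$ still exceeds $n_0$, so this alone does not help. Instead, set $d:=\alpha\sqrt{n}/\log n$ and keep $m \asymp n$. The formula size becomes $2^{O(\alpha\sqrt n)}$, and the dimension becomes $n_0 = 2^{\Theta(\sqrt n\log n)}$ with construction time $2^{O(\sqrt n \log n)}$. But then $\log n_0 \asymp \sqrt n\log n$, and with $m\asymp n$ the hardness $2^{\Omega(m)}$ is only $n_0^{\Omega(\log n_0/\log^2 n)}$.
\end{itemize}
So the real balancing act is among $m$, $d$, and the padding. I would first try to exploit the doubly-exponential width bound: take $m$ polynomially larger, e.g.\ $m = n^{1+\beta}$ with a gadget size affordable since $n_0 = N^{O(1)}m^{O(1)}$ stays $2^{\Theta(n/d)}$. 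If the statement's parameters turn out too tight, I would settle for the padded version.

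In the unsatisfiable case, item (2) of \myref{thm:constructive-lifting} applies with $\gamma \ge c\ell\log(mN)/m$, which holds for any constant $\gamma$ once $\eps \ll \gamma$. It shows that any monotone circuit $\gamma$-approximating over $\cald$ has size $(m/(c\ell\log(mN)))^{\ell-1} = 2^{\Omega(\ell)}$, and with $\ell\asymp\log^2 N$ this is $N^{\Omega(\log N)}$.

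After adjusting the sampler size via \myref{rk:mcgl-def}, the output is an instance of $\mCFGL_{s_1}^{s_2}[\gamma]$. The reduction errs only with the small probability that the random expander fails. Hence an algorithm for the problem running in time $N^{o(\log N)} = 2^{o(n)}$ (using $\log^2 N \asymp n$) would, combined with the $2^{O(\sqrt n)}$-time randomised reduction, decide 3SAT in randomised time $2^{o(n)}$, contradicting rETH.
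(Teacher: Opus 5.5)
There is a genuine gap; in fact there are two.

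\textbf{The unsatisfiable case fails as written.} You take $\ell := \eps m$ and assert that the hypothesis $\gamma \ge c\ell\log(mN)/m$ of \myref{thm:constructive-lifting} holds once $\eps \ll \gamma$. With $\ell=\eps m$, however, the right-hand side is $c\eps\log(mN)=\Theta(\eps\sqrt n)$, which exceeds every constant $\gamma$. Moreover, the base $m/(c\ell\log(mN))$ of the size bound is then below $1$, so the bound is vacuous. The admissible range is $\ell \le \gamma m/(c\log(mN))$, i.e.\ $\ell = O(dm/n)$, which is the paper's choice $\ell=\eps dm/n$. For every admissible $\ell$, write $u = c\ell\log(mN)/m \le \gamma$. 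The bound the theorem supplies is $(1/u)^{\ell-1} \le 2^{O(m/\log(mN))}$, since $u\log(1/u)=O(1)$. With $m=(\log N)^2$ this is only $2^{O(\log N)}=N^{O(1)}$, not $N^{\Omega(\log N)}$. Reaching $N^{\Omega(\log N)}$ requires $m=\Omega((\log N)^3)$, i.e.\ $m \gtrsim n^{3/2}$ when $d=\alpha\sqrt n$.

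\textbf{The satisfiable-case obstacle is left open.} The obstacle you identified is real, and raising $m$ as just explained only makes $m^{O(d+\log n)}$ larger. Within this framework it cannot be fixed by tuning. Write $L=\log n_0$; you need four things:
\begin{itemize}
\item $L\gtrsim n/d$, since $\eREF_d(F)$ alone has $2^{\Theta(n/d)}$ clauses;
\item $d\log m\lesssim L$, so the formula size is polynomial in the dimension and padding yields $s_1(N)=N$;
\item $m\gtrsim L^3$, to get a circuit bound $2^{\Omega(L^2)}$;
\item $L^2\lesssim n$, so that an $n_0^{o(\log n_0)}$ algorithm contradicts rETH and the reduction runs in time $2^{O(\sqrt n)}$.
\end{itemize}
The first three already force $L^2\gtrsim n\log L$, which contradicts the fourth. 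The best balance, e.g.\ $d=\alpha\sqrt{n/\log n}$ and $m=(\log N)^3$, does give $s_1(N)=N$ and $s_2(N)=N^{\Omega(\log N)}$. But the reduction then runs in time $2^{O(\sqrt{n\log n})}$, so you only get an $N^{\Omega(\log N/\log\log N)}$ lower bound.

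For comparison, the paper's proof uses your $d=\alpha\sqrt n$ and $m=(\log N)^2$, with $\ell=\eps dm/n$. It gets past this point by writing $m^{O(d+\log n)}=2^{O(d)}\le n_0$ with $n_0=2^{\Theta(\sqrt n\log n)}$, and $2^{\Omega(\ell)}=N^{\Omega(\log N)}$. For these parameters, however, $m^{O(d+\log n)}$ is $2^{O(d\log n)}$, $n_0=N^{\Theta(1)}m^{\Theta(1)}=2^{\Theta(\sqrt n)}$, and $\ell=\Theta(\log N)$. So the paper's argument does not supply the missing idea either.
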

\begin{proof}
Let $F$ be a given 3-CNF over $n$ variables.
Fix $d = \alpha \sqrt{n}$ where $\alpha > 0$ is small enough constant.
Note that $\eREF_s(F)$ is a formula of width $O(1)$ over
$N = 2^{\Theta(\sqrt{n})}$ 
variables and with $N^{O(1)}$ clauses.
Let
\begin{equation*}
    m := (\log N)^2 \asymp n,
\end{equation*}
and let $\ell = \eps dm/n$ for a sufficiently small constant $\eps > 0$.
The function $\fT := f_{\eREF_d(F), m}$
is defined on 
$n_0 := N^{\Theta(1)} m^{\Theta(1)} = 2^{\Theta(\sqrt{n} \log n)}$ variables
(\myref{def:function-lifting}).
\cref{thm:constructive-lifting,lem:extended-ref}
imply that
$\fT$
can be computed
with monotone formulas of size
\begin{equation*}
    s := m^{O(d  + \log n)} = 2^{O(d)} = 2^{O(\alpha \sqrt{n})}
    \leq n_0,
\end{equation*}
when $F$ is satisfiable and $\alpha$ is small enough.
Moreover, 
when $F$ is unsatisfiable,
the function
requires
monotone circuits of size at least 
\begin{equation*}
    s' :=
    2^{\Omega(\ell)}
    =
    N^{\Omega(\log N)}
\end{equation*}
to be $\gamma$-approximated over the distribution 
$\cald :=
\cald^{\eREF_s(F),m}$
given by \myref{thm:constructive-lifting}, 
since we can take $\eps$ sufficiently smaller than $\gamma$.
This distribution can be constructed in time $2^{O(\sqrt{n})}$ given
$F$ and $m$.
The ETH lower bound follows by noticing that
$N^{\log N} = 2^{O(n)}$.
\end{proof}

\subsection{From Gap Learning to Partial-Gap-MCFSP}

We now define the partial MCSP problem we consider.
Instead of the entire truth table of a partial function,
we consider a more succinct encoding where we are only given locations
where the function is defined.

\begin{definition}
    Let $s_1,s_2,\eps : \bbn \to \bbn$ be time-constructible functions.
    Let $\mMCFSP_{s_1}^{s_2}[\eps]$ be the promise problem which receives as input
    $1^n$ and
    a list of $m$ pairs
    $((x_1,b_1),\dots,(x_m,b_m))$
(called \emph{examples}),
where $x_i \in \blt^n$ and $b_i \in \blt$ for all $i \in [m]$,
and must distinguish between the following two cases:
\begin{itemize}
    \item 
    there is a monotone formula $C$ of size at most $s_1$ such that, for all $i \in [m]$, $C(x_i) = b_i$; 
    \item 
    for all circuits $C'$ of size less than $s_2$,
    we have
    $\card{\set{i \in [m] : C'(x_i) \neq b_i}} \geq m(1/2-\eps)$.
\end{itemize}
    The number $n$ is called the \emph{dimension} of the input.
\end{definition}

To show hardness of $\mMCFSP$, it suffices to reduce from
$\mCFGL$.
As described in the introduction, it suffices to sample from
the distribution $\cald$ given as input to $\mCFGL$.
In the small formula case, we clearly generate examples
which can be computed by a small formula.
In the large circuit case, with enough examples we can conclude
by the Chernoff inequality that no circuit of small size
can compute most of those examples correctly.

\begin{theorem}
    \label{thm:monotone-minlt-eth-hard}
    For every 
    growing functions
    $s_1,s_2: \bbn \to \bbn$
    such that
    $s_1 \leq s_1' < s_2' \leq s_2$ 
    (for large enough $n$)
    and 
    $\gamma : \bbn \to [0,1]$,
    the problem
    $\mCFGL_{s_1}^{s_2}[\gamma]$
    with input $(1^n, \cald)$
    randomly reduces to
    $\mMCFSP_{s_1'}^{s_2'}[2\gamma]$
    in time
    $\poly(n) + O(\frac{1}{\gamma^2} \cdot s_2' \log s_2')$,
    producing 
    $O(\frac{1}{\gamma^2} \cdot s_2' \log s_2')$ examples
    of dimension $n$.
\end{theorem}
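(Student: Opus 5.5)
The reduction simply samples from the sampler. Given $(1^n,\cald)$, set $M := \lceil \frac{c}{\gamma^2} s_2' \log s_2' \rceil$ for a suitable absolute constant $c$. Draw independent uniform seeds $r_1,\dots,r_M$ and evaluate $\cald$ on each to get examples $(x_i,b_i) := \cald(r_i)$. Output the instance $(1^n,((x_1,b_1),\dots,(x_M,b_M)))$ of $\mMCFSP_{s_1'}^{s_2'}[2\gamma]$. By the convention of \myref{rk:mcgl-def}, $\cald$ has size $n^2$, so each evaluation costs $\poly(n)$ and the dimension stays $n$. I will read the stated running time as $\poly(n)\cdot M$, i.e.\ polynomial cost per example. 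The rest is correctness in the two promise cases.

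\emph{Yes case.} Some monotone formula $C$ of size at most $s_1 \le s_1'$ satisfies $\Pr_{(x,b)\gets\cald}[C(x)=b]=1$. Every sampled pair lies in the support of $\cald$, so $C(x_i)=b_i$ for all $i$ with probability $1$. This gives a yes-instance of $\mMCFSP_{s_1'}^{s_2'}$.

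\emph{No case.} Every monotone circuit $C'$ with $\Pr[C'(x)=b]\ge 1/2+\gamma$ has size at least $s_2 \ge s_2'$. So every circuit $C'$ of size below $s_2'$ agrees with $\cald$ with probability less than $1/2+\gamma$.

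Fix such a $C'$. The number of indices with $C'(x_i)=b_i$ is a sum of $M$ i.i.d.\ Bernoulli variables with mean below $1/2+\gamma$. By Hoeffding, the probability that at least $(1/2+2\gamma)M$ indices agree is at most $e^{-2\gamma^2 M}$. On the complementary event, the number of errors exceeds $M(1/2-2\gamma)$, which is exactly the no-condition with parameter $2\gamma$.

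Union bound over circuits of size less than $s_2'$ on $n$ inputs. There are at most $2^{O(s_2'\log s_2')}$ of them, since each gate is specified by its type and two predecessors among $O(s_2'+n)$ nodes. Here I assume $s_2' \ge n$, as is the case in all intended applications. If the definition's circuits are taken to be monotone, the count is even smaller.

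Choosing $c$ large enough makes $2^{O(s_2'\log s_2')}\,e^{-2\gamma^2 M} \le 1/3$. So with probability at least $2/3$, every small circuit errs on at least a $(1/2-2\gamma)$ fraction of the examples, and the output is a no-instance.

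The only delicate step is this counting-plus-concentration step: matching $M$ to the logarithm of the number of small circuits. That bookkeeping is what forces $M=\Theta(\gamma^{-2}s_2'\log s_2')$, and it is why the theorem passes to the smaller threshold $s_2' \le s_2$ and the doubled error parameter $2\gamma$.
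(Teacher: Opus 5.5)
Your proposal is correct and follows essentially the same route as the paper: sample $\Theta(\gamma^{-2} s_2' \log s_2')$ examples from $\cald$, note that the yes case is preserved because every sample lies in the support, and in the no case apply Hoeffding to each fixed small circuit and union bound over the $2^{O(s_2' \log s_2')}$ such circuits. Your explicit assumption $s_2' \geq n$ and your reading of the running time as $\poly(n)\cdot M$ are both implicit in the paper's circuit count and time accounting.

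One correction: the $\mCFGL$ promise only bounds the agreement of \emph{monotone} circuits. So your step ``every circuit $C'$ of size below $s_2'$ agrees with $\cald$ with probability less than $1/2+\gamma$'', and the union bound after it, must range over monotone circuits only, exactly as in the paper. Contrary to your hedge, the argument gives nothing for non-monotone circuits, so the no case of $\mMCFSP$ must be read as quantifying over monotone circuits.
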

\begin{proof}
    Let $\cald$ be given as input for 
    $\mCFGL_{s_1}^{s_2}[\gamma]$.
    Take $m = \frac{4}{\gamma^2} \cdot K s_2' \log s_2'$ samples from $\cald$,
    and call them $(x_1,b_1),...,(x_m,b_m)$. 

    In the case that there exists a formula of size $s_1 < s_1'$ computing $\cald$ then
    that same small circuit will agree with $(x_1,b_1),...,(x_m,b_m)$ for
    all $i$. So ``small'' instances map to ``small'' instances. It will then
    suffice to prove the theorem to show that ``large'' instances map to
    ``large'' instances. In particular, it suffices to show that, with 
    all but negligible probability,
    for every circuit of $C'$ of size less than 
    $s_2'$,
    we have
    $\card{\set{i \in [m] : C'(x_i) \neq b_i}} \geq m(1/2-\eps-\gamma)$.

    Let $a_{C'}$ be the number of $i \in [m]$ such that $C'(x_i) = b_i$. 
    By definition of $\mCFGL$, we know
    $\Pr_{(x,b) \flws \cald}[C'(x) = b] \leq 1/2+\gamma$,
    and thus
    $\Exp[a_{C'}] \leq m(1/2 + \gamma)$.
    Therefore, for each
    circuit $C'$, by Chernoff-Hoeffding's inequality 
    we know that
    \begin{equation}
        \label{eq:err-bound}
    \Pr_{(x_1,b_1),...,(x_m,b_m) \leftarrow \cald}[a_{C'} - m(1/2 + \gamma)
    \geq m\gamma]
    \leq e^{-2m^2\gamma^2/m} = e^{-2m\gamma^2} 
    =
    (s_2')^{-4 s_2'}\,,
    \end{equation}
    or, equivalently,
    \begin{equation*}
        \Pr_{(x_1,b_1),...,(x_m,b_m) \leftarrow \cald}[a_{C'}\geq m(1/2 +
        2 \gamma)] \leq (s_2')^{-4 s_2'} \,.
    \end{equation*}

    There are less than 
    $(s_2')^{3s_2'}$
    monotone circuits with $n$
    bit inputs, $s_2'$ gates, fan-in $2$, and arbitrary fan-out. 
    By union bounding over these circuits, we can see that
    the probability of any circuit satisfying $a_{C'} \geq
    m(1+2\gamma)$ is less than $1/3$ for large enough $n$.
    This completes the proof.
\end{proof}

We now employ the hardness results obtained for $\mCFGL$
under rETH in \myref{sec:reth}
to obtain hardness of $\mMCFSP_{s_1}^{s_2}$ using the reduction of the previous
theorem.
We consider 4 choice of parameters which are in a way best possible:
\begin{enumerate}
    \item \emph{Junta vs.\ quasipolynomial}. Here we take
        the largest gap possible between formula and circuit size.
        This gives us a weaker runtime lower bound of
        $N^{\Omega(\log \log N)^{1-\eps}}$.
        This is the weakest runtime lower bound we obtain.
    \item \emph{Linear vs.\ quasipolynomial.}
        Here we obtain a weaker gap but against superpolynomial-size
        circuits.
        This gives us a better runtime lower bound of
        $N^{\Omega(\log N)^{1-\eps}}$.
    \item \emph{Junta vs.\ polynomial.}
        We then consider the setting where 
        the gap is largest possible subject to 
        $s_2 = \poly(n)$.
        This gives us a runtime lower bound of
        $N^{\Omega(\log \log N)}$, slightly better than (1).
    \item \emph{Linear vs.\ polynomial.}
        We weaken the gap to an arbitrary polynomial $n^c$.
        This gives us a better runtime lower bound of
        $N^{\Omega(\log N)}$, slightly better than (2).
        This is the strongest runtime lower bound we obtain.
\end{enumerate}

We obtain the following result for each of these parameter regimes.

\begin{corollary}
    \label{cor:minlt-runtime}
    Let $\gamma > 0$ be a constant.
    Under rETH,
    the following holds:
    \begin{enumerate}
        \item 
            \textbf{(Junta vs.\ quasipolynomial)}
            For every constant $\eps > 0$,
            the problem
            $\mMCFSP_{s_1}^{s_2}[\gamma]$
            requires time
            $N^{\Omega(\log \log N)^\eps}$
            to be decided,
            where 
            $s_1(n) = (\log N)^{\log \log N}$
            and
            $s_2(n) = n^{(\log \log n)^{1-\eps}}$.
            The hard instances consist of
            $m = \Theta(s_2 \log s_2)$ examples,
            and thus have size
            $N = nm = n^{\Theta(\log \log n)^{1-\eps}}$. \label{it:junta-qp}
        \item 
            \textbf{(Linear vs.\ quasipolynomial)}
            For every 
            constant $\eps \in (0,1)$,
            the problem
            $\mMCFSP_{s_1}^{s_2}[\gamma]$
            requires time
            $N^{\Omega((\log N)^{2/(2-\eps)})}$
            to be decided,
            where 
            $s_1(n) = n$,
            and 
            $s_2(n) = n^{(\log n)^{1-\eps}}$.
            The hard instances consist of
            $m = \Theta(s_2 \log s_2)$ examples,
            and thus have size
            $N = nm = n^{\Theta(\log n)^{1-\eps}}$. \label{it:linear-qp}
        \item 
            \textbf{(Junta vs.\ polynomial)}
            For every constant $\eps > 0$,
            the problem
            $\mMCFSP_{s_1}^{s_2}[\gamma]$ 
            requires time
            $N^{\Omega(\log \log N)}$
            to be decided,
            where 
            $s_1(n) = (\log N)^{\log \log N}$
            and
            $s_2(n) = n^{c}$.
            The hard instances consist of
            $m = \Theta(s_2 \log s_2)$ examples,
            and thus have size
            $N = \Theta(n m) = \poly(n)$. \label{it:junta-poly}
        \item 
            \textbf{(Linear vs.\ polynomial)}
            For every constant $c > 0$,
            the problem
            $\mMCFSP_{s_1}^{s_2}[\gamma]$ 
            requires time
            $N^{\Omega( \log N ) }$
            to be decided,
            where 
            $s_1(n) = n$,
            and
            $s_2(n) = n^{c}$.
            The hard instances consist of
            $m = \Theta(s_2 \log s_2)$ examples,
            and thus have size
            $N = \Theta(n m) = \poly(n)$. \label{it:linear-poly}
    \end{enumerate}
\end{corollary}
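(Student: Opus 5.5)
Each of the four items follows by composing a rETH-hardness result for $\mCFGL$ from \myref{sec:reth} with the sampling reduction of \myref{thm:monotone-minlt-eth-hard}, and then padding. The only freedom is in the parameters. Fix a target regime $(s_1,s_2)$ in the dimension $n$ of the $\mMCFSP$ instance. We start from a 3-CNF $F$ on $n_F$ variables and build the $\mCFGL$ instance of \myref{thm:sat-reduces-to-mcgl-junta-reth} (items 1 and 3) or of \myref{thm:sat-reduces-to-mcgl-quasipoly-reth} (items 2 and 4). This gives a sampler of dimension $n_0$ with formula upper bound $s_1(n_0)$ in the satisfiable case and circuit lower bound $n_0^{\Omega(\log n_0)}$ in the unsatisfiable case. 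Since this lower bound is far larger than what we want, we apply \myref{thm:monotone-minlt-eth-hard} with $s_1'=s_1$ and with $s_2'=s_2(n)$ much smaller than $n_0^{\Omega(\log n_0)}$. The reduction then produces $m=\Theta(s_2'\log s_2')$ examples (using $\gamma$ constant), and it runs in time $\poly(n_0)+O(m)$.

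To control $s_2'$ relative to the dimension, we pad. Given the $\mCFGL$ instance of dimension $n_0$, we append $n-n_0$ zero coordinates to each sample, so the dimension becomes $n\ge n_0$. Padding is harmless for monotone formulas and circuits: they can ignore the extra coordinates, and any circuit on the padded input restricts to one of no larger size on the original input. We choose $n$ so that the required circuit size $s_2(n)$ is still at most $n_0^{\Omega(\log n_0)}$, while $s_1(n)\ge s_1(n_0)$.
\begin{itemize}
\item Item 4: take $n=n_0$ and $s_2=n^c$. The reduction takes time $2^{O(\sqrt{n_F})}$ and produces $N=\poly(n_0)=2^{O(\sqrt{n_F})}$ bits. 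An algorithm running in time $N^{o(\log N)}=2^{o(n_F)}$ would therefore contradict rETH.
\item Item 2: we want $s_2(n)=n^{(\log n)^{1-\eps}}\le n_0^{\Omega(\log n_0)}$. We take $n=n_0$ and use $d=\alpha\sqrt{n_F}$, so $\log n\asymp\sqrt{n_F}$. Then $N\approx n^{(\log n)^{1-\eps}}$ gives $\log N\asymp(\log n)^{2-\eps}$, and hence $n_F\asymp(\log n)^2\asymp(\log N)^{2/(2-\eps)}$. The lower bound is therefore $2^{\Omega(n_F)}=N^{\Omega((\log N)^{2/(2-\eps)}/\log N)}$. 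I would recheck this exponent against the stated form; the exact expression may need recomputing.
\item Items 1 and 3: these are analogous with $d=\log n_F$, giving $N$ around $2^{n_F/\log n_F}$ and thus the $\log\log N$ exponents.
\end{itemize}
In every case, the number of examples $m=\Theta(s_2\log s_2)$ and the instance size $N=nm$ are exactly as claimed.

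The main obstacle is bookkeeping. We must check that the total reduction time $2^{O(n_F/d)}+O(s_2\log s_2)$ stays $2^{o(n_F)}\cdot(\text{algorithm time})$ in every regime. We must also match each stated lower-bound exponent, which means expressing $n_F$ in terms of $N$ through two layers of parameter relations. The formula side also needs checking: the formula size $s_1(n_0)$ from the $\mCFGL$ theorems must be at most $s_1(n)$ after padding, and this holds because $s_1$ is growing. The randomised reduction has error at most $1/3$, so combined with a hypothetical algorithm it gives a randomised 3SAT algorithm with bounded error. This contradicts rETH.
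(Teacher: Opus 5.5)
Your proposal matches the paper's proof. It composes the rETH hardness of $\mCFGL$ (\myref{thm:sat-reduces-to-mcgl-junta-reth} and \myref{thm:sat-reduces-to-mcgl-quasipoly-reth}) with the sampling reduction of \myref{thm:monotone-minlt-eth-hard}, takes $s_2'$ to be the target size and the $\mMCFSP$ dimension equal to the $\mCFGL$ dimension (so your padding step is never actually used), and rewrites the resulting $2^{\Omega(n_F)}$ bound in terms of $N = nm$. Your recomputed exponent in item 2 is the right one: the paper's own proof also only derives $T(N) \geq 2^{\Omega((\log N)^{2/(2-\eps)})}$, so the displayed $N^{\Omega((\log N)^{2/(2-\eps)})}$ in the statement is a typo, since with $n_F \asymp (\log N)^{2/(2-\eps)}$ it would be $2^{\omega(n_F)}$, more than any reduction from 3SAT under rETH can certify.
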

\begin{proof}
            We first prove \myref{it:junta-qp}.
            By \myref{thm:sat-reduces-to-mcgl-junta-reth}, for large enough
            $n \in \bbn$
            the problem 
            $\mCFGL_{s_1}^{s_2}[\gamma/2]$ 
            requires time $n^{\Omega(\log n)}$ to be computed
            under ETH for some $s_1 = 2^{O(\log \log n)^2}$
            and $s_2 = n^{\Omega(\log n)}$.
            Take $s_1' = s_1$
            and $s_2' = n^{(\log \log n)^{1-\eps}}$,
            thus satisfying
            $s_1 < s_1' < s_2' < s_2$.

            If $\mMCFSP_{s_1'}^{s_2'}[\gamma]$ can be computed in time
            $T(N)$ on an instance of $m = K s_2' \log s_2' = n^{O(\log \log n)^{1-\eps}}$
            examples and dimension $n$ 
            (and thus input size $N=nm$),
            where $K$ is a large enough constant,
            we obtain by 
            \myref{thm:monotone-minlt-eth-hard}
            that $\mCFGL_{s_1}^{s_2}[\gamma/2]$
            can be solved in randomised time
            $\poly(n) + O(m) + T(N)$.
            This must be at least 
            $n^{\Omega(\log \log n)}$
            which implies
            $T(N) = n^{\Omega(\log \log n)}$.
            Since 
            $N = n^{\Theta(\log n \log n)^{1-\eps}}$,
            we obtain 
            $\log n \log n = (1-o(1)) \log \log N$
            and
            \begin{equation*}
                T(N) \geq N^{\Omega(\log \log N)^{\eps}}.
            \end{equation*}

            We now prove \myref{it:linear-qp}.
            The proof is similar.
            Let
            $s_1' = n$ 
            and
            $s_2' = n^{(\log n)^{1-\eps}}$ in the argument above.
            We have that
            $\mCFGL_{s_1}^{s_2}[\gamma/2]$ 
            requires time $n^{\Omega(\log n)}$
            to be computed by \myref{thm:sat-reduces-to-mcgl-quasipoly-reth}.
            Since 
            $N \asymp n s_2' \log s_2' = n^{\Theta(\log n)^{1-\eps}}$,
            we obtain
            $\log n \asymp (\log N)^{1/(2-\eps)}$
            and thus
            \begin{equation*}
                T(N)
                \geq
                2^{\Omega(\log n)^2}
                \geq
                2^{\Omega(\log N)^{2/(2-\eps)}}.
            \end{equation*}

            We repeat the argument for \myref{it:junta-poly}.
            Take $s_1' = (\log n)^{\log \log n}$
            and
            $s_2' = n^c$.
            We have that
            $\mCFGL_{s_1}^{s_2}[\gamma/2]$ 
            requires time $n^{\Omega(\log \log n)}$
            to be computed by \myref{thm:sat-reduces-to-mcgl-junta-reth}.
            Since 
            $N \asymp n s_2' \log s_2' \asymp n^{c+1} \log n$,
            we obtain
            $\log N \asymp \log n$
            and
            $\log \log N \sim \log \log n$,
            and thus
            \begin{equation*}
                T(N)
                \geq
                n^{\Omega(\log \log n)}
                \geq
                N^{\Omega(\log \log N)}.
            \end{equation*}

            Finally, we prove \myref{it:linear-poly}.
            Let $s_1' = n$
            and let $s_2' = n^c$ in the argument above.
            We have that
            $\mCFGL_{s_1}^{s_2}[\gamma/2]$ 
            requires time $n^{\Omega(\log n)}$
            to be computed by \myref{thm:sat-reduces-to-mcgl-quasipoly-reth}.
            Since 
            $N \asymp n s_2' \log s_2' \asymp n^{c+1} \log n$,
            we obtain
            $\log N \asymp \log n$
            and thus
            \begin{equation*}
                T(N)
                \geq
                2^{\Omega(\log n)^2}
                \geq
                2^{\Omega(\log N)^2}.
                \qedhere
            \end{equation*}
\end{proof}

\subsection{From Gap Learning to Improper PAC-Learning}
\label{sec:improperpac}

Finally, we explain how to obtain hardness of PAC-learning. We first define formally the notion of PAC-learning.
Given a function $f : \blt^n \to \blt$ and a distribution 
$\cald$ with support in $\blt^n$,
we denote by $\Ex(f,\cald)$
the oracle that runs in unit time and outputs
a sample $(x,b)$ where $x \flws \cald$ and $b = f(x)$.
We also define
$\err_{f, \cald}(h) := \Pr_{x \in \cald}[h(x) \neq f(x)]$.

\begin{definition}
    \label{def:pac-learning}
    Let $\calf_n, \calh_n$ be sets of Boolean functions over $\blt^n$,
    and let 
    $\calf = \bigcup_{n \geq 1} \calf_n$
    and
    $\calh = \bigcup_{n \geq 1} \calh_n$
    be such that 
    $\calf \sseq \calh$.
    Let $T : \bbn \to \bbn$.
    We say that the \emph{concept class} $\calf$
    is
    \emph{PAC-learnable in time $T$} 
    by the 
    \emph{hypothesis class}
    $\calh$
    if there is a randomised algorithm such that,
    for every $f \in \calf$,
    given $(1^n, 1^{1/\eps}, 1^{1/\delta})$
    and oracle access to
    $\Ex(f,\cald)$,
    outputs 
    in time $T(n + 1/\eps + 1/\delta)$,
    with probability $1-\delta$,
    a concept $h \in \calh$
    such that
    $\err_{f, \cald}(h) \leq \eps$.
\end{definition}

For $s : \bbn \to \bbn$,
let $\calc_{s(n)}$ (resp. $\call_{s(n)}$) 
be the set of monotone Boolean circuits (resp. formulas) on $n$ bits of size $s(n)$.
Let 
$\calc_s = \bigcup_{n \in \bbn}\calc_{s(n)}$
and
$\call_s = \bigcup_{n \in \bbn}\call_{s(n)}$.
Hardness of PAC-learning $\call_s$ by $\calc_s$ 
follows from the results of \myref{sec:reth}
by a proof similar to the one employed in the previous section.

\begin{theorem}
    \label{thm:pac-lb} 
        Under rETH,
        we have: 
        \begin{enumerate}
            \item 
                For every constant $\alpha > 0$,
                monotone formulas on $n$ bits of size $n$
                require time $n^{\Omega( \log n )}$
                to be PAC-learned by
                monotone circuits of size 
                $s(n)$,
                for any $s(n)$ such that
                $n \ll s(n) \leq n^{(\log n)^{1-\alpha}}$.
            \item
                For every constant $\alpha > 0$,
                monotone formulas 
                on $n$ bits
                of size $(\log n)^{O(\log \log n)}$
                require time $n^{\Omega( \log \log n )}$
                to be PAC-learned by
                monotone circuits of size 
                $s(n)$,
                for any $s(n)$ such that
                $n \ll s(n) \leq n^{(\log \log n)^{1-\alpha}}$.
        \end{enumerate}
\end{theorem}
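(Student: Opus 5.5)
We reduce from the gap-learning problem, using the learner as a distinguisher. For item 1 we start from \myref{thm:sat-reduces-to-mcgl-quasipoly-reth}: under rETH, $\mCFGL_{s_1}^{s_2}[\gamma]$ with $s_1(n)=n$ and $s_2(n)=n^{\Omega(\log n)}$ requires time $n^{\Omega(\log n)}$ on inputs of dimension $n$, for a small constant $\gamma$ (say $\gamma=1/10$). For item 2 we start from \myref{thm:sat-reduces-to-mcgl-junta-reth}, with $s_1=(\log n)^{O(\log\log n)}$ and runtime bound $n^{\Omega(\log\log n)}$. We assume a PAC-learner $\mathcal{A}$ for $\call_{s_1}$ by $\calc_{s}$ running in time $T$, and we build a randomised algorithm for $\mCFGL$ running in time roughly $T(n+O(1))+\poly(n)+O(s\log s)$. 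Since $s\le n^{(\log n)^{1-\alpha}}=n^{o(\log n)}$ (resp.\ $n^{o(\log\log n)}$), the additive cost is negligible, so $T$ must be $n^{\Omega(\log n)}$ (resp.\ $n^{\Omega(\log\log n)}$).

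The distinguisher, on input $(1^n,\cald)$, does the following.
\begin{enumerate}
\item Run $\mathcal{A}$ with $\eps=1/10$, $\delta=1/10$, answering each oracle call by sampling $(x,b)\gets\cald$. Here $\cald$ has marginal $\cald_X$ on $x$, and in the yes case $b=L(x)$ for the formula $L$, so this exactly simulates $\Ex(L,\cald_X)$. Abort if the running time exceeds $T$.
\item Check that the output $h$ is a monotone circuit of size at most $s$; if not, reject.
\item Estimate $\Pr_{(x,b)\gets\cald}[h(x)=b]$ to additive error $0.05$ using $O(1)$ fresh samples, with Chernoff giving constant confidence.
\item Accept iff the estimate is at least $0.85$.
\end{enumerate}

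In the yes case, with probability at least $0.9$ we have $\err(h)\le 0.1$, so the agreement is at least $0.9$ and we accept with probability at least $2/3$ after amplification. In the no case, every monotone circuit of size less than $s_2$ agrees with $\cald$ with probability below $1/2+\gamma=0.6$. Since $s<s_2$ for large $n$, whatever $h$ the learner outputs, even if it errs or is run on a non-realizable distribution, has agreement below $0.6$, and we reject with high probability. A learner is only guaranteed to behave on realizable distributions, so we enforce the time bound and the syntactic check ourselves; the gap between agreement $0.6$ and $0.9$ handles the rest.

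The main obstacle is the parameter bookkeeping. First, the learner's concept class must match $s_1(n)=n$ exactly, at dimension $n$. \myref{thm:sat-reduces-to-mcgl-quasipoly-reth} gives $s_1(N)=N$, and padding as in \myref{rk:mcgl-def} preserves this. Second, we need $s(n)\le n^{(\log n)^{1-\alpha}}$ to lie below $s_2=n^{\Omega(\log n)}$ for large $n$, which holds, and the lower bound $n\ll s$ ensures the hypothesis class contains the concept class. The constant $\gamma$ must be fixed before invoking the reduction theorems, which allow any constant $\gamma$. The overhead of running $\mathcal{A}$ on $\poly(n)$-size samplers and evaluating $h$ is $\poly(s)$, which is subsumed by the lower bound. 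Item 2 is identical with the junta parameters.
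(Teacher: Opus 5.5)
Your proposal is correct and is essentially the paper's proof. Both simulate the learner on samples from the gap-learning sampler of \myref{thm:sat-reduces-to-mcgl-quasipoly-reth} (resp.\ \myref{thm:sat-reduces-to-mcgl-junta-reth}), test the returned hypothesis on $O(1)$ fresh samples, and absorb the $\poly(s)$ evaluation cost using $s = n^{o(\log n)}$ (resp.\ $n^{o(\log\log n)}$) and $s < s_2$. Your explicit time-out and size check on the hypothesis, needed because the learner is only guaranteed to behave on realizable distributions, are welcome refinements that the paper omits. So is your threshold $0.85$, which cleanly separates yes-case agreement $\ge 0.9$ from no-case agreement $< 0.6$: the paper's stated threshold $2/3$ lies below its own no-case bound $1/2+2\eps = 0.7$ and should be moved to, e.g., $3/4$.
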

\begin{proof}
    We first prove the first item. The proof of the second item is analogous, so we omit
    it.
    
    Let $\eps = \delta = 1/10$.
    Let $(1^n, \cald)$ be an input to
    $\mCFGL_{s_1}^{s_2}[\eps]$,
    where $s_1(n) = n$ and $s_2(n)=n^{\Omega(\log n)}$ are as in
    \myref{thm:sat-reduces-to-mcgl-quasipoly-reth}. By this theorem, 
    we have that $\mCFGL_{s_1}^{s_2}[\eps]$ requires time $n^{\Omega(\log n)}$ to be solved under rETH.
    
    Suppose that $\call_n$ is PAC-learnable in time $T$ by $\calc_{s}$,
    that is, suppose 
    there is a PAC-learner $\cala$ that for every $f\in \call_n$, given oracle access to $\Ex(f,\cald)$,
    outputs 
    in time  $T = T(n + 1/\eps + 1/\delta)$
    a hypothesis $h$ in $\calc_{s}$ 
    that,
    with probability $1-\delta$, satisfies $\err_{f, \cald}(h) \leq \eps$.
    Run the algorithm $\cala$, replacing oracle calls to $\Ex$
    with samples from~$\cald$, and let $C$ be the circuit obtained at the end. 
    Since $\card{\cald} = \poly(n)$,
    this can be done in time $T \cdot \poly(n)$.
    We now take 
    $\ell=\frac{4}{\eps^2}$
    samples from $\cald$,
    and accept $\cald$ if $C$ computes correctly a $2/3$-fraction of the examples.
    
    We claim that in the 
    case where there exists a monotone formula of size $s_1$
    computing all samples of $\cald$ correctly, we will be correct with probability at least $1-\delta - \exp({-2\ell\eps^2})\geq 2/3$. Indeed, in this case, with probability $1-\delta$ it holds that $\err_{c, \cald}(C) \leq \eps$. Now suppose $C$ is such that $\err_{c, \cald}(C) \leq \eps$. By Chernoff-Hoeffding's inequality we have that
    the probability that $C$ does not compute correctly $1-2\eps \geq 2/3$ of the $\ell$ examples is at most $\exp(-2\ell\eps^2)$. The claim follows.
    
    In the other case, we know that 
    $\Pr_{(x,b) \flws \cald}[C(x) = b] \leq 1/2+\eps$,
    and thus
    the proportion of 
    samples 
    that $C$ correctly computes is
    less than $1/2+2\eps$ with 
    probability $1-\exp(-2 \ell \eps^2)$
    by Chernoff-Hoeffding's inequality.
    The total runtime is $T\cdot\poly(n) + \ell \cdot \poly(s)$. This implies that under rETH, by \myref{thm:sat-reduces-to-mcgl-quasipoly-reth}, 
    $T = n^{\Omega( \log n )}$.
\end{proof}

\subsection*{Acknowledgements}
We thank Noel Arteche for helpful discussions, and the
anonymous reviewers for their valuable comments and suggestions.
    Rahul Santhanam and Bruno Cavalar
    acknowledge support from the EPSRC project
    EP/Z534158/1 on ``Integrated Approach to Computational Complexity: Structure, Self-Reference and Lower
    Bounds''.
    Bruno Cavalar also acknowledges support of
    the UK Research and Innovation (UKRI) through an EPSRC
    Postdoctoral Fellowship (grant ref.\! UKRI3441).
    Susanna F. de Rezende
    was supported by the Knut and Alice Wallenberg Foundation grant
    KAW 2023.0116, ELLIIT, and the Swedish Research Council grant 2021-05104.

\let\OLDthebibliography\thebibliography
\renewcommand\thebibliography[1]{
   \OLDthebibliography{#1}
   \setlength{\parskip}{3pt}
}

\small
\DeclareUrlCommand{\Doi}{\urlstyle{sf}}
\renewcommand{\path}[1]{\footnotesize\Doi{#1}}
\renewcommand{\url}[1]{\href{#1}{\small\Doi{#1}}}
\bibliographystyle{alphaurl}
\bibliography{refs}

\appendix

\section{Proof Sketch of the Refined Lifting Theorem}
\label{sec:refined-lifting}
\normalsize

Throughout this appendix, we assume familiarity with the notation and definitions
from the paper~\cite{DBLP:conf/coco/RezendeV25}\footnote{In particular, we refer to the
\href{https://derezende.github.io/index_files/colourful-sunflowers.pdf}{full version} of
the paper.}.
\myref{thm:constructive-lifting} is proved
in~\cite{DBLP:conf/coco/RezendeV25} only for
the case when $\gamma = 1/2$.
However, it is straightforward to extend it to
the case $\gamma \geq 2m^{-(1-\delta)}$
at the expense of only a slightly weaker size lower bound, that is, of showing that the
protocol is of size at least $m^{(1-\delta)(\ell -1)}$ (instead of $m^{(1-\delta)}/2$).
Here $\delta$ is the same parameter as in~\cite{DBLP:conf/coco/RezendeV25}, that is,
$\delta$ such that  $m^\delta = 2K\cdot |\Sigma| \cdot 4\ell \cdot \log(2mN^2)$ for $K$
being the constant given from the Full Range Lemma. Note that in our case $\Sigma =
\{0,1\}$. Moreover, observe that $m^\delta \leq c' \ell \log(mN)$ for a large enough
absolute constant $c'$, and so, in particular, we will be proving the statement for
$\gamma  \geq 2 c' \ell \log(mN)/m \geq 2m^{-(1-\delta)}$.

Suppose that some small circuit has $1/2+\gamma$ correlation with
$f$.
This implies that the root rectangle $R=R^X \times R^Y$ obtained from the circuit 
has density at least
$\gamma$ in both sides (instead of density $1$, as in~\cite{DBLP:conf/coco/RezendeV25}).
Suppose that the DAG has size at most $m^{(1-\delta)(\ell -1)} \leq m^{(1-\delta)\ell} \cdot \gamma/2$, 
This changes the density of both the $X$-error and of the $Y$-error
    accumulated at the root to be at most $\gamma/2$ (instead of at most $1/2$),
and, therefore, the root rectangle, with errors removed, has density at
least $\gamma/2$ in both sides.
This only affects the argument that $ R^X \times (R^Y \setminus Y^R_{\err})$ is a $\star^N$-(pre-)structured rectangle, that is, that
$R^X$ is $\star^n$-predense and 
$|R^Y \setminus Y^R_{\err}|\geq
2^{mN-4\ell \log mN}$ (since in our case $\Sigma = \{0,1\}$). 

We first argue that $R^X$ is $\star^N$-predense. Indeed, for any $\emptyset \neq I \sseq X$ 
the min-entropy of $R^X_I$ 
is at least
\begin{align*}
    \card{I} \log m - \log(2/\gamma)
    &=
    \card{I} \log m - (1-\delta)\log m
    \\&\geq
    \card{I} \log m - (1-\delta) \card{I} \log m
    \\&\geq
    \delta \card{I} \log m.
\end{align*}
We have thus obtained the bound required to 
conclude that $R^X$ is $\star^N$-predense.

It remains to show that $|R^Y \setminus Y^R_{\err}|\geq  2^{mN-4\ell\log mN}$.
This follows easily since
 $|R^Y \setminus Y^R_{\err}|\geq 2^{mN} \cdot \gamma / 2 $
 and $ 
 \gamma/2 = m^{-(1-\delta)} \geq 2^{-\log m} \geq 2^{-4\ell\log mN}$. 
We therefore conclude that $ R^X \times (R^Y \setminus 
		Y^R_{\err})$ is a $\star^N$-(pre-)structured rectangle,
and the rest of the proof is the same.

This implies the size of lower bound of 
$m^{(1-\delta)(\ell-1)} \geq (m/c'\ell \log(mN))^{\ell-1}$ for any $\gamma \geq 2 c' \ell \log(mN)/m $. Choosing the absolute constant $c = 2c'$ gives us \myref{thm:constructive-lifting}.

\section{Proof of the Lower Bound for the Refutation Formula}
\label{sec:lower-bound}

In this appendix, we prove \myref{it:thm-extended-ref-unsat} of \myref{lem:extended-ref},
that is, we show that
if $F$ is unsatisfiable $3$-CNF 
formula over $n$ variables, with $ \poly(n)$ clauses, $d 
    \leq n$ and $t \leq 2^{\noverdepth}$,
and $\graph$ is a $(\blockslayer\cdot\pointersblock,\blockslayer,\delta,r,c)$-{bipartite expander graph},
then
$\eREF_d(F)$ requires Resolution proofs of width at least $2^{\Omega(r(c-1)/n)}$.
The proof follows along the same lines of the ones in~\cite{DBLP:conf/stoc/RezendeGNPR021}
and in~\cite{DBLP:conf/lagos/Rezende21}.

\subsection{Weak Graph \texorpdfstring{$\PHP$}{rPHP} Formula}
The formula we use is a graph version of the
\emph{retraction pigeonhole principle}, $\PHP$~\cite{Jerabek07,PudlakT19}.
The retraction pigeonhole principle 
was also used in~\cite{DBLP:conf/stoc/RezendeGNPR021} to prove a similar reduction.

In this variant of the pigeonhole principle, the pigeon-mapping is restricted to the edges 
of a given bipartite graph and, moreover, there is an efficient way to \emph{invert} the mapping. 
Formally, given a bipartite graph  $\graph = (U \disjointunion V, E)$ with $U=[m]$ and $V=[n]$, 
and an arbitrary family of surjections $\mathsf{sur}_i : [\delta] \rightarrow N(i)$ for each $i\in [m]$,
the variables of the formula $\PHP(\graph)$  describe two functions, $f\colon [m]\to[\delta]$ and $g\colon[n]\to[m]$,
as follows.
\begin{itemize}
	\item \emph{Pigeon map variables.} For every pigeon $i\in[m]$ there are variables $f_{ik}$, $k \in [\log \delta]$. The pigeon map variables encode in binary a hole $\mathsf{sur}_i(f(i)) \in N(i)$ that is expected to house pigeon $i$.
	\item \emph{Hole map variables.} For every hole $j\in [n]$ there are variables $g_{j\ell}$, $\ell\in [\log m]$. These variables encode in binary a pigeon $g(j)\in[m]$ that is expected to occupy hole $j$.
\end{itemize}
The axioms of $\PHP(G)$ state that for every $i\in[m]$ and $j\in N(i)$,
\begin{equation}
    \label{eq:invertible-gphp-functional-notation}
    \enspace \mathsf{sur}_i(f(i)) = j\enspace \Longrightarrow \enspace g(j) = i \,.
\end{equation}
Note that (\ref{eq:invertible-gphp-functional-notation}) corresponds to $\log m$ clauses
each of width $\delta$. Therefore,  
$\PHP(\graph)$ can be written as an $O(\delta)$-width CNF formula with $O(n\delta \log m)$ clauses.

It was shown in~\cite{BenSasson2001} that if $\graph$ is a good enough expander,
then the pigeonhole principle formula over $\graph$ requires large width to be refuted in resolution.
As argued in~\cite{DBLP:conf/coco/RezendeV25}, 
we can obtain the following, slightly stronger, lower bound on the width required to refute 
the pigeonhole principle formula over $\graph$, which in particular applies to our encoding $\rPHP(\graph)$.

\begin{theorem}[\cite{DBLP:conf/coco/RezendeV25}]\label{thm:widthlb}
    If $\graph$ is an $(m,n,\delta,r,c)$-bipartite expander graph then
    $$\frac{(c-1)r}{2} \leq 
    \wRes(\rPHP(\graph))
    \leq n+1 \,.$$
\end{theorem}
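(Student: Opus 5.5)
The plan is to prove the two inequalities separately, both through the Prover--Adversary characterisation of Resolution width recalled in \myref{sec:proof-complexity}: $\wRes$ equals, up to $\pm 1$, the minimum memory the Prover needs to find a falsified clause.

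\textbf{Upper bound.} Here it suffices to exhibit a Prover strategy whose memory is at most about $n+1$ pigeon or hole records. This needs $m>n$, which holds for the expanders we use. The Prover fixes $n+1$ pigeons $i_1,\dots,i_{n+1}$ and queries their pigeon-map bits one pigeon at a time, keeping only the decoded holes $\mathsf{sur}_{i_k}(f(i_k))$. By the pigeonhole principle two of them, say $i_a \neq i_b$, land in the same hole $j$. The Prover then queries $g(j)$. Since $g(j)$ cannot equal both $i_a$ and $i_b$, one instance of (\ref{eq:invertible-gphp-functional-notation}) is falsified, and the Prover isolates the falsified clause by keeping only the relevant bits.

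Measured in bits rather than records, this strategy uses more than $n+1$ variables. I would therefore check which accounting convention \cite{DBLP:conf/coco/RezendeV25} uses for this bound and argue in that convention. This direction is routine, and only the lower bound is used later.

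\textbf{Lower bound.} I would follow Ben-Sasson--Wigderson and build an Adversary that survives against any Prover with memory $k < (c-1)r/2$. The Adversary maintains a set $P\subseteq U$ of \emph{active} pigeons together with a matching $\mu : P \to V$ in $\graph$. Every remembered variable is answered consistently with $\mu$:
\begin{itemize}
    \item for $i \in P$, we set $f(i) = \mathsf{sur}_i^{-1}(\mu(i))$;
    \item for $j \in \mu(P)$, we set $g(j) = \mu^{-1}(j)$.
\end{itemize}
A partial assignment of this form falsifies no axiom of $\rPHP(\graph)$.

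The key tool is a closure lemma. For a set $T$ of pigeons and holes mentioned by the current memory, define $\mathrm{Cl}(T)$ as the union of $T$'s pigeons with a maximal set $S$ of size at most $r$ whose neighbourhood outside the holes of $T$ is ``too small'', namely at most $|S|$ plus a $(c-1)/2$-fraction slack. Using $|N(S)|\ge c|S|$ for $|S|\le r$, I expect to show two things:
\begin{itemize}
    \item $|\mathrm{Cl}(T)| \le 2|T|/(c-1)$, which stays below $r$ whenever $|T| < (c-1)r/2$;
    \item after removing $\mathrm{Cl}(T)$ and its matched holes, the remaining graph still satisfies Hall's condition for every pigeon set of size up to $r - |\mathrm{Cl}(T)|$.
\end{itemize}
Together these guarantee that $\mathrm{Cl}(T)$ can be matched, and that the matching can be extended when a new pigeon is added.

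The Adversary then keeps $P = \mathrm{Cl}(T)$ and handles each move as follows.
\begin{enumerate}
    \item On a query to a pigeon bit $f_{ik}$, it adds $i$ to $T$, recomputes the closure, and extends the matching by Hall, answering from the result.
    \item On a query to a hole bit $g_{j\ell}$ with $j$ already matched, it answers from $\mu^{-1}(j)$.
    \item On such a query with $j$ unmatched, it must commit $g(j)$ to some pigeon. It either matches $j$ to a fresh neighbouring pigeon, added to the closure, or commits $g(j)$ to a pigeon that it promises never to send to $j$. In the latter case $j$ is marked as used, so it is removed from future neighbourhoods.
    \item On forgetting, it shrinks $T$ and recomputes the closure. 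Monotonicity of closure lets it restrict $\mu$ instead of rebuilding it.
\end{enumerate}

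\textbf{Main obstacle.} I expect the main difficulty to be the closure lemma in the presence of hole-side commitments. Each remembered hole removes a vertex from the right side, so the expansion budget must pay for pigeons and holes together. This is exactly where the factor $(c-1)/2$ comes from. My first attempt would be the standard potential argument: define the closure greedily and bound its size by charging each added set $S$ against the expansion deficit $c|S| - |N(S)\setminus H|$, where $H$ is the set of used holes. I would then check that this charging stays valid after forgetting. If that fails, I would fall back on the simpler adversary of \cite{DBLP:conf/coco/RezendeV25}, since that reference proves this statement.
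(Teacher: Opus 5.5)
The paper gives no proof of this theorem. It imports it from \cite{DBLP:conf/coco/RezendeV25}, which adapts the expansion argument of \cite{BenSasson2001}. Judged on its own terms, your proposal leaves genuine gaps in both directions.

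\textbf{Lower bound.} A Ben-Sasson--Wigderson closure adversary is a reasonable route. However, the closure lemma is the entire proof, you only announce it, and as you formulate it, it fails in two places.

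First, the charging is wrong. You measure deficiency by $|N(S)\setminus H_T|$. But every mentioned pigeon is matched and occupies a hole, so the blocked set must be $B=H_T\cup\mu(P_T)$, which still satisfies $|B|\le|T|$. Expansion may then only be applied to the \emph{unmentioned} part $S$ of the closure: $c|S|-|B|\le\frac{c+1}{2}|S|$ gives $|S|\le 2|B|/(c-1)<r$. Putting the pigeons of $T$ into $\mathrm{Cl}(T)$ breaks your bound. For $c>3$ one can have $|P_T|=|T|>2|T|/(c-1)$, and $|T|$ may be as large as $(c-1)r/2>r$, so $|\mathrm{Cl}(T)|<r$ is simply false. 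This can happen for the random expanders, where $c=\delta/2$.

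Second, your two bullets do not imply that the closure can be matched. A pigeon all of whose neighbours are holes already answered against it is deficient, yet it has no hole available. Plain Hall's condition outside the closure also does not survive the next blocked hole. What maximality actually gives is a surplus: for every nonempty $S'$ disjoint from $S$ with $|S\cup S'|\le r$,
\[
  |N(S')\setminus(B\cup N(S))| \;=\; |N(S\cup S')\setminus B|-|N(S)\setminus B| \;>\; \tfrac{c+1}{2}|S'| .
\]
You must maintain this surplus as an invariant. Then, after one more blocked hole, every newly deficient set still satisfies Hall and can be matched at the moment it appears, before its neighbourhood is exhausted. Your monotonicity-under-forgetting step also needs a canonical choice of the maximal set.

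\textbf{Upper bound.} Your strategy keeps the pigeon bits of up to $n+1$ pigeons, about $(n+1)\log\delta$ variables. No accounting convention fixes this, since $\wRes$ counts literals.

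The missing idea is that a single bit $g_{j\ell}\neq\mathrm{bin}(i)_\ell$ already certifies that hole $j$ cannot host pigeon $i$. You can then build the index of an excluded pigeon bit by bit:
\begin{itemize}
\item Keep a prefix $p$. Each hole holds either nothing or one remembered bit disagreeing with $p$.
\item Maintain the invariant that fewer holes hold nothing than there are pigeons with prefix $p$. This holds initially because $m>n$.
\item In each round, query the next bit of every hole holding nothing. Extend $p$ by a value $b$ that preserves the invariant; one exists by counting. Then forget the answers equal to $b$.
\end{itemize}
This never stores more than one bit per hole. It ends with a pigeon $i$ excluded from every hole, using at most $n$ variables.

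Next, discard the certificates of holes outside $N(i)$. Query $f(i)$ bit by bit, and after each bit discard the certificates of holes no longer in the image under $\mathsf{sur}_i$ of the still-consistent values. At the end an axiom is falsified, and memory stays within $n+1$ whenever $\delta\le n$, as in both instantiations used in the paper.
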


In particular, this implies that if 
$\graph$ is an $(m,n,\delta,r,\delta/2)$-bipartite expander graph from 
\myref{lem:random-expander-constant}
then $\wRes(\rPHP(\graph)) = \Omega(\sqrt{n})$.
And if $\graph=((U\cup V), E)$ is a complete bipartite graph 
with $|{U}| = m$, $|{V}| = n$,
then by \myref{rk:expander} $\wRes(\rPHP(\graph)) \geq  n/4$.

\subsection{Decision Tree Reductions}

Our goal is to show, via a moderately small-depth reduction,
that the width required for refuting 
$\myRef(F)$ is almost as large as the width required for 
refuting~$\PHP(\graph)$.
For the definition of reduction, %
we view a clause as a function, that is, given
a CNF formula $H$ over $m$ variables,
we view a clause $\clause\in H$ as a function from $\{0,1\}^m$ to $\{0,1\}$,
which can be computed by a depth-$|\clause|$ decision tree.
In this way, given a function $\redfun : \{0,1\}^n \rightarrow \{0,1\}^m$,
$\clause\circ \redfun$ is a function from $\{0,1\}^n$ to~$\{0,1\}$.
Moreover, if $\redfun_i$ is computed by a depth-$d$ decision tree,
then $\clause\circ \redfun$ can be naturally written as 
a ($d\cdot |\clause|$)-CNF formula 
(i.e., a CNF formula with width $d \cdot |\clause|$) over $n$ variables.

\begin{definition}[\cite{DBLP:conf/stoc/RezendeGNPR021}]\label{def:reduction}
Given two CNF formulas $F$ and $H$ over $n$ and $m$ variables %
respectively, there is a \emph{depth-$d$ reduction} from $F$ to $H$, denoted $F \leq_d H$, 
if there is a function $\redfun : \{0,1\}^n \rightarrow \{0,1\}^m$ such that
\begin{itemize}
\item each output bit of $\redfun_i : \{0,1\}^n \rightarrow \{0,1\}$ for $i\in [m]$ is computed 
by a depth-$d$ decision tree; and
\item for every axiom $\clause\in H$, each clause in the representation of $\clause\circ \redfun$ 
as a ($d\cdot |\clause|$)-CNF formula (as discussed above) is a weakening of some axiom of~$F$.
\end{itemize}
\end{definition}

The important property of reductions that we use is that width bounds carry over.
\begin{lemma}[\cite{DBLP:conf/stoc/RezendeGNPR021}]\label{lem:reduction}
If $F \leq_d H$, then $\wRes(F) \leq d \cdot \wRes(H)$.
\end{lemma}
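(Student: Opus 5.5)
We prove the lemma by simulating a narrow refutation of $H$ inside the Prover--Adversary game for $F$, using the characterisation of Resolution width recalled in \myref{sec:proof-complexity}. Let $\redfun$ witness $F \leq_d H$, and fix a Prover strategy for $H$ that never stores more than $w := \wRes(H)$ (up to the additive constant in that characterisation) variable values at a time. We build a Prover strategy for $F$ that keeps a memory in which, for every $H$-variable $y_i$ currently stored by the $H$-Prover, the root-to-leaf path of the decision tree computing $\redfun_i$ is recorded. That path contains at most $d$ queried $F$-variables, so the $F$-Prover never stores more than $d\cdot w$ values.

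The simulation proceeds as follows. When the $H$-Prover queries $y_i$, the $F$-Prover walks down the decision tree for $\redfun_i$. At each node it queries the corresponding $F$-variable, unless that variable is already in memory, in which case it reuses the stored value. The leaf reached gives the answer $b$, which is fed back to the $H$-Prover. When the $H$-Prover forgets $y_i$, the $F$-Prover forgets the variables on the path for $y_i$ that are not also on the path of some other still-stored $y_j$. This keeps the $F$-memory a consistent partial assignment $\rho$ such that every stored $y_i$ is determined by $\rho$ through $\redfun_i$. Answers to repeated queries may change after forgetting, but that is exactly the freedom the $H$-Adversary already has, so the simulated run is a legitimate play of the $H$-game.

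At the end of the $H$-game, the stored $y$-values falsify some axiom $\clause$ of $H$, and $\rho$ fixes all decision trees for the variables of $\clause$, so $\clause\circ\redfun$ evaluates to $0$ under $\rho$. In the CNF representation of $\clause\circ\redfun$, the paths in $\rho$ correspond to a clause that is falsified by $\rho$ and whose variables all lie in $\rho$. By the second condition of \myref{def:reduction}, that clause is a weakening of some axiom of $F$, and that axiom is therefore falsified by $\rho$. So the $F$-Prover wins while storing at most $d\cdot w$ variables, which gives $\wRes(F) \leq d\cdot\wRes(H)$ up to the additive $\pm1$ of the game characterisation.

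The main obstacle is that additive constant. To get the bound exactly, I would argue syntactically instead: convert each clause $D$ of a width-$w$ refutation of $H$ into the $(d\cdot w)$-CNF representing $D\circ\redfun$, and derive each clause of that CNF from the previous ones by resolving on decision-tree path variables. The width stays within $d\cdot w$, and the base case uses the weakening condition of the reduction.
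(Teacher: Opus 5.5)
The paper does not prove this lemma; it only cites it. Your game simulation is the standard argument and is sound as a simulation. The problem is the constant: the ``additive $\pm1$'' you set aside is really an additive $d-1$, and your syntactic fallback does not remove it.

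\textbf{Where the slack comes from.} A Prover for $H$ obtained from a width-$w$ refutation needs memory $w+1$: it holds the current clause (up to $w$ variables) and then queries the pivot. While you walk down the tree of the pivot, you store full paths for up to $w$ variables plus a partial path for the pivot. So the $F$-Prover uses up to $d(w+1)$ cells, and converting back gives only $\wRes(F)\le d(\wRes(H)+1)-1$. The $+1$ is multiplied by $d$; it is not a $\pm1$ at the end.

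The syntactic route hits the same wall. To simulate $A\lor y,\ B\lor\neg y\vdash A\lor B$ with $|A\lor B|=w$, you must resolve away the variables of the tree for $y$ while carrying a clause of the CNF for $(A\lor B)\circ r$. The intermediate clauses therefore have up to $dw+d-1$ literals. Nothing in your sketch justifies ``the width stays within $d\cdot w$''.

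\textbf{The exact bound is false.} No argument can give the inequality as literally stated. Take the unsatisfiable 2-CNF $H$ with clauses
$\neg x\lor a$, $\neg a\lor b$, $\neg b\lor\neg x$, $x\lor c$, $\neg c\lor e$, $\neg e\lor x$.
It has $\wRes(H)=2$. Let $F$ be obtained by replacing each variable $v$ by $v_1\oplus v_2$ and expanding each clause into its four width-$4$ clauses. The depth-$2$ XOR decision trees witness $F\le_2 H$.

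Yet $F$ has no width-$4$ refutation. Such a refutation would give a Prover winning with memory $5$, but the following Adversary survives memory $5$:
\begin{itemize}
\item It answers arbitrarily unless the query completes a pair $\{v_1,v_2\}$.
\item When a pair is completed, the at most $3$ other stored variables complete at most one other pair $u$.
\item Every two variables of $H$ occur together in at most one clause, and $H$ has no unit clauses. So the Adversary can choose the value of $v_1\oplus v_2$ so that the completed pairs falsify no clause of $H$.
\item Forgetting preserves this invariant, and a clause of $F$ is falsified only if its two pairs are complete and their XOR values falsify the corresponding clause of $H$. So the memory never falsifies a clause of $F$.
\end{itemize}
Hence $\wRes(F)\ge 5>4=d\cdot\wRes(H)$.

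\textbf{What you actually proved.} Once the constants are tracked, your game argument shows $\wRes(F)\le d(\wRes(H)+1)-1$, which is tight in this example. That is all the paper needs where it applies the lemma (Appendix~B). There the width lower bound for the retraction pigeonhole formula is pushed through a depth-$O(n)$ reduction, and it is only used up to constant factors.
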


For the rest of this section, we let 
$\graph$ be a $(\blockslayer\cdot\pointersblock,\blockslayer,\delta,r,c)$-{bipartite expander graph}, where $\blockslayer \leq \pointersblock$.
Combining \myref{thm:widthlb} and~\myref{lem:reduction} implies that,
in order to obtain the lower bound in \myref{lem:extended-ref}, 
it is enough to show that
\begin{equation} \label{eq:php-shref}
\PHP(\expandergraph)~\leq_{O(n)}~\myRef(F) \,.
\end{equation}

\subsection{Overview of the Reduction}

The proof of the reduction is similar to that of~\cite{DBLP:conf/lagos/Rezende21} which in turn follows the one in~\cite{DBLP:conf/stoc/RezendeGNPR021}.
As in~\cite{DBLP:conf/lagos/Rezende21}, the reduction is guided by a full tree $\calt$ of
height
$\formuladepth$ and arity~$\pointersblock$.
Recall that we consider a fixed ordering %
of the variables of $F$. Each node of~$\calT$ contains a clause.
The root contains the empty clause ($\bot$) and
each node at distance $\alpha$ from the root
contains a clause with $\alpha n/d$
defined recursively as follows.
If a non-leaf node at distance $\alpha$ from the root 
contains a clause~$C$ using the first
$\alpha \noverdepth$ variables, 
each of its $\pointersblock$ children
contains one of the possible
weakenings of $C$ to the next $\noverdepth$
variables, that is, for $\sigma\in [\pointersblock]$,
the $\sigma$-th child
contains the clause
$C \cup \{x^{\sigma_0}_{\alpha\noverdepth }, x^{\sigma_1}_{\alpha\noverdepth + 1}, \ldots,x^{\sigma_{\noverdepth-1}}_{\alpha\noverdepth + \noverdepth - 1}\}$ (recall that $x_i^b = \neg x_i$ if $b=0$ and $x_i$ otherwise).
Thus, $\calT$ has $2^n$ leaves,
each containing a different clause of width $n$.

\subsection{The Reduction Function}

We start by describing the decision trees that compute the
variables of~$\myRef(F)$. 
For simplicity, we sometimes describe 
one decision tree that computes several variables at once.

\paragraph{Pointer variables.} 
The pointer variables of $\eREF_d$
are defined in terms of the $f$-variables 
of $\PHP(\expandergraph)$. 
Since each block has $\pointersblock$ pointers,
each layer $\alpha\in \set{0, \dots, d-1}$ has 
a total of $\blockslayer \cdot\pointersblock$ pointers 
that should
be mapped to the $\blockslayer$ blocks at level $\alpha +1$.
The idea is to define this mapping by ``copying''
the $\PHP(\expandergraph)$ mapping from the $\blockslayer\cdot\pointersblock$
pigeons to the $\blockslayer$ holes. %
Formally, 
let $B$ be an internal block,
let~$z\in \zeroset{\pointersblock}$ and let
$i = \col(B) \cdot\pointersblock + z$.
For all $k\in [\delta]$, we define the pointer variables 
$\point\bin^B_{z,k} = f_{i,k}$
and for all $B'$ %
we define $\point^B_{B',z} = 1$ if and only if $\col(B') = \sur_i(f(i))$.
This can clearly be done by a decision tree of depth~$\log \delta = O(n)$.

\paragraph{Enabled and literal variables.}
We have now reached the most complicated part of the reduction.
Consider the function $g'\colon[\blockslayer]\to[\blockslayer\cdot\pointersblock]\cup \{\star\}$
defined as
\begin{equation}
g'(j) = \left\{ \begin{array}{ll} 
g(j) & \text{ if } \sur_{g(j)}(f(g(j))) = j\,,  \\
\star & \text{ otherwise.}
\end{array}\right.
\end{equation}
In order to determine whether a block $B$
is enabled or not and
what literal set it contains, 
we recursively define a path of blocks and pointers starting from~$B$,
denoted $\pathB{B}$, 
in the following way.
Let $j=\col(B)$.
If $B$ is the root or if $g'(j) = \star$, 
then the path ends at $B$, that is, $\pathB{B} = B$.
Otherwise, let 
\begin{equation}
j' = \left\lfloor{\frac{g'(j)}{\blockslayer}}\right\rfloor 
\hspace{50pt}\text{and}\hspace{50pt}
\sigma = g'(j) \mod \blockslayer
\eqperiod
\end{equation}
Note that $j'$ and $\sigma$ are defined so that 
the $\sigma$th pointer of the block $B'$ with $\col(B') = j'$ and $\layer(B') = \alpha - 1$
is $j$.
We define $\pathB{B}$ to be
$(B, \sigma)$ concatenated with $\pathB{B'}$.

We are now ready to define a decision tree that computes the 
enabled variable and 
the literal variables of the block $B$. In order to do so,
the decision tree must compute $\pathB{B}$, which can be done in
depth~$O(\formuladepth (\log (\blockslayer \cdot \pointersblock) + \log \delta)) = O(n)$.
Indeed, the length of $\pathB{B}$ is at most $\formuladepth $ 
and to determine the next block in a path it is enough to compute $g'$,
which can be done with 
$O(\log (\blockslayer \cdot \pointersblock))=O(\noverdepth)$ queries to the $g$-variables and $O(\log \delta) \le O(n/d)$ queries to the $f$-variables
of $\PHP(\expandergraph)$.

Once $\pathB{B}$ is determined, the decision tree outputs
the value of the variables as follows.
If $\pathB{B}$ does not end at the root,
the enabled variable 
$\enable^{B} = 0$, the literal variables $\lit^{B}_{x_i} = 0$
for $\ell \in \Lit$ (if B has $\lit$ variables), and the summary variables $\summlit_{i,z}^B$
are all set to $1$.
If $\pathB{B}$ ends at the root,
let $\pi$ be
the reverse of $\pathB{B}$. 
Note that $\pi$ defines a unique path
in the tree~$\calT$ starting from the root and following the 
pointers in $\pi$. This path in~$\calT$ ends in some block
$B'$ at level $\alpha$. 
The decision tree sets the enabled variable 
$\enable^{B}$ to $1$ and sets the literal variables $\lit^{B}_\ell$
for $\ell \in \Lit$
to match the assignment of the block $B'$.

\paragraph{Weakening variables.}
The weakening variables are computed ``honestly'' given the assignments
contained at the blocks at the last layer. 
That is, in order to determine
the weakening variables for the block $B$,
the decision tree computes $\pathB{B}$ (in depth $O(n)$),
in this way determining the literal variables of $B$,
and then it outputs the value of each 
weakening variable $\weak$ and $\weak\bin$
so that this block points to an axiom $A$ of $F$
which it is the weakening of. Since there are $n$
literals in $B$ and $F$ is unsatisfiable, such an $A$ must exist.
We note that, if $\pathB{B}$ does not end at the root, then
the literal variables and the enabled variable 
of $B$ are all set to $0$ and the
weakening variables can point to an arbitrary axiom.

\vspace{1em}

Therefore, we conclude 
that the reduction $r$ satisfies the required condition 
that for each~$i$ there is a decision tree of depth $O(n)$ that
computes $r_i$.

\subsection{Correctness of the Reduction} %
We now show that the reduction satisfies the
second condition of \myref{def:reduction}. Recall that $\redfun$ is the function
defined by the decision trees described above.
Let $D$ be an axiom of $\myRef(F)$. It is enough to show that
for every partial assignment $\rho$ to the variables of $\PHP(\expandergraph)$,
if $\rho$ falsifies $D\circ r$ then $\rho$ falsifies an axiom of $\PHP(\expandergraph)$.

We first note that, by definition of $\redfun$, if $D$ is the root axiom \ref{axiom:root-enabled}; a binary pointer axiom \ref{axiom:point-binary-pointer}; a leaf axiom \ref{axiom:must-appear-after-weak}, \ref{axiom:weaken-binary-pointer}, or \ref{axiom:summ-to-lits}; or a backchecking axiom \ref{axiom:pass-on-summ-lits} or \ref{axiom:pointer-to-summ-lits}, then
$D\circ \redfun$ is a tautology, that is, for any full 
assignment $\rho$ to the variables of $\PHP(\expandergraph)$, the formula $D\circ \redfun$ is satisfied.

It remains to consider the case when $D$ is one of the 
 \ref{axiom:must-enable}, \ref{axiom:pass-on-summ-lits}, or \ref{axiom:pointer-to-summ-lits} axioms,
 that is $( \enable^B \land  \point^B_{B',z} ) \to \enable^{B'} $, $(\enable^B \land \point^B_{B',z'} \land \summlit_{i',z}^{B} ) 
            \to 
            \summlit_{i',z}^{B'}  $
            or $( \enable^B \land \point^B_{B',z}) \to \summlit_{i+1,z}^{B'}$, respectively, for some $B,B' \in [s]$ with $i' \le i = \layer(B) = \layer(B') - 1$.
In all three cases, for these clauses to be falsified,  both $\enable^B$ and $\point^B_{B',z}$ must be set to $1$.
Note that $\enable^B$ will only be set to $1$ if
$\pathB{B}$ ends at the root and
$\point^B_{B',z}$ will only be set to $1$ if $\col(B') = \sur_j (f(j))$ for $j = \col(B) \cdot 2^{\noverdepth} + z$. 
 
Consider when $D$ is one of the  \ref{axiom:must-enable} axioms.
Note that $\enable^{B'}$ will only be set to $0$ in the reduction if $\pathB{B'}$ does not end at the root.
This implies that the value returned for $g(\col(B'))$ was something other than $j$ (corresponding to the $z$th pointer of $B$), since otherwise $g'(\col(B'))$ would be equal to $g(\col(B'))$ and $\pathB{B'}$ would be $\pathB{B}$ concatenated with $(B', B)$ and would, therefore, reach the root.
Since $\col(B') = \sur_j (f(j))$, the clause
$\sur_j (f(j)) = \col(B') \Rightarrow g(\col(B')) = j$ from $\rPHP(G)$ is violated.

Now consider the case when $D$ is one of the \ref{axiom:pass-on-summ-lits} or \ref{axiom:pointer-to-summ-lits} axioms.
If the value returned for $g(\col(B'))$ is $j$ (corresponding to the $z$th pointer of $B$), then 
$\pathB{B'}$ would go through $B$ via the $z$th pointer of $B$.
This would imply that, in the case of \ref{axiom:pass-on-summ-lits} both $\summlit$ variables would be the same and thus \ref{axiom:pass-on-summ-lits} would not be falsified; and in the case of \ref{axiom:pointer-to-summ-lits} $\summlit_{i+1,z}^{B'}$ would be $1$, because $\point^B_{B',z}$ is $1$ and $\layer(B) = i$,
and thus \ref{axiom:pointer-to-summ-lits} would not be falsified.
So it must be the case that $g(\col(B'))$ is not $j$, and therefore
we have that
the clause $\sur_{j} (f(j)) = \col(B') \Rightarrow g(\col(B')) = j$ from $\rPHP(G)$ is violated.

\end{document}